\documentclass[12pt]{article}

\usepackage{hyperref}
\usepackage{amssymb,amsmath,amsthm,amsfonts,amscd}
\usepackage{enumerate}
\usepackage[titletoc,title]{appendix}
\usepackage{geometry}
\geometry{verbose,tmargin=1.0in,bmargin=1.0in,lmargin=1.0in,rmargin=1.0in}
\usepackage{setspace}
\setstretch{1.1}
\usepackage[latin2]{inputenc}
\usepackage{t1enc}
\usepackage[mathscr]{eucal}
\usepackage{indentfirst}
\usepackage{graphicx}
\usepackage{graphics}
\usepackage{pict2e}
\usepackage{epic}
\numberwithin{equation}{section}
\usepackage{epstopdf}

\newtheorem{theorem}{Theorem}

\newtheorem*{definition*}{Definition}
\newtheorem*{theorem*}{Theorem}
\newtheorem*{lemma*}{Lemma}
\newtheorem{lemma}[theorem]{Lemma}

\newtheorem{fact}[theorem]{Fact}
\newtheorem*{fact*}{Fact}
\newtheorem{conjecture}{Conjecture}
\newtheorem{corollary}[theorem]{Corollary}
\newtheorem*{remark*}{Remark}
\newtheorem{proposition}[theorem]{Proposition}







\newcommand{\Exp}{{\bf E}}
\newcommand{\Var}{{\bf Var}}
\newcommand{\Cov}{{\bf Cov}}
\newcommand{\Inf}{{\bf I}}
\newcommand{\Ent}{{\bf H}}
\newcommand{\Tr}{{\text{Tr}_n}}
%



\title{On the Fourier Entropy Influence Conjecture \\ for Extremal Classes}
\author{Guy Shalev%
\thanks{Department of Computer Science, Tel-Aviv University. \newline The research leading to these results has received funding from the Len Blavatnik amd the Blavatnik Family foundation.}
\\ guyshalev2@gmail.com}
\date{\today}


\begin{document}

\maketitle



\maketitle

\begin{abstract} The Fourier Entropy-Influence (FEI) Conjecture of Friedgut and Kalai \cite{FK96} states that $\Ent[f] \leq C \cdot \Inf[f]$ holds for every Boolean function $f$, where $\Ent[f]$ denotes the spectral entropy of $f$, $\Inf[f]$ is its total influence, and $C > 0$ is a universal constant. Despite significant interest in the conjecture it has only been shown to hold for some classes of Boolean functions such as symmetric functions and read-once formulas.

In this work, we prove the conjecture for the extremal cases, i.e. functions with small influence and functions with high entropy. Specifically, we show that:
\begin{itemize}
  \item FEI holds for the class of functions with $\Inf[f] \leq 2^{-cn}$ with the constant $C = 4 \cdot \frac{c+1}{c}$. Furthermore, proving FEI for a class of functions with $\Inf[f] \leq 2^{-s(n)}$ for some $s(n) = o(n)$ will imply FEI for the class of all Boolean functions.
  \item FEI holds for the class of functions with $\Ent[f] \geq cn$ with the constant $C = \frac{1 + c}{h^{-1}(c^2)}$. Furthermore, proving FEI for a class of functions with $\Ent[f] \geq s(n)$ for some $s(n) = o(n)$ will imply FEI for the class of all Boolean functions.
\end{itemize}

Additionally, we show that FEI holds for the class of functions with constant $\|\widehat{f}\|_1$, completing the results of \cite{CKLS16} that bounded the entropy of such functions. We also improve the result of \cite{WWW14} for read-k decision trees, from $\Ent[f] \leq O(k) \cdot \Inf[f]$ to $\Ent[f] \leq O(\sqrt{k}) \cdot \Inf[f]$. Finally, we suggest a direction for proving FEI for read-k DNFs, and prove the Fourier Min-Entropy/Influence (FMEI) Conjecture for regular read-k DNFs.
\end{abstract}

\clearpage

\section{Introduction} \label{Introduction}

Boolean functions $ f \colon \{-1,1\}^n \to \{-1,1\}$ are one of the most basic objects in the theory of computer science. The Fourier analysis of Boolean functions has become prominent over the years as a powerful tool in the study of Boolean functions, with applications in many fields such as complexity theory, learning theory, social choice, inapproximability, metric spaces, random graphs, coding theory, etc. For a comprehensive survey, see the book \cite{O'DBook}.

For Boolean-valued functions, by applying Parseval's identity we have $\sum_{S \subseteq [n]} \widehat{f}(S)^2 = 1$ and therefore the squared Fourier coefficients $ \widehat{f}(S)^2 $ can be viewed as a probability distribution $\mathcal{S}_f$, named the \textit{spectral distribution} of $f$.
The \textit{spectral entropy} of $f$ is defined to be the Shannon entropy of $\mathcal{S}_f$, namely 
$\Ent[f] = \sum_{S \subseteq [n]} \widehat{f}(S)^2 \log \frac{1}{\widehat{f}(S)^2}$. This can be intuitively thought as how "spread out" the Fourier coefficients of $f$ are.
The \textit{total influence} of a function $f$, one of the most basic measures of a Boolean function, can be defined as $\Inf[f] = \sum_{S \subseteq [n]} \widehat{f}(S)^2 |S| = \textbf{E}_{S \sim \mathcal{S}_f}[|S|]$, the expected size of a subset $S$ according to the spectral distribution, and can be intuitively thought of measuring the concentration of $f$ on "high" levels.

The Fourier Entropy Influence conjecture, posed by Friedgut and Kalai \cite{FK96} states that for any Boolean function the ratio of its spectral entropy and its total influence is upper-bounded by a universal constant.

\begin{conjecture} \label{FEI} (\cite{FK96}) \sloppy
There exists a universal constant $C > 0$  such that for all $f \colon \{-1,1\}^n \to \{-1,1\}$ with influence $\Inf[f]$ and spectral entropy $\Ent[f]$ we have $\Ent[f] \leq C \cdot \Inf[f]$.
\end{conjecture}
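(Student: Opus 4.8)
The Fourier Entropy-Influence conjecture is a well-known open problem; what follows is the route I would take, which is the one pursued in this paper: isolate two extremal regimes that can be settled unconditionally --- exponentially small influence, and spectral entropy a constant fraction of $n$ --- and then try to bootstrap a sufficiently uniform bound in either regime into the general statement by an amplification argument.

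\emph{Small influence.} Assume $\Inf[f]\le 2^{-cn}$. Replacing $f$ by $-f$ if necessary, write $f=1-2\cdot\mathbf 1_A$ where $A=f^{-1}(-1)$ has density $\mu=|A|/2^n\le\tfrac12$, so $\widehat f(S)=-2\widehat{\mathbf 1_A}(S)$ for $S\ne\emptyset$. Two facts do the work: (i) $2^n\widehat f(S)\in\mathbb Z$, so every nonzero coefficient satisfies $\widehat f(S)^2\ge 2^{2-2n}$, whence $\log\tfrac1{\widehat f(S)^2}\le 2n$; and (ii) the edge-isoperimetric inequality $|\partial_e A|\ge|A|\log_2\tfrac{2^n}{|A|}$, which in Fourier terms says the average level of $\mathbf 1_A$ is at least $\tfrac12\log_2\tfrac1\mu$. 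Since $\Inf[f]\ge\sum_{S\ne\emptyset}\widehat f(S)^2=4\mu(1-\mu)\ge 2\mu$, the hypothesis gives $\mu\le 2^{-cn-1}$, hence $\log_2\tfrac1\mu\ge cn$; plugging into (ii) gives $\Inf[f]\ge\tfrac{cn}{2}p$ where $p=\sum_{S\ne\emptyset}\widehat f(S)^2$, so $p\le\tfrac{2}{cn}\Inf[f]$. Splitting the entropy into the $S=\emptyset$ term (which is $O(p)$) and the rest (which is $\le 2n\,p$ by (i)) then gives $\Ent[f]\le O(\tfrac1c)\cdot\Inf[f]$, and a careful accounting turns this into the stated $\tfrac{4(c+1)}{c}\Inf[f]$.

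\emph{High entropy.} Assume $\Ent[f]\ge cn$ and set $\gamma=h^{-1}(c^2)\in(0,\tfrac12]$ ($h$ the binary entropy). Split the spectral distribution at level $\gamma n$, writing $p_0$ for the mass on $\{|S|\le\gamma n\}$ and $p_1=1-p_0$. Sub-additivity of entropy together with $\binom n{\le\gamma n}\le 2^{nh(\gamma)}$ bounds $\Ent[f]\le 1+nh(\gamma)\,p_0+n\,p_1=1+nc^2+np_1(1-c^2)$; imposing $\Ent[f]\ge cn$ forces $p_1\ge\tfrac{c}{1+c}-o(1)$, so $\Inf[f]\ge\gamma n\,p_1\ge\tfrac{c}{1+c}h^{-1}(c^2)\,n\,(1-o(1))$. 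Comparing the entropy upper bound with this and using $\tfrac{c^2}{p_1}\le c(1+c)$ yields $\Ent[f]\le\tfrac{1+c}{h^{-1}(c^2)}\Inf[f]\,(1+o(1))$.

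\emph{Reductions and the obstacle.} For the high-entropy side the reduction is a direct sum: if $g$ on $m$ variables violates FEI with constant $C$, then $g^{\oplus t}$ on $tm$ variables does too ($\Ent$ and $\Inf$ both scale by $t$), and since $s(n)=o(n)$ we have $s(tm)/t\to 0$, so for large $t$ its entropy exceeds $s(tm)$ --- putting it in the class. For the small-influence side one must instead shrink the influence, which by isoperimetry forces the function close to constant; the natural device is to substitute an $\mathrm{AND}$ of $k$ fresh variables for each input of $g$, which multiplies the influence by $k2^{1-k}\to 0$ and pushes $\Inf[f]$ below any fixed $2^{-s(n)}$. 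The crux --- and where I expect to get stuck --- is that both unconditional bounds blow up as $c\to 0$, so they only cover the genuine extremes, and completing the small-influence route would require showing the $\mathrm{AND}$-composition distorts the spectral sample $\big(|S|,\log\tfrac1{\widehat f(S)^2}\big)$ by only a $1+o(1)$ factor, preserving $\Ent/\Inf$; controlling this distortion uniformly in $g$, given that the inner gate feeds $g$ a highly skewed input distribution, is the real difficulty.
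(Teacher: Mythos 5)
The statement you were asked about is Conjecture~\ref{FEI} itself, which the paper does not prove --- it is the open problem motivating the whole work --- and you correctly treat it as such, reconstructing instead the paper's program of partial results. Your two unconditional regimes are exactly the paper's Theorems~\ref{FEI low influence} and~\ref{FEI high entropy}: the high-entropy sketch (split at level $t=h^{-1}(c^2)n$, Hamming-ball volume bound, lower-bound $\Inf[f]\geq W^{>t}\cdot t$) matches the paper's proof essentially line for line, and the low-influence sketch uses the same edge-isoperimetric input, differing only in that you bound each nonzero coefficient's surprisal by $2n$ via integrality of $2^n\widehat f(S)$ where the paper bounds the normalized entropy of the nonempty-set family by $n$ via counting (a factor-two difference absorbed into the constant). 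Two deviations are worth flagging. First, for the ``high entropy implies FEI'' reduction you propose a contrapositive self-tensoring of a putative counterexample; this is valid (any violator with $\Ent[g]>0$ tensorizes into $\mathbf{ALHE}_s$) and arguably simpler than the paper's forward construction, which multiplies $f$ by $k=s(2n)$ disjoint copies of a two-variable $\max$ gadget to inflate the entropy of an arbitrary function. Second, and more substantively, for the low-influence reduction you substitute an $\mathrm{AND}$ of fresh variables into \emph{each input} of $g$ and then worry --- rightly --- about controlling the resulting spectral distortion uniformly. The paper's gadget (taken from appendix~E of \cite{WWW14}) is different: a single top-level mask $g(\mathbf{x},\mathbf{y})=f(\mathbf{x})$ when $\mathrm{AND}(y_1,\dots,y_k)=-1$ and $1$ otherwise. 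This admits an exact Fourier computation, $\Inf[g]=2^{-k}(k+\Inf[f])$ and $\Ent[g]\geq 2^{-k-3}(2k+2+\Ent[f])$, so the distortion problem you identify as the crux simply does not arise. The price is that the argument only directly yields FEI for $\mathbf{IL1}$ (influence at least $1$), and the paper closes the gap to all of $\mathbf{BF}$ by routing through the high-entropy theorem (Weak FEI shows $\Ent[f]\geq\sqrt n$ forces $\Inf[f]\geq\sqrt n/\log n$), a two-stage structure your sketch omits. None of this, of course, amounts to a proof of the conjecture, and neither you nor the paper claims otherwise.
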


The original motivation for the conjecture in \cite{FK96} emerged from studying threshold
phenomena of monotone graph properties in random graphs. Specifically for a function $f$ that represents a monotone property of a graph with $n$ vertices (e.g. connectivity), FEI implies that $\Inf[f] \geq c \log ^2 n $. The best known bound as of today due to Bourgain and Kalai \cite{BK97} is $\Inf[f] \geq c \log ^{2-\varepsilon} n $ for every $\varepsilon > 0$.

Proving Conjecture \ref{FEI} will have other interesting applications. Probably the most important consequence of the conjecture is its implication of a variant of Mansour's conjecture from 1995 \cite{Man95} stating that if a Boolean function can be represented by a DNF formula with $m$ terms, then most of its Fourier weight is concentrated on a set of coefficients of size at most $\textit{poly}(m)$. Combined with results by Gopalan et al. \cite{GKK08a} this in turn will result in an efficient learning algorithm for such DNFs in the agnostic model, a central open problem in computational learning theory. Furthermore, sufficiently strong versions of Mansour's Conjecture would yield improved pseudorandom generators for DNF formulas. See \cite{Kal07}, \cite{OWZ11} for more details on this implication.

FEI is also closely related to the fundamental KKL theorem \cite{KKL88} stating that for every Boolean function, $\max_{i \in [n]}\Inf_{i}[f] \geq \Var[f] \cdot \Omega( \frac{\log n}{n})$. We define $\Ent_{\infty}[f] = \min_{S} \lbrace \log \frac{1}{\widehat{f}(S)^2} \rbrace $, the \textit{min-entropy} of $f$. It is easy to verify that $\Ent[f] \geq  \Ent_{\infty}[f]$. A natural relaxation of FEI is the following weaker Fourier Min-Entropy Influence conjecture:

\begin{conjecture}(FMEI)
\sloppy
There exists some $C > 0$ such that for any $ f \colon \{-1,1\}^n \to \{-1,1\}$  we have $\Ent_{\infty}[f] \leq C \cdot \Inf[f]$.
\end{conjecture}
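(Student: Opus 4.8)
Since this is a well-known open problem, what follows is a plan of attack, not a complete proof. The first move is the trivial but useful reduction. Because
\[
\Ent[f]\;=\;\sum_{S}\widehat f(S)^2\log\frac1{\widehat f(S)^2}\;\ge\;\min_{S:\,\widehat f(S)\neq 0}\log\frac1{\widehat f(S)^2}\;=\;\Ent_{\infty}[f],
\]
the full FEI conjecture implies FMEI with the same constant, and, more to the point, FEI on any class of functions implies FMEI on that class. So every bound of the form $\Ent[f]\le C\cdot\Inf[f]$ established in this paper immediately yields FMEI on the same class, and the genuinely new content one would chase is FMEI on classes where FEI is \emph{not} known.

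For a direct attack, FMEI asks for a single ``not too small'' Fourier coefficient: some $S$ with $\abs{\widehat f(S)}\ge 2^{-C\,\Inf[f]/2}$. The natural route is Markov's inequality applied to $\Inf[f]=\Ex_{S\sim\mathcal{S}_f}[\abs{S}]$, which places at least half of the spectral weight on sets with $\abs{S}\le 2\,\Inf[f]$; there are only $\binom{n}{\le 2\Inf[f]}$ of these, so averaging produces one with $\widehat f(S)^2\ge\tfrac12\binom{n}{\le 2\Inf[f]}^{-1}$, that is, $\Ent_{\infty}[f]=O\!\big(\Inf[f]\log(n/\Inf[f])\big)$.

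This lossy bound already settles FMEI whenever $n/\Inf[f]$ is bounded, and it disposes of the two extremal classes of this paper essentially for free. When $\Inf[f]$ is tiny, separating the empty set from the rest gives $\widehat f(\emptyset)^2\ge 1-\Inf[f]$ directly, hence $\Ent_{\infty}[f]=O(\Inf[f])$. When $\Ent[f]$ is large, a concavity argument (entropy of the level profile is at most $n\cdot h(\Inf[f]/n)$ up to lower-order terms, so $\Ent[f]\ge cn$ forces $\Inf[f]=\Omega(n)$), combined with the always-true inequality $\Ent_{\infty}[f]\le n$ (the largest coefficient has squared weight at least $2^{-n}$), again yields $\Ent_{\infty}[f]=O(\Inf[f])$. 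The same remark lets one bootstrap FMEI for the class of functions with bounded $\norm{\widehat f}_1$.

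The main obstacle is precisely the $\log(n/\Inf[f])$ factor lost in the counting step, which is worst in the middle regime $\Inf[f]=\Theta(\log n)$, the home of monotone graph properties. The low-degree spectrum of a Boolean function is far more rigid than an arbitrary distribution on $\binom{[n]}{\le 2k}$, and the plan to exploit this would be a restriction/induction argument: keep a constant fraction of the coordinates at random, use that the influence of a random restriction drops by a constant factor in expectation while staying bounded away from $0$, find the largest coefficient of the restricted function by induction on $n$, and lift it back to a coefficient of $f$ that is not much smaller. The crux — and the reason FMEI (like FEI) remains open in general — is to make each halving of $n$ cost only a constant factor rather than a logarithmic one; one should expect this to go through on structured classes (decision trees, read-$k$ formulas, regular read-$k$ DNFs, and the classes treated here), with hypercontractive / KKL-style energy estimates as the natural extra ingredient when the elementary counting bound is too weak.
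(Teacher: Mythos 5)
This statement is the FMEI Conjecture itself, which the paper states as an open problem and does not prove; you correctly recognize this, so there is no proof in the paper to compare against and no ``gap'' to fault you for. Your partial observations are all sound, and they in fact recover the fragments of FMEI that the paper does establish. The reduction $\Ent_{\infty}[f]\le\Ent[f]$ is exactly how the paper transfers its FEI results for the extremal classes to FMEI. Your empty-set argument for small influence ($\widehat f(\emptyset)^2=1-\Var(f)\ge 1-\Inf[f]$, hence $\Ent_{\infty}[f]\le\log\frac{1}{1-\Var(f)}=O(\Inf[f])$) is precisely the paper's Lemma on FMEI for functions with $\Var[f]\le\frac12$, proved there via $\log x\le x-1$. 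Your max-entropy bound $\Ent[f]\le n\cdot h(\Inf[f]/n)$ combined with $\Ent_{\infty}[f]\le n$ handles the high-entropy class, and is if anything more direct than routing through the paper's Theorem for $\mathbf{LHE}_c$. The Markov-plus-counting bound $\Ent_{\infty}[f]=O(\Inf[f]\log(n/\Inf[f]))$ is the correct folklore baseline, and you rightly identify the $\log(n/\Inf[f])$ loss in the regime $\Inf[f]=\Theta(\log n)$ as the real obstruction. Where the paper goes beyond your sketch is in its one genuinely new FMEI result: for regular read-$k$ DNFs it upper-bounds $\Ent_{\infty}[f]$ by exhibiting a single large coefficient supported inside one clause (via a spectral concentration lemma of Blum et al.) and lower-bounds $\Inf[f]$ by the edge-isoperimetric form of KKL applied to the maximum individual influence --- which is exactly the ``KKL-style energy estimate on structured classes'' ingredient your plan anticipates. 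Two caveats worth noting if you pursue the restriction/induction route: the paper observes that its low-influence hardness reduction does \emph{not} transfer to FMEI (the min-entropy collapses onto $\widehat f(\emptyset)$ under the biasing gadget), whereas the high-entropy hardness reduction does, so FMEI for $\mathbf{ALHE}_s$ with $s(n)=o(n)$ is already as hard as FMEI in general.
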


KKL can be directly derived from FMEI (and therefore is clearly implied by FEI). In the other direction, one can easily prove FMEI for monotone functions using KKL (see \cite{OWZ11}). We note that FEI for monotone functions is still an open problem.

\subsection{Prior Work}
Despite many years of attention, Conjecture \ref{FEI} remains open, but some significant steps towards proving it have been made. For example, a weaker folklore version of FEI, where instead of a universal constant $C$ we settle for a $\log n$ factor, is known to be true even for the more general case of real-valued Boolean functions.

\begin{lemma}[Weak FEI] \label{Weak FEI}
Let $ f \colon \{-1,1\}^n \to \mathbb{R}$ be some function with
 $\|\widehat{f}\|_2 = 1$. Then $\Ent[f] \leq \log (n + 1) \cdot (\Inf[f] + 1)$.
\end{lemma}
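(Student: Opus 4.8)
The plan is to split the spectral entropy according to Fourier levels and apply the grouping (chain rule) property of Shannon entropy. For $k \in \{0,1,\dots,n\}$ write $W_k = \sum_{|S|=k}\widehat{f}(S)^2$ for the Fourier weight of $f$ on level $k$. Since $\|\widehat{f}\|_2 = 1$, Parseval gives $\sum_{k=0}^n W_k = 1$, so $(W_0,\dots,W_n)$ is a probability distribution on the $n+1$ levels; and for each $k$ with $W_k>0$ the normalized values $\widehat{f}(S)^2/W_k$ over $|S|=k$ form a probability distribution on the $\binom{n}{k}$ sets of that size.

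First I would record the grouping identity. Writing $\widehat{f}(S)^2 = W_{|S|}\cdot\big(\widehat{f}(S)^2/W_{|S|}\big)$ and expanding $\log\frac{1}{\widehat{f}(S)^2} = \log\frac{1}{W_{|S|}} + \log\frac{W_{|S|}}{\widehat{f}(S)^2}$, one obtains
\[
\Ent[f] \;=\; \sum_{k=0}^n W_k \log\frac{1}{W_k} \;+\; \sum_{k=0}^n W_k\, H_k, \qquad H_k \;:=\; \sum_{|S|=k}\frac{\widehat{f}(S)^2}{W_k}\log\frac{W_k}{\widehat{f}(S)^2},
\]
where the identity is valid with the convention $0\log 0 = 0$ even when some $W_k$ or $\widehat{f}(S)$ vanishes.

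Next I would bound the two terms separately. The first term is the Shannon entropy of a distribution supported on $n+1$ points, hence at most $\log(n+1)$. For the second, $H_k$ is the entropy of a distribution supported on the $\binom{n}{k}$ subsets of size $k$, so $H_k \le \log\binom{n}{k} \le \log(n^k) = k\log n \le k\log(n+1)$, the case $k=0$ being trivial since $H_0 = 0$. Therefore $\sum_k W_k H_k \le \log(n+1)\sum_k k W_k = \log(n+1)\cdot\Inf[f]$, using $\Inf[f] = \sum_S |S|\,\widehat{f}(S)^2 = \sum_k k W_k$. Adding the two bounds gives $\Ent[f] \le \log(n+1)\big(\Inf[f]+1\big)$.

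There is no real obstacle here; the only points needing care are the verification of the grouping identity (a routine manipulation of $\log(ab)=\log a+\log b$) and the elementary estimate $\binom{n}{k}\le n^k$. I would also note explicitly that the argument never uses that $f$ is Boolean-valued — only $\|\widehat{f}\|_2 = 1$, so that $\mathcal{S}_f$ is a probability distribution — which is precisely why the lemma holds in the stated generality for real-valued $f$.
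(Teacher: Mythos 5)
Your proof is correct. The paper does not actually supply a proof of this lemma --- it only cites \cite{KMS11}, \cite{OWZ11} and \cite{WWW14} --- but your argument (decompose $\Ent[f]$ by Fourier level via the chain rule for entropy, bound the level distribution's entropy by $\log(n+1)$ and the within-level entropies by $\log\binom{n}{k}\le k\log(n+1)$) is precisely the standard folklore proof those references give, and every step, including the $0\log 0=0$ convention and the observation that only $\|\widehat{f}\|_2=1$ is used, checks out.
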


This can be proved in several different ways, as done in \cite{KMS11},\cite{OWZ11} and \cite{WWW14}. It should be noted that $O(\log n)$ is indeed tight for non-Boolean functions, so proofs of FEI will have to make use of the fact that $f$ is Boolean-valued. The tightness can be seen by the following example:
\[ f(x) = \frac{x_1 + x_2 + ... + x_n}{\sqrt{n}} \]
and it is easy to verify that $\|\widehat{f}\|_2 = 1$, and also that $\Inf[f] = 1$ and $\Ent[f] = \log n$. For Boolean-valued functions this $\log n$ bound has been recently improved by Gopalan \textit{et al.} in \cite{GSTW16} to $\log s[f]$, where $s[f]$ is \textit{max sensitivity} of the function: the sensitivity of $x \in \{-1,1\}^n$ in a function $f$, denoted $s(f,x)$, is the number of indices $i \in [n]$ for which $f(x) \neq f(x^{ \oplus i})$, and the max sensitivity of $f$ is defined as $s[f] := \max_{ x \in \{-1,1\}^n } s(f,x)$. Clearly, for all functions $s[f] \leq n$.

Furthermore, FEI has been verified for several families of Boolean functions. O'Donnel, Wright and Zhou \cite{OWZ11} proved it for symmetric functions by using the fact that derivatives of symmetric functions are very noise sensitive. They also prove FEI for the class of read-once decision trees.

In another paper Das, Pal and Visavaliya \cite{DPV11} show that FEI holds with universal constant $2+\varepsilon$ for a random function, as $\Inf[f]$ is strongly concentrated around its mean $\frac{n}{2}$, and the spectral entropy of a function is always bounded by $n$. We give another proof of this fact (with a worse constant), by proving FEI for functions with entropy linear in $n$, as is the case for random functions.

In \cite{KMS11}, Keller, Mossel and Schlank generalize FEI to the biased setting. Furthermore, for functions with almost all of their Fourier weight on the lowest $k$ levels, they upper-bound the spectral entropy by $O(k)$.

In the paper \cite{OT13}, O'Donnell and Tan study FEI under composition: given functions $F: \{-1, 1\}^{k} \to \{-1, 1\}$ and
$g_1, ... , g_k: \{-1, 1\}^l \to \{-1, 1\}$, they ask what properties do $F$ and $g_i$ must satisfy for the FEI conjecture to hold for the disjoint composition $f(x^1, . . . , x^k) = F(g_1(x^1), . . . , g_k(x^k))$? To make progress they present a strengthening of FEI which they call FEI$^{+}$ - a generalization of FEI to product distributions. They prove that FEI$^{+}$ composes, in the sense that if $F$ and $g_i$ respect FEI$^{+}$ with factor $C$, then so does their composition. They also prove FEI$^{+}$ with factor $C = O(2^k)$, where $k$ is the arity of $f$ (instead of $C$ being a constant). Together with their main result, this is enough to prove FEI for read-once formulas.

In \cite{CKLS16}, Chakraborty \textit{et al.} prove a relaxation of FEI, bounding the spectral entropy with higher moments of $|S|$, where the original conjecture needs this bound to include only the first moment of $|S|$, namely $\Inf[f]$. They also prove FEI for read-once formulas with a more elementary method than the one of O'Donnell and Tan.

Independently, \cite{CKLS16} also give upper bounds on the entropy of a Boolean function in terms of several complexity measures - to name a few, they show that $\Ent[f] \leq O(\log \|\widehat{f}\|_1)$, and also that $\Ent[f] \leq O(\bar{d})$ where $\bar{d}$ is the average depth of a decision tree computing $f$. This implies FEI for the class $\left\lbrace f \colon \{-1,1\}^n \to \{-1,1\} : \|\widehat{f}\|_1 \leq L, \Inf[f] \geq c \right\rbrace$ where $c >0$ and $L>0$ are some constants.

This raises the natural question, whether the $\Inf[f] \geq c$ requirement is actually necessary or merely an artifact of the proof. For the $\|\widehat{f}\|_1$ complexity measure and other measures strongly related to it, we manage to overcome this condition by making subtle changes to the proof technique of \cite{CKLS16}, generalizing the bound and thus proving FEI for the class of functions with constant $\|\widehat{f}\|_1$. Another measure they use to bound the entropy, is the average depth of a decision tree computing $f$ (they show $\Ent[f] \leq O(\bar{d})$). For this measure, the $\Inf[f] \geq c$ requirement seems critical, as will be explained in the next paragraph.

In \cite{WWW14}, Wan, Wright and Wu present a new perspective of FEI as a communication (or rather, compression) game: one player randomly samples a set $S$ according to the distribution $\mathcal{S}_f$, and wishes to send it to another player using a short representation. The price of the protocol is the expected number of bits in the representation of $S \sim \mathcal{S}_f$. For a function $f$, we know from Shannon that the price of the protocol is lower bounded by the spectral entropy, so we are merely left with the challenge of finding a protocol for $f$ with expected price less than $O(\Inf[f])$. They formalize this into the following lemma:

\begin{lemma*}
Let $\mathcal{X} \sim \widehat{f}^2$, and let $P : 2^{[n]} \to \Sigma^*$ be a prefix-free protocol on alphabet $\Sigma$, except it outputs an empty string on the input $\emptyset$.
Then $\Ent[f] \leq \log \Sigma \cdot \Exp[P(\mathcal{X})] + 2 \cdot \Inf[f]$.

\end{lemma*}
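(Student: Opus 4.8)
The plan is to peel the atom at $\emptyset$ off the spectral distribution, run Shannon's source-coding argument on what remains (which \emph{is} a genuine prefix-free code), and charge the leftover $\emptyset$-term to $\Inf[f]$ via the edge-isoperimetric inequality. Throughout, $\log$ is base $2$ and $h$ denotes the binary entropy function.

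Write $\mathcal{X}\sim\widehat{f}^2$ and set $p_0:=\widehat{f}(\emptyset)^2=\Pr[\mathcal{X}=\emptyset]$. Conditioning the Shannon entropy of $\mathcal{X}$ on the event $\{\mathcal{X}=\emptyset\}$, and noting that $\mathcal{X}$ is then deterministic, the chain rule gives
\[
\Ent[f]\;=\;h(p_0)\;+\;(1-p_0)\cdot H(\mathcal{X}\mid\mathcal{X}\neq\emptyset).
\]
Now restrict $P$ to $2^{[n]}\setminus\{\emptyset\}$: since the only input on which $P$ produces the empty string is $\emptyset$, this restriction is a genuine prefix-free code over $\Sigma$, so Kraft's inequality applies and the usual source-coding argument (Kraft's inequality together with Gibbs' inequality) yields $H(\mathcal{X}\mid\mathcal{X}\neq\emptyset)\le\log|\Sigma|\cdot\Exp[\,|P(\mathcal{X})|\mid\mathcal{X}\neq\emptyset\,]$. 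Because $|P(\emptyset)|=0$, multiplying this through by $1-p_0$ turns the conditional expectation into the unconditional one, and we obtain
\[
\Ent[f]\;\le\;h(p_0)\;+\;\log|\Sigma|\cdot\Exp[\,|P(\mathcal{X})|\,].
\]

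It remains to prove $h(p_0)\le 2\Inf[f]$. Replacing $f$ by $-f$ affects neither $p_0$ nor $\Inf[f]$, so we may assume $\alpha:=\Pr[f=-1]\le\tfrac12$; then $\widehat{f}(\emptyset)=\Exp[f]=1-2\alpha$, hence $1-p_0=4\alpha(1-\alpha)$ and $h(p_0)=h(4\alpha(1-\alpha))$. Applying the edge-isoperimetric inequality for the cube to $A=f^{-1}(-1)$---using $\Inf[f]=|\partial A|/2^{n-1}$ together with $|\partial A|\ge|A|\log(2^n/|A|)$---gives $\Inf[f]\ge 2\alpha\log(1/\alpha)$, so it suffices to check the one-variable inequality $h(4\alpha(1-\alpha))\le 4\alpha\log(1/\alpha)$ on $(0,\tfrac12]$. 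For $\alpha\le\tfrac1{4e}$ this follows from $2\alpha\le 4\alpha(1-\alpha)\le 4\alpha\le\tfrac1e$, the monotonicity of $x\log(1/x)$ on $(0,\tfrac1e]$, and the estimate $h(x)\le x\log(1/x)+x\log e$; for $\alpha>\tfrac1{4e}$ it suffices to note that $h(p_0)\le 1$ while $4\alpha\log(1/\alpha)>1$ throughout that range.

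The one delicate point is this last binary-entropy inequality: the constant $2$ is essentially best possible---taking $f$ to be $-1$ on a subcube makes $h(p_0)$ and $2\Inf[f]$ asymptotically equal---so it must be established through careful estimates on $h$ rather than a lossy bound; everything else is routine manipulation of the chain rule and Kraft's inequality.
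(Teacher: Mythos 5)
Your proof is correct, but note that the paper never proves this lemma: it is stated in the Prior Work section as a quoted result of \cite{WWW14}, so there is no in-paper argument to compare against. Your route is the natural one and, as far as I can tell, essentially the one in \cite{WWW14}: split off the atom at $\emptyset$ by the chain rule, $\Ent[f]=h(p_0)+(1-p_0)H(\mathcal{X}\mid\mathcal{X}\neq\emptyset)$; apply Kraft--Gibbs to the restriction of $P$ to nonempty sets (which is a genuine prefix-free code, and the factor $1-p_0$ converts the conditional expectation to the unconditional one since $|P(\emptyset)|=0$); and absorb $h(p_0)=h(\Var(f))$ into $2\Inf[f]$ via edge isoperimetry. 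One point worth flagging: the paper's own tools (Fact~\ref{Edge Isoperimetric Inequality 1} and Lemma~\ref{Edge Isoperimetric Inequality 2}) would give only $h(\Var(f))\leq 2\Var(f)\log\frac{1}{\Var(f)}\leq 4\Inf[f]$, i.e.\ the constant $4$ rather than the stated $2$; your finer two-case analysis of $h(4\alpha(1-\alpha))\leq 4\alpha\log(1/\alpha)$ is what recovers the constant $2$, and I checked that both cases ($\alpha\leq\frac{1}{4e}$ via $h(x)\leq x\log\frac{1}{x}+x\log e$ and monotonicity of $x\log\frac{1}{x}$ on $(0,\frac{1}{e}]$, and $\alpha>\frac{1}{4e}$ via $h\leq 1<4\alpha\log(1/\alpha)$) go through, as does your remark that subcube indicators make the constant $2$ asymptotically tight for this decomposition.
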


They use this technique combined with observations regarding the covariance of decision trees to prove a theorem (that is also known due to \cite{CKLS16}) - that FEI holds for the class of functions $f$ computed by decision trees with constant average depth and $\Inf[f] \geq 1$. \cite{WWW14} also provide a reduction, showing that removing the requirement $\Inf[f] \geq 1$ from the latter theorem, would in fact result in proving FEI for all Boolean functions with $ \Inf[f] \geq \log n$. This gives more motivation to examine FEI for functions with low influence.

Using their protocol technique, \cite{WWW14} also achieve $\Ent[f] \leq O(k) \cdot \Inf[f]$ for read-k decision trees, thus proving FEI for read-k decision trees where $k$ is constant. They explicitly conjecture that the correct coefficient is actually $O(\log k)$ and provide a matching example. We improve their bound to $\Ent[f] \leq O(\sqrt{k}) \cdot \Inf[f]$, but share their belief that the correct bound could be $O(\log k) \cdot \Inf[f]$.

In \cite{Hod17}, Hod improves the lower bound on the conjectured universal constant for FEI to $C > 6.45$ via lexicographic functions, using composition techniques and biased Fourier analysis.

\subsection{Our Results}
Intrigued by the implicit and explicit difficulties of FEI for low influence functions, we prove FEI for functions with extremely low influence:

\begin{theorem}
\label{FEI low influence}
Let $c > 0$ be some constant. Let $ f \colon \{-1,1\}^n \to \{-1,1\}$ with $\Inf[f] \leq 2^{-cn}$ . Then $\Ent[f] \leq 4 \cdot \frac{c+1}{c} \cdot {\Inf[f]} $.
\end{theorem}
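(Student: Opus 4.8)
The plan is to use the fact that a Boolean function with $\Inf[f]\le 2^{-cn}$ must be extremely close to a constant function, and then to bound its spectral entropy by means of the edge-isoperimetric inequality of the hypercube; the Poincar\'e-type bound $\Var[f]\le\Inf[f]$ alone is too weak here by a factor of $n$, and upgrading it via isoperimetry is exactly what makes the constant independent of $n$.

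First I would reduce to the near-constant regime. By Parseval, $\sum_S\widehat f(S)^2=1$, so with $W:=\sum_{S\ne\emptyset}\widehat f(S)^2=\Var[f]$ we get $W\le\sum_S|S|\,\widehat f(S)^2=\Inf[f]\le 2^{-cn}$, hence $\widehat f(\emptyset)^2=1-W\ge 1-2^{-cn}$ and $|\Exp[f]|\ge\sqrt{1-2^{-cn}}\ge 1-2^{-cn}$. Replacing $f$ by $-f$ if necessary (this changes neither $\Ent$ nor $\Inf$), assume $\Exp[f]\ge 0$; then the minority fraction $\eps:=\Pr[f=-1]=\tfrac{1-\Exp[f]}{2}$ satisfies $\eps\le 2^{-cn-1}$, so that $\log\tfrac1\eps\ge cn$, and moreover $W=1-\Exp[f]^2=4\eps(1-\eps)\le 4\eps$. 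If $\eps=0$ then $f$ is constant and there is nothing to prove, so assume $0<\eps\le\tfrac12$.

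The proof then rests on two estimates. On the influence side, the edge-isoperimetric inequality on $\{-1,1\}^n$ gives that the set $f^{-1}(-1)$, of size $\eps 2^n$, has edge-boundary at least $\eps 2^n\log\tfrac1\eps$; since $\Inf[f]$ equals $2^{1-n}$ times the number of edges $\{x,y\}$ of the cube with $f(x)\ne f(y)$, this yields $\Inf[f]\ge 2\eps\log\tfrac1\eps\ge 2cn\eps$. On the entropy side, I split $\Ent[f]=\widehat f(\emptyset)^2\log\tfrac1{\widehat f(\emptyset)^2}+\sum_{S\ne\emptyset}\widehat f(S)^2\log\tfrac1{\widehat f(S)^2}$. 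The empty-set term is handled by the elementary inequality $(1-W)\log\tfrac1{1-W}\le W/\ln 2$, giving a bound of $W/\ln 2\le\Inf[f]/\ln 2<4\,\Inf[f]$. For the remaining sum I use that every nonzero Fourier coefficient of a Boolean function on $n$ bits has $|\widehat f(S)|\ge 2^{1-n}$, hence $\log\tfrac1{\widehat f(S)^2}\le 2n-2$; together with $W\le 4\eps$ this gives $\sum_{S\ne\emptyset}\widehat f(S)^2\log\tfrac1{\widehat f(S)^2}\le(2n-2)W\le 8n\eps=\tfrac4c\cdot 2cn\eps\le\tfrac4c\,\Inf[f]$. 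Adding the two bounds yields $\Ent[f]<\bigl(4+\tfrac4c\bigr)\Inf[f]=4\cdot\tfrac{c+1}{c}\,\Inf[f]$, as required.

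The argument is short once this decomposition is set up, and I expect the only real point of care to be the role of the edge-isoperimetric inequality: controlling the $S\ne\emptyset$ part of the entropy by $\Inf[f]\ge W$ alone would only give $O(n)\cdot\Inf[f]$, and it is precisely the extra factor $\log\tfrac1\eps\ge cn$ supplied by isoperimetry --- available only because the hypothesis forces $\eps\le 2^{-cn-1}$, i.e.\ because the influence bound is $2^{-\Omega(n)}$ rather than $2^{-o(n)}$ --- that turns this into $O(1)\cdot\Inf[f]$. Everything else (the near-constant structure, the integrality of Boolean Fourier coefficients, and the estimate $(1-W)\log\tfrac1{1-W}\le W/\ln 2$) is routine.
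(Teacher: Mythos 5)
Your proof is correct and follows essentially the same route as the paper's: isolate the empty-set coefficient, bound the remaining entropy by $O(n)\cdot\Var(f)$, and use the edge-isoperimetric inequality $\Inf[f]\ge 2\eps\log\frac{1}{\eps}\ge 2cn\eps$ to absorb the factor of $n$. The only minor differences are that you obtain the $O(n)$ per-coefficient bound from the $2^{1-n}$-granularity of Boolean Fourier coefficients where the paper uses $\Ent[\mathscr{F}^C]\le\log\abs{\mathscr{F}^C}<n$, and you apply isoperimetry directly to the minority probability rather than through the paper's derived variance--influence lemmas, which also lets you avoid the paper's case split on $\Inf[f]\ge\frac{1}{2}$.
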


This result may seem at first somewhat disappointing, as interesting functions usually don't have such small total influence. Can we do better than this bound? Apparently not, at least without proving the full conjecture. Using a construction presented in \cite{WWW14} we show that any improvement of the last theorem will result in proving FEI:

\begin{theorem}
\label{FEI low influence is tight}
Let $s: \mathbb{N} \to \mathbb{R}$ such that $s(n) = o(n)$. Suppose that FEI holds for all $ f \colon \{-1,1\}^n \to \{-1,1\}$ with $\Inf[f] \leq 2^{-s(n)}$. Then FEI holds for all Boolean functions.
\end{theorem}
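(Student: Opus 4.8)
The plan is to prove this as a ``self-improvement'' reduction: starting from an \emph{arbitrary} Boolean function $g\colon\{-1,1\}^m\to\{-1,1\}$, I will build, for suitable parameters, a Boolean function $f$ on $n\gg m$ variables such that (i) $\Inf[f]\le 2^{-s(n)}$, so the hypothesis applies and hands us $\Ent[f]\le C\cdot\Inf[f]$; and (ii) $f$ is assembled from $g$ transparently enough that this inequality can be ``decoded'' into $\Ent[g]\le C\cdot\Inf[g]$, which is FEI for $g$ with the same constant $C$.

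Two ingredients go into the construction. The first is \emph{XOR tensorization}: put $h=g^{\oplus t}(x^1,\dots,x^t):=g(x^1)\cdots g(x^t)$ on $tm$ variables. Since $\widehat h(S_1\sqcup\cdots\sqcup S_t)=\prod_i\widehat g(S_i)$, the spectral distribution of $h$ is the $t$-fold product of $\mathcal S_g$, so $\Ent[h]=t\cdot\Ent[g]$ and $\Inf[h]=t\cdot\Inf[g]$; the point of this step is that the ratio is scale-invariant while both quantities grow linearly in $t$, which is what will let us absorb lower-order error terms. The second ingredient is the ``rare subcube'' padding of \cite{WWW14}: adjoin $k$ fresh variables $z_1,\dots,z_k$ and set $f\colon\{-1,1\}^{n}\to\{-1,1\}$, $n=tm+k$, by $f(x,z)=h(x)$ when $z_1=\cdots=z_k=1$ and $f(x,z)=1$ otherwise. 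Writing $f=1-P(z)+P(z)\,h(x)$ with $P(z)=\prod_{j=1}^{k}\tfrac{1+z_j}{2}=2^{-k}\sum_{T\subseteq[k]}z_T$, a direct expansion yields the Fourier coefficients of $f$: $\widehat f(\emptyset)=1-2^{-k}\bigl(1-\widehat h(\emptyset)\bigr)$; $\widehat f(T)=2^{-k}\bigl(\widehat h(\emptyset)-1\bigr)$ for $\emptyset\neq T\subseteq[k]$; and $\widehat f(S\cup T)=2^{-k}\widehat h(S)$ for $\emptyset\neq S\subseteq[tm]$ and any $T\subseteq[k]$.

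From here the estimates are routine. Summing over the $2^k$ choices of $T$ and using $\sum_{T\subseteq[k]}|T|=k2^{k-1}$ gives $\Inf[f]\le 2^{-k}\bigl(\Inf[h]+3k\bigr)$. For the entropy, keep only the terms with $S\neq\emptyset$ (the others are nonnegative); using $\log(1/\widehat f(S\cup T)^2)=2k+\log(1/\widehat h(S)^2)$, summing over $T$, and bounding $\widehat h(\emptyset)^2\log(1/\widehat h(\emptyset)^2)\le 1$ gives $\Ent[f]\ge 2^{-k}\bigl(\Ent[h]-1\bigr)=2^{-k}\bigl(t\,\Ent[g]-1\bigr)$. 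Now fix $\delta\in(0,1)$. Since $s(n)=o(n)$, there is a threshold past which $s(n)\le\delta n$; choosing $k=\lfloor\alpha t\rfloor$ with $\alpha:=\tfrac{2\delta m}{1-\delta}$ and using $\Inf[h]\le tm$ (as $\Inf[g]\le m$), one checks that for all sufficiently large $t$ one has $\Inf[f]\le 2^{-k}(tm+3k)\le 2^{-s(n)}$, so the hypothesis gives $\Ent[f]\le C\cdot\Inf[f]$. Multiplying through by $2^k$ turns this into $t\,\Ent[g]-1\le C\bigl(t\,\Inf[g]+3k\bigr)$; dividing by $t$ and letting $t\to\infty$ (so $k/t\to\alpha$) yields $\Ent[g]\le C\,\Inf[g]+3C\alpha=C\,\Inf[g]+\tfrac{6C\delta m}{1-\delta}$. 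As $\delta\in(0,1)$ was arbitrary, letting $\delta\to 0$ gives $\Ent[g]\le C\cdot\Inf[g]$, as desired.

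The Fourier computation for $f$ and the two estimates are mechanical; the genuinely delicate point is the quantifier bookkeeping around the padding parameter $k$. It must be chosen large enough to push $\Inf[f]$ below the moving target $2^{-s(n)}$ (which forces $n$, hence $k$, past a $\delta$-dependent threshold), yet small \emph{relative to} $t$, so that the additive $O(k)$ slack that survives the decoding step is annihilated in the limit. This is precisely where XOR tensorization is indispensable: it makes $\Ent[h],\Inf[h]$ scale like $t$ while $k$ scales only like $\alpha t$ with $\alpha\to 0$, so the ratio $k/t$ can be driven to $0$ independently of how far out the threshold for $s(n)\le\delta n$ sits. For the same reason the order of limits matters — $t\to\infty$ must be taken for each fixed $\delta$ before $\delta\to 0$.
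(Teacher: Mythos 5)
Your proposal is correct, and while it is built around the same core gadget as the paper --- the WWW14 ``AND-padding'' that forces the influence down to $2^{-k}(\Inf+O(k))$ while only scaling the entropy by $2^{-k}$ --- it assembles the reduction quite differently. The paper first proves the implication only for the class $\mathbf{IL1}$ of balanced functions with $\Inf[f]\ge 1$ (Lemma \ref{AELI implies IL1}), handling biased functions by multiplying with a fresh variable (which costs a factor $2$ and is where $\Inf[f]\ge 1$ is used), and then upgrades from $\mathbf{IL1}$ to all of $\mathbf{BF}$ by invoking the separate high-entropy hardness result, Theorem \ref{FEI linear entropy is tight}; the final constant degrades to $16C$. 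You instead XOR-tensorize the arbitrary source function \emph{before} padding, so that $\Ent[h]$ and $\Inf[h]$ grow linearly in $t$ while the additive losses from the padding ($-1$ in entropy, $+3k$ in influence, with $k=\lfloor\alpha t\rfloor$ and $\alpha\to 0$) are killed in the limit $t\to\infty$ followed by $\delta\to 0$. This makes the proof self-contained (no detour through $\mathbf{IL1}$, no balancedness assumption, no appeal to the high-entropy reduction) and preserves the constant $C$ exactly. Your Fourier computation for $f=1+P(z)(h(x)-1)$ and the two estimates $\Inf[f]\le 2^{-k}(\Inf[h]+3k)$ and $\Ent[f]\ge 2^{-k}(\Ent[h]-1)$ check out, as does the quantifier bookkeeping: with $\alpha=\tfrac{2\delta m}{1-\delta}$ one has $\alpha-\delta(m+\alpha)=\delta m>0$, so the margin $k-\log_2(tm+3k)-s(n)$ grows linearly in $t$ and the hypothesis indeed applies for all large $t$. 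The one implicit assumption worth stating is that ``FEI holds for the class'' means a single constant $C$ uniform over the class, which is how the paper reads the hypothesis as well.
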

For example, proving FEI for the class of functions with $\Inf[f] \leq 2^{-\frac{n}{\log n}}$ will be enough to confirm Conjecture \ref{FEI}.

This result for functions with extremely low influence raises the question of the opposite extremal case - where the entropy is high, say, $cn$ for some $c \in (0,1)$. We provide analogous results for this extremal case.

\begin{theorem}
\label{FEI high entropy}
Let $c > 0$ be some constant. For any $ f \colon \{-1,1\}^n \to \{-1,1\}$ with $\Ent[f] \geq cn$ we have $\Ent[f] \leq \frac{1 + c}{h^{-1}(c^2)} \cdot \Inf[f]$, where $h^{-1}$ is the inverse of the binary entropy function.
\end{theorem}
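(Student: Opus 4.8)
The plan is to avoid the $\log n$ loss of Weak FEI (Lemma~\ref{Weak FEI}), which is hopeless in the high-entropy regime, and instead exploit a sharp exponential reweighting of the spectral distribution. The key tool is the ``tilted Shannon bound'': for every $\lambda>0$,
\[
\Ent[f] \;\le\; \lambda\cdot\Inf[f] \;+\; n\log\!\bigl(1+2^{-\lambda}\bigr).
\]
To prove it, write $\log\frac{1}{\widehat{f}(S)^2}=\lambda|S|+\log\frac{2^{-\lambda|S|}}{\widehat{f}(S)^2}$ and take the expectation over $S\sim\mathcal S_f$: the first term contributes $\lambda\,\Inf[f]$, and by Jensen's inequality (concavity of $\log$) the second is at most $\log\sum_{S}\widehat{f}(S)^2\cdot\frac{2^{-\lambda|S|}}{\widehat{f}(S)^2}=\log\sum_{S\subseteq[n]}2^{-\lambda|S|}=n\log(1+2^{-\lambda})$ by the binomial theorem.

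Next I would recast this bound through the binary entropy function by the substitution $\alpha:=\frac{2^{-\lambda}}{1+2^{-\lambda}}\in(0,\tfrac12)$, i.e.\ $\lambda=\log\frac{1-\alpha}{\alpha}$. A one-line computation gives the identity $\log(1+2^{-\lambda})=h(\alpha)-\alpha\lambda$, so the tilted bound becomes, for every $\alpha\in(0,\tfrac12)$ with $\lambda=\log\frac{1-\alpha}{\alpha}>0$,
\[
\Ent[f]-n\,h(\alpha)\;\le\;\lambda\cdot\bigl(\Inf[f]-\alpha n\bigr) .
\]
Since $\lambda>0$, this is exactly the dichotomy: \emph{either} $\Ent[f]\le n\,h(\alpha)$, \emph{or} $\Inf[f]>\alpha n$.

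To conclude, set $\alpha:=h^{-1}(c^2)$, and first assume $c<1$, so that $\alpha\in(0,\tfrac12)$ and $\lambda>0$. Then $n\,h(\alpha)=c^2 n<cn\le\Ent[f]$, so the first alternative is impossible and we get $\Inf[f]>h^{-1}(c^2)\cdot n$. Combining this with the trivial bound $\Ent[f]\le n$ yields $\Ent[f]/\Inf[f]<1/h^{-1}(c^2)\le(1+c)/h^{-1}(c^2)$, which is the claim (in fact with room to spare). In the degenerate case $c=1$ we have $\Ent[f]=n$, which forces $\widehat{f}(S)^2=2^{-n}$ for every $S$ and hence $\Inf[f]=n/2$, so the ratio is $2\le 4=(1+c)/h^{-1}(c^2)$.

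The one genuinely delicate point — and the step I would expect to cost the most thought — is calibrating the tilt $\alpha$ so that the hypothesis $\Ent[f]\ge cn$ converts \emph{cleanly} into a lower bound of the form $\Inf[f]\gtrsim n\cdot h^{-1}(\cdot)$. The dichotomy is sharpest at $\alpha=h^{-1}(\Ent[f]/n)$, where it yields $\Inf[f]\ge h^{-1}(\Ent[f]/n)\cdot n$ and hence, using that $t\mapsto t/h^{-1}(t)$ is non-increasing on $(0,1]$, the stronger bound $\Ent[f]/\Inf[f]\le c/h^{-1}(c)$; the simpler choice $\alpha=h^{-1}(c^2)$ above is just a convenient way to land directly on an estimate in terms of $c$ alone. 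One could alternatively run everything from the plain level-counting bound $\Ent[f]\le\log(n+1)+n\,h(\Inf[f]/n)$, but then the stray $\log(n+1)$ must be absorbed — exactly the lower-order bookkeeping the tilted bound sidesteps.
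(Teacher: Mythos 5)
Your proof is correct, and it takes a genuinely different route from the paper's. The paper partitions the spectrum into $\mathscr{F}=\{S:|S|\le t\}$ and its complement, applies the entropy chain rule, bounds $\Ent[\mathscr{F}]\le n\cdot h(t/n)$ by counting sets in a Hamming ball, and solves for $W^{\le t}$ to get $W^{>t}\ge \frac{c}{1+c}$ at $t=h^{-1}(c^2)\,n$, whence $\Inf[f]\ge t\cdot W^{>t}=\frac{c}{1+c}h^{-1}(c^2)\,n$. Your tilted bound $\Ent[f]\le\lambda\,\Inf[f]+n\log(1+2^{-\lambda})$ is the smoothed (Gibbs/maximum-entropy) version of the same counting, with the product measure proportional to $2^{-\lambda|S|}$ playing the role of the Hamming ball; the reparametrization by $\alpha$ and the identity $\log(1+2^{-\lambda})=h(\alpha)-\alpha\lambda$ (which checks out) turn it into the clean dichotomy ``either $\Ent[f]\le n\,h(\alpha)$ or $\Inf[f]>\alpha n$.'' This buys you several things: you avoid the $o(n)$ error in the ball-volume estimate and the crude bounds $\Ent[\mathscr{F}^C]\le n$ and $h(W^{\le t})\le 1$ (and hence the paper's ``replace $cn$ by $0.99\,cn$'' step); you obtain the stronger lower bound $\Inf[f]\ge h^{-1}(c^2)\,n$ rather than $\frac{c}{1+c}h^{-1}(c^2)\,n$, so your constant $1/h^{-1}(c^2)$ (or $c/h^{-1}(c)$ with the optimal tilt) improves on the stated $\frac{1+c}{h^{-1}(c^2)}$; and your argument handles all $c\in(0,1]$ directly, including the endpoint $c=1$, whereas the paper works with $c\in(0,\tfrac12)$ and then invokes monotonicity of $c'\mapsto\frac{1+c'}{h^{-1}(c'^2)}$. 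All steps, including the Jensen application and the uniform-spectrum case $\Ent[f]=n$, are sound.
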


\begin{theorem}
\label{FEI linear entropy is tight}
Let $s: \mathbb{N} \to \mathbb{R}$ such that $s(n) = o(n)$. Suppose that FEI holds for all $ f \colon \{-1,1\}^n \to \{-1,1\}$ with $\Ent[f] > s(n)$. Then FEI holds for all Boolean functions.
\end{theorem}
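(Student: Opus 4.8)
The plan is to reduce the general FEI conjecture to the hypothesized class of high-entropy functions by a construction that inflates the spectral entropy past the threshold $s(n)$ while leaving the ratio $\Ent[f]/\Inf[f]$ exactly unchanged. The natural vehicle is the direct sum (XOR of independent copies): given an arbitrary $g \colon \{-1,1\}^m \to \{-1,1\}$ and a parameter $t \in \N$, define $f \colon (\{-1,1\}^m)^t \to \{-1,1\}$ by $f(x^1,\dots,x^t) = \prod_{i=1}^t g(x^i)$, a Boolean function on $n = tm$ variables.

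First I would record the spectral bookkeeping for this construction. Because the blocks $x^1,\dots,x^t$ use disjoint coordinates, $\widehat{f}(S^1,\dots,S^t) = \prod_{i=1}^t \widehat{g}(S^i)$, so the spectral distribution $\mathcal S_f$ is precisely the product measure $\mathcal S_g^{\otimes t}$. Additivity of Shannon entropy over product distributions then gives $\Ent[f] = t\cdot \Ent[g]$, and linearity of expectation gives $\Inf[f] = \Exp_{(S^1,\dots,S^t)\sim\mathcal S_g^{\otimes t}}\!\big[\textstyle\sum_i |S^i|\big] = t \cdot \Inf[g]$. In particular the FEI ratio of $f$ equals that of $g$.

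It then remains to choose $t$ so that $f$ lands inside the hypothesized class. If $\Ent[g] = 0$ then $g$ is $\pm$ a single character, so $\Ent[g] = 0 \le C\cdot\Inf[g]$ holds trivially; hence assume $\Ent[g] > 0$. Since $s(n) = o(n)$, for all large enough $t$ we have $s(tm) < \frac{\Ent[g]}{2m}\cdot tm = \frac{t}{2}\Ent[g] < t\cdot\Ent[g] = \Ent[f]$, so $f$ satisfies the premise of the assumption and therefore $\Ent[f] \le C\cdot \Inf[f]$ for the constant $C$ furnished by it. Substituting $\Ent[f] = t\Ent[g]$ and $\Inf[f] = t\Inf[g]$ and dividing by $t$ yields $\Ent[g] \le C\cdot \Inf[g]$ with the same constant $C$. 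As $g$ was arbitrary, FEI holds for every Boolean function.

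I do not expect a real obstacle in this argument; the only points that need care are the spectral identities (standard) and checking that the sublinearity of $s$ leaves enough slack — which it does precisely because each additional copy of $g$ contributes a fixed positive amount $\Ent[g]$ to the entropy while the threshold $s(tm)$ grows sublinearly in the (linearly growing) variable count. It is worth contrasting this with Theorem~\ref{FEI low influence is tight}: there the dual task of driving the influence below $2^{-s(n)}$ cannot be done by a plain direct sum (which would scale the influence up by $t$) and instead relies on the more delicate construction of \cite{WWW14}.
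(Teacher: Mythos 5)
Your proof is correct, but it takes a genuinely different (and arguably cleaner) route than the paper. The paper pads an arbitrary $f$ by tensoring with $k = s(2n)$ copies of a fixed two-variable gadget ($max$, with $\Ent = 2$ and $\Inf = 1$); this forces the entropy of the padded function above the threshold, yields $\Ent[f] \leq C\cdot\Inf[f] + (C-2)k$ with $k = o(n)$, and then invokes the separate self-tensorization lemma to strip the additive $o(n)$ error. You instead self-tensorize $g$ directly: since $\Ent[g^{\otimes t}] = t\,\Ent[g]$ and $\Inf[g^{\otimes t}] = t\,\Inf[g]$, the FEI ratio is preserved exactly, and as long as $\Ent[g] > 0$ the entropy grows linearly in $t$ with fixed positive slope $\Ent[g]/m$ per variable, so it overtakes the sublinear threshold $s(tm)$ for $t$ large; the degenerate case $\Ent[g]=0$ (i.e.\ $g = \pm\chi_S$) is trivial. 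Your approach avoids the gadget and the second cleanup step, at the cost of a one-line case split, and it delivers the same constant $C$. It also extends to the FMEI variant the paper records afterwards, since min-entropy is likewise additive under tensoring. The only points worth stating explicitly in a final write-up are the monotonicity convention for $s$ (or just the bare fact that $s(tm)/(tm) \to 0$ suffices, which is what you use) and that the hypothesis supplies a single universal constant $C$ for the whole class $\{\Ent[f] > s(n)\}$ --- both of which you have handled.
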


For example, proving FEI for the class of functions with $\Ent[f]  \geq \frac{n}{\log n}$ will confirm Conjecture \ref{FEI}. We also note that the other two extremal cases are easy, namely functions with exponentially low entropy and functions with total influence linear in $n$.

Independently from our work on the extremal classes, we also provide some improvements on previously known results. First, we modify the $\Ent[f] \leq O(\log \|\widehat{f}\|_1)$ bound of \cite{CKLS16} to include the influence and variance of the function, thereby showing that FEI holds for the class of functions with constant $\|\widehat{f}\|_1$, $\left\lbrace f \colon \{-1,1\}^n \to \{-1,1\} : \|\widehat{f}\|_1 \leq L \right\rbrace$.

\begin{theorem}
Let $ f \colon \{-1,1\}^n \to \{-1,1\}$ with $\|\widehat{f}\|_1 = L$. Then
$\Ent[f] \leq (4 \log L + 11) \Var(f) + 10 \cdot \Inf[f] $. In particular, from the edge isoperimetric inequality, we have $\Ent[f] \leq (4 \log L + 21) \cdot \Inf[f] $.
\end{theorem}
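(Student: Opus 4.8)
The starting point is the bound $\Ent[f] \le 2\log L$ of \cite{CKLS16}, which admits a one-line proof via non-negativity of Kullback--Leibler divergence: for the probability distributions $p_S = \widehat f(S)^2$ and $q_S = |\widehat f(S)|/L$ over subsets of $[n]$, the inequality $\Div{p}{q} \ge 0$ rearranges to $\Ent[f] = 2\sum_S \widehat f(S)^2\log\tfrac1{|\widehat f(S)|} \le 2\log L$. The defect of this estimate is that its right-hand side is a fixed quantity that does not shrink as $f$ approaches a constant, which is exactly what stops it from yielding FEI for all of $\{f : \|\widehat f\|_1 \le L\}$. The plan is to run the same argument, localized around the empty set. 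Assuming $f$ is non-constant, split off the $S = \emptyset$ term using $\widehat f(\emptyset)^2 = 1 - \Var(f)$ to get the exact identity
\[ \Ent[f] = h\!\left(\Var(f)\right) + \Var(f)\cdot\Ent', \]
where $h$ is the binary entropy function and $\Ent'$ is the Shannon entropy of the conditional spectral distribution $\bigl\{\widehat f(S)^2/\Var(f)\bigr\}_{S\neq\emptyset}$.

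Second, I would bound $\Ent'$ by replaying the divergence argument on this conditional distribution. Put $L' := \sum_{S\neq\emptyset}|\widehat f(S)| = \|\widehat f\|_1 - |\widehat f(\emptyset)| \le L$ and compare $\bigl\{\widehat f(S)^2/\Var(f)\bigr\}_{S\neq\emptyset}$ against $\bigl\{|\widehat f(S)|/L'\bigr\}_{S\neq\emptyset}$; non-negativity of the divergence gives $\sum_{S\neq\emptyset}\tfrac{\widehat f(S)^2}{\Var(f)}\log\tfrac1{|\widehat f(S)|} \le \log\tfrac{L'}{\Var(f)}$, and substituting into the definition of $\Ent'$ yields
\[ \Ent' \le 2\log L' + \log\tfrac1{\Var(f)} \le 2\log L + \log\tfrac1{\Var(f)}. \]
Hence $\Var(f)\cdot\Ent' \le 2\,\Var(f)\log L + \Var(f)\log\tfrac1{\Var(f)}$. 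Combined with the elementary estimate $h(x) \le x\log\tfrac1x + x\log e$, this gives
\[ \Ent[f] \le 2\,\Var(f)\log L + 2\,\Var(f)\log\tfrac1{\Var(f)} + O(\Var(f)). \]

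The last step is to dispose of the term $\Var(f)\log\tfrac1{\Var(f)}$. Writing $\alpha = \min\{\Pr[f=1],\Pr[f=-1]\} \le \tfrac12$, we have $2\alpha \le \Var(f) \le 4\alpha$, so the edge-isoperimetric inequality on the cube, $\Inf[f] \ge 2\alpha\log\tfrac1\alpha$, gives $\Var(f)\log\tfrac1{\Var(f)} = O(\Inf[f])$. Substituting produces a bound of the claimed shape $\Ent[f] \le (c_1\log L + c_2)\Var(f) + c_3\Inf[f]$; the particular constants $4\log L + 11$ and $10$ in the statement are not optimized. The ``in particular'' clause is then immediate: the Poincar\'e inequality $\Var(f) \le \Inf[f]$ collapses the two terms into $(4\log L + 21)\Inf[f]$.

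I do not expect any step to be a genuine obstacle; the point is the observation that the divergence argument of \cite{CKLS16} should be applied to the conditional distribution $\bigl\{\widehat f(S)^2/\Var(f)\bigr\}_{S\neq\emptyset}$ rather than to the full spectral distribution, which mechanically inserts the factor $\Var(f)$ in front of $\log L$. The only points needing care are mild: checking there is no circularity when $|\widehat f(\emptyset)|$ is close to $1$ (there is not --- then $\Var(f)$, and hence $\Inf[f]$, is bounded below by a constant so the lower-order terms absorb everything, and moreover $L' \ge \Var(f)^{1/2}$ since $\|\cdot\|_1 \ge \|\cdot\|_2$, which keeps $\log L'$ from being too negative); and the routine arithmetic in passing from $\Var(f)\log\tfrac1{\Var(f)}$ to a multiple of $\Inf[f]$ via edge-isoperimetry.
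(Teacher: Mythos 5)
Your argument is correct, and it reaches the stated bound (in fact with better constants) by a genuinely different route than the paper. The paper localizes the \emph{threshold} proof of $\Ent[f]\leq 4\log L+9$ from \cite{CKLS16}: it sets $\theta=(\Var(f)/4L)^2$ and splits the nonempty coefficients into $G_{\text{large}}$ and $G_{\text{small}}$, bounding the large ones by the uniform estimate $\log\frac{1}{\widehat{f}(S)^2}\leq\log\frac{1}{\theta^2}$ times their total weight $\leq\Var(f)$, and the small ones via $\log\frac{1}{x}<\frac{1}{\sqrt{x}}$ together with $\sum_S|\widehat{f}(S)|\leq L$. You instead localize the sharper divergence/R\'enyi proof of $\Ent[f]\leq 2\log L$ (which the paper states at the top of that section but does not itself localize): split off only $S=\emptyset$ via the chain rule $\Ent[f]=h(\Var(f))+\Var(f)\cdot\Ent'$, then apply Jensen to the conditional spectral distribution against $\bigl\{|\widehat{f}(S)|/L'\bigr\}_{S\neq\emptyset}$ to get $\Ent'\leq 2\log L'+\log\frac{1}{\Var(f)}$. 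I checked the Jensen step and the identity $\Ent'=\log\Var(f)+2\sum_{S\neq\emptyset}\frac{\widehat{f}(S)^2}{\Var(f)}\log\frac{1}{|\widehat{f}(S)|}$; both are fine, and your side remarks ($L'\leq L$ suffices regardless of the sign of $\log L'$, and the constant-$f$ case is vacuous) dispose of the edge cases. Both proofs then share the same endgame: $\Var(f)\log\frac{1}{\Var(f)}\leq 2\Inf[f]$ from the edge-isoperimetric inequality, plus $\Var(f)\leq\Inf[f]$. What your version buys is a leading constant of $2\log L$ rather than $4\log L$ (so roughly $(2\log L+\log e)\Var(f)+4\Inf[f]$, which strictly implies the stated $(4\log L+11)\Var(f)+10\Inf[f]$ since $L\geq 1$) and the elimination of the threshold bookkeeping; what the paper's version buys is that it stays entirely within the elementary toolkit of \cite{CKLS16}, which is the point the paper is making about how little needs to change there.
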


As a direct corollary, we can deduce FEI for functions with some related complexity measures that are constant. We note that some of these results have been previously known.
\begin{corollary}
FEI holds for functions with constant $\|\widehat{f}\|_1$, constant sub-cube partition, constant degree, constant decision tree depth, constant decision tree size, constant granularity or constant sparsity.
\end{corollary}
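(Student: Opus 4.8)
The plan is to reduce every entry of the list to the single case of constant $\|\widehat{f}\|_1$ and then quote the preceding theorem, which gives $\Ent[f] \leq (4\log L + 21)\cdot\Inf[f]$ whenever $\|\widehat{f}\|_1 = L$. Thus it suffices to show that each of the stated complexity measures, when bounded by a constant, forces $\|\widehat{f}\|_1 = O(1)$ independently of $n$.

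For most of the measures this is a one-line standard fact. If $f = \sum_j \epsilon_j \mathbf{1}_{C_j}$ where the $C_j$ form a sub-cube partition of $\{-1,1\}^n$ and $\epsilon_j \in \{-1,1\}$, then since the indicator of a subcube of codimension $k$ equals $2^{-k}\sum_{S \subseteq T}(\prod_{i\in S} a_i)\chi_S$ and hence has Fourier $\ell_1$-norm exactly $1$, the triangle inequality yields $\|\widehat{f}\|_1 \leq m$ with $m$ the number of parts; running the same argument over the leaf-subcubes of a decision tree handles constant decision tree size (at most as many leaves as the size) and constant decision tree depth $D$ (at most $2^D$ leaves). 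For sparsity $s$ (number of nonzero coefficients) and granularity $k$ (all coefficients in $2^{-k}\mathbb{Z}$), Parseval bounds the number of nonzero coefficients by $s$ and by $2^{2k}$ respectively, and Cauchy–Schwarz then gives $\|\widehat{f}\|_1 \leq \sqrt{s}$ and $\|\widehat{f}\|_1 \leq 2^{k}$.

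The only case requiring an external input is constant degree: a degree-$d$ Boolean function need not be sparse a priori, so here I would invoke the Nisan–Szegedy junta theorem, which says such an $f$ depends on at most $d\,2^{d-1}$ coordinates; and any Boolean function on $\ell$ coordinates has at most $2^\ell$ Fourier coefficients, each of absolute value at most $1$, so $\|\widehat{f}\|_1 \leq 2^{\ell}$. Constant decision tree depth is also subsumed here via $\deg[f]\le D$, though the subcube argument above already covers it directly. I do not anticipate a genuine obstacle: the only nontrivial ingredient is the junta bound for low-degree functions, and once each measure is seen to control $\|\widehat{f}\|_1$ by a constant, the corollary is immediate from the theorem.
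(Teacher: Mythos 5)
Your proposal is correct and follows essentially the same route as the paper: reduce each complexity measure to a constant bound on $\|\widehat{f}\|_1$ and invoke the preceding theorem. The only divergence is the degree case, where the paper uses the direct bound $\|\widehat{f}\|_1 \leq 2^{\deg(f)}$ from \cite{CKLS16} rather than passing through the Nisan--Szegedy junta theorem; your detour is valid but yields a doubly-exponential (still constant) bound where a singly-exponential one is available.
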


We also build and improve on the work of \cite{WWW14}. Inspired by their methods, we provide a hopefully promising direction towards proving FEI for read-k DNFs. We give an explicit protocol for the Tribes function which is a read-once DNF, and conjecture its possible generalization to a protocol for read-k DNFs, as a step towards read-k formulas, breaking the barrier of ``read-once'' as an assumption required for many of the results known today. As a first step, we prove FMEI for regular read-k DNFs, where by regular we mean (informally) that all clauses are more or less of the same width, and the number of clauses is exponential in that width.

\begin{theorem} 
Let $f$ be a regular read-k DNF, then $\Ent_{\infty}[f] = O(\Inf[f])$.
\end{theorem}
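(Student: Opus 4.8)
Write $w$ for the common clause width and $m$ for the number of clauses, so regularity gives $\log m = \Theta(w)$; since each variable lies in at most $k$ clauses we have $n \ge \big(\sum_{j}|C_{j}|\big)/k = \Theta(mw/k)$, while trivially $n \le \sum_{j}|C_{j}| = \Theta(mw)$, so $\log n = \Theta(w)$ as well. The plan is to combine a \emph{structure-free} upper bound on $\Ent_{\infty}[f]$ with a lower bound on $\Inf[f]$ that holds once $f$ is ``balanced'', the unbalanced case being disposed of through the empty Fourier coefficient. The structure-free bound is that \emph{any} width-$w$ DNF has $\Ent_{\infty}[f] \le 2w$: if $C$ is a clause with variable set $T$ (so $|T| \le w$) and $z^{*} \in \{-1,1\}^{T}$ is the unique assignment to $T$ satisfying $C$, then conditioning on $x_{T} = z^{*}$ forces $f \equiv 1$, so
\[
1 \;=\; \Ex\big[\,f \,\big|\, x_{T} = z^{*}\,\big] \;=\; \sum_{S \subseteq T}\widehat{f}(S)\prod_{i\in S}z^{*}_{i} \;\le\; \sum_{S \subseteq T}\big|\widehat{f}(S)\big| \;\le\; 2^{w}\max_{S\subseteq T}\big|\widehat{f}(S)\big|,
\]
whence some Fourier coefficient has square at least $2^{-2w}$.

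For the unbalanced case, let $\alpha = \min\{\Pr[f=1],\Pr[f=-1]\}$. The edge-isoperimetric inequality gives $\Inf[f] \ge 2\alpha\log_{2}(1/\alpha)$, while $|\widehat{f}(\emptyset)| = |1-2\alpha|$ gives $\Ent_{\infty}[f] \le \log\frac{1}{(1-2\alpha)^{2}}$; for $\alpha \le \tfrac14$ these combine into $\Ent_{\infty}[f] = O(\Inf[f])$ with an absolute constant and no use of the DNF structure. It therefore remains to treat the balanced case $\alpha\in(\tfrac14,\tfrac12]$, where I claim $\Inf[f] = \Omega_{k}(w) = \Omega_{k}(\log n)$. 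Combined with $\Ent_{\infty}[f]\le 2w$ this finishes the proof, since then $\Ent_{\infty}[f]\le 2w = O_{k}(\Inf[f])$.

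To prove the influence lower bound I would count pivotal coordinates. In a monotone DNF a coordinate $i$ is pivotal at $x$ exactly when at least one clause is satisfied and \emph{every} satisfied clause contains $i$; writing $J(x)$ for the set of satisfied clauses, the number of pivotal coordinates at $x$ then equals $\big|\bigcap_{j\in J(x)}T_{j}\big|$, which is $w$ as soon as exactly one clause is satisfied, so that $\Inf[f] \ge w\cdot\Pr[\text{exactly one clause satisfied}]$. In the balanced regular regime each clause is satisfied with probability $2^{-w}$ and the expected number of satisfied clauses is $\Theta(1)$ --- at least $\alpha>\tfrac14$ by the union bound, and $O(1)$ by balancedness --- so one expects the latter probability to be $\Omega_{k}(1)$. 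A softer route to the same conclusion is Friedgut's junta theorem: a balanced regular read-$k$ DNF should be $\Omega(1)$-far from every junta on $o(n)$ coordinates (a junta on a set $W$ can only ``see'' clauses entirely contained in $W$, of which there are too few when $|W|=o(n)$ for the best $W$-junta to be more than essentially constant, whereas $f$ is balanced), so $2^{O(\Inf[f])} = \Omega(n)$ and thus $\Inf[f] = \Omega(\log n)$. The non-monotone case is handled identically: the pivotal-coordinate count does not use monotonicity, and the bound $\Ent_{\infty}[f]\le 2w$ is structure-free.

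The main obstacle is this influence lower bound in the balanced case, and concretely the \emph{dependence} among the clause-satisfaction events: two clauses may share up to $w-1$ variables, in which case both are satisfied with probability as large as $2^{-w-1}$ --- half the probability that a single one is --- so a naive inclusion--exclusion estimate of $\Pr[\text{exactly one clause satisfied}]$ is swamped by such collision terms, which can be a factor $\Theta(wk)$ larger than the main contribution. This is exactly where regularity must enter: it must force the clause hypergraph to be spread out enough (bounded read, exponentially many clauses in the width) that heavy overlaps are too rare to matter, and that whenever several heavily-overlapping clauses are satisfied simultaneously they still share a common core of size $\Omega(w)$ and thereby still contribute $\Omega(w)$ pivotal coordinates. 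Turning this into a clean quantitative bound --- or, on the junta route, verifying the ``far from every small junta'' statement for the precise notion of regularity used --- is the technical heart of the argument.
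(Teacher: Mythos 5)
Your min-entropy upper bound is fine, and is in fact cleaner and slightly stronger than the paper's: conditioning on the satisfying assignment of a single clause of width at most $w$ gives $1=\sum_{S\subseteq T}\widehat{f}(S)\prod_{i\in S}z^{*}_{i}$ and hence $\Ent_{\infty}[f]\le 2w$, with no use of regularity or of the read-$k$ property. The paper instead invokes a Fourier-concentration lemma of Blum et al.\ for read-$k$ DNFs (a family $\mathcal{F}$ of at most $24n^{2}k^{2}$ sets carrying weight at least $\frac{1}{4k}$) to get $\Ent_{\infty}[f]\le O(w+\log k)$, using regularity only to convert $\log n$ into $O(w)$. Your handling of the biased case via the empty coefficient and edge isoperimetry also matches what the paper does elsewhere (its Lemma on FMEI for functions with $\Var[f]\le\frac12$).

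The genuine gap is exactly where you say it is: the lower bound $\Inf[f]=\Omega_{k}(w)$ in the balanced case is asserted but not proved, and neither of your two routes (the probability that exactly one clause is satisfied, or a Friedgut-junta argument) is carried out; you yourself flag the clause-dependence problem as unresolved. The paper closes this gap by a much simpler argument that avoids clause dependence entirely: it bounds the \emph{maximum individual} influence from above rather than the total influence from below directly. For $x_{j}$ to be pivotal, some clause containing $x_{j}$ must have all of its other literals satisfied; each such clause has at least $C_{1}w-1$ other literals, so contributes probability at most $2^{-C_{1}w+1}$, and $x_{j}$ lies in at most $k$ clauses, giving $\Inf_{j}[f]\le k\cdot 2^{-C_{1}w+1}$ by a union bound. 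The edge-isoperimetric form of KKL, $\Inf[f]\ge \Var[f]\cdot\log\frac{1}{\max_{j}\Inf_{j}[f]}$, then yields $\Inf[f]\ge\Omega(w)-\log k$ once $\Var[f]=\Omega(1)$, which is exactly the balanced case you reduced to. Substituting this step for your unproved claim would complete your argument; as it stands, the proposal is incomplete.
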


We also improve the result of \cite{WWW14} for read-k decision trees. \cite{WWW14} define the \textit{tree covariance} of a decision tree recursively as: $\Cov[T] = \Cov(g, h) + \frac{1}{2}(\Cov[T_0] + \Cov[T_1])$, where $g,h$ represent the functions defined by the left and right children of the root of $T$. They come up with a protocol for decision trees with price $4 \cdot \Inf[f] + 2 \cdot \Cov[T]$. Therefore, by proving $\Cov[T] \leq k \cdot \Var[f] \leq k \cdot \Inf[f]$ they obtain FEI for read-k decision trees with constant $k$. By improving the bound on the covariance to $\Cov[T] \leq O(\sqrt{k}) \cdot \Inf[f]$, we manage to also improve the constant achieved for FEI regarding this class. 

\begin{theorem} 
Let $ f $ be computed by a read-k decision tree. Then $\Cov[T] \leq O(\sqrt{k}) \cdot \Inf[f]$. As a result, FEI holds for read-k decision trees with constant $C = O(\sqrt{k})$.
\end{theorem}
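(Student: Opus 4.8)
The plan is to reduce the whole statement to the single inequality $\Cov[T]=O(\sqrt k)\cdot\Inf[f]$ and then invoke the compression framework of \cite{WWW14}. They construct a prefix-free binary protocol for any decision-tree function $f$ of expected price $4\,\Inf[f]+2\,\Cov[T]$, and their compression lemma (quoted above) turns this into $\Ent[f]\le 4\,\Inf[f]+2\,\Cov[T]+2\,\Inf[f]$, so the covariance bound immediately gives FEI for read-$k$ decision trees with $C=6+O(\sqrt k)=O(\sqrt k)$. Throughout I would take $T$ \emph{reduced} --- no variable queried twice along a root-to-leaf path --- which changes neither $f$, $\Inf[f]$, nor $\Cov[T]$, and does not increase $k$.

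First I would unroll $\Cov[T]=\Cov(g,h)+\tfrac12(\Cov[T_0]+\Cov[T_1])$ into
$$\Cov[T]=\sum_{v}2^{-\mathrm{depth}(v)}\,\Cov(g_v,h_v),$$
the sum over internal nodes $v$, with $g_v,h_v$ the subfunctions at the two children of $v$ and $2^{-\mathrm{depth}(v)}$ the probability that a uniform root-to-leaf walk visits $v$. The same accounting, valid for reduced trees, yields the companion identities $\Inf[f]=\sum_v 2^{-\mathrm{depth}(v)}\Pr[g_v\ne h_v]$ and $\Var[f]=\tfrac14\sum_v 2^{-\mathrm{depth}(v)}(\Exp[g_v]-\Exp[h_v])^2$. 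Next I would record the structural fact that $\Cov(g_v,h_v)$ depends only on the set $A_v$ of variables queried in \emph{both} subtrees below $v$: writing $G_v,H_v\colon\{-1,1\}^{A_v}\to[-1,1]$ for the averaged projections of $g_v,h_v$, one gets $\Cov(g_v,h_v)=\Cov(G_v,H_v)\le\sqrt{\Var(G_v)\Var(H_v)}$ by Cauchy--Schwarz, and $\Var(G_v)\le\sum_{i\in A_v}\Inf_i[g_v]$ by a union bound (symmetrically for $H_v$).

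The heart of the proof is a second Cauchy--Schwarz, over the nodes: $\Cov[T]\le\bigl(\sum_v 2^{-\mathrm{depth}(v)}\Var(G_v)\bigr)^{1/2}\bigl(\sum_v 2^{-\mathrm{depth}(v)}\Var(H_v)\bigr)^{1/2}$, combined with a charging argument exploiting read-$k$. Substituting the union bound and expanding each $\Inf_i[g_v]$ over the occurrences of $x_i$ in the left subtree of $v$ via the influence identity, each weighted sum becomes a sum over pairs (node $v$, occurrence of a common variable $x_i$), and the decisive combinatorial fact is that, for a fixed occurrence, the ancestors $v$ that make $x_i$ common to their two subtrees consume pairwise-distinct occurrences of $x_i$ --- so the read-$k$ bound caps how many times each occurrence is charged. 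Carried out naively this gives $\sum_v 2^{-\mathrm{depth}(v)}\Var(G_v)\le O(k)\Inf[f]$ and recovers the known $\Cov[T]=O(k)\Inf[f]$. To extract only $\sqrt k$ I would keep the node-wise Cauchy--Schwarz as a genuine geometric mean rather than collapsing it by AM--GM, and refine the charging so that the factor $k$ it produces is cashed, on one side, against the $\Var[f]$-identity instead of the $\Inf[f]$-identity; since $\Var[f]\le\Inf[f]$, a geometric mean of an ``$O(k\,\Var[f])$'' quantity and an ``$O(\Inf[f])$'' quantity is $O(\sqrt k)\cdot\Inf[f]$. Substituting into the reduction at the top finishes the argument.

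I expect the obstacle to be precisely this last refinement. The symmetric use of the node-wise Cauchy--Schwarz bounds both factors by the same $O(k)\Inf[f]$ and merely reproduces the $O(k)$ result, so the content is in re-doing the charging so that the $k$ pairs with $\Var[f]$ rather than $\Inf[f]$ --- equivalently, paying for the extra variable-occurrences forced by the ``common variable'' condition out of the variance identity rather than the influence identity. Making that accounting respect the asymmetry (so that Cauchy--Schwarz extracts $\sqrt k$ and not $k$, and so that one does not instead re-introduce a spurious dependence on the depth of $T$) is where the real work lies. The remaining pieces --- the reduced-tree normalization and carrying the constant through the \cite{WWW14} protocol and lemma --- are routine bookkeeping.
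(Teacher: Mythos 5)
Your reduction of the FEI statement to the single bound $\Cov[T]\le O(\sqrt k)\cdot\Inf[f]$ via the \cite{WWW14} protocol of price $4\Inf[f]+2\Cov[T]$ is correct and is exactly how the paper uses the covariance bound. But the covariance bound itself is not proved: you set up a node-by-node Cauchy--Schwarz scheme, observe that carried out symmetrically it only reproduces the known $O(k)$ bound, and then defer the actual extraction of $\sqrt k$ to a ``refinement of the charging argument'' that you explicitly flag as the obstacle and do not supply. That refinement is the entire content of the theorem, so as it stands the proposal has a genuine gap at the decisive step. It is also not clear that your intended fix works even in outline: pairing one Cauchy--Schwarz factor with the $\Var[f]$-identity and the other with the $\Inf[f]$-identity would give a geometric mean of $O(k\Var[f])$ and $O(\Inf[f])$, i.e.\ $O(\sqrt k)\sqrt{\Var[f]\,\Inf[f]}$, which is fine since $\Var[f]\le\Inf[f]$ --- but you give no mechanism by which the ``common variable'' occurrences can be charged to $(\Exp[g_v]-\Exp[h_v])^2$ rather than to $\Pr[g_v\ne h_v]$, and the hard configuration (all occurrences of each variable of $S$ concentrated in one subtree, with no control on the ratio $\widehat g(S)^2/\widehat h(S)^2$) is precisely where such a charge seems unavailable.

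The paper resolves this differently, and the missing idea is worth knowing: it stays with the per-Fourier-coefficient structural induction of \cite{WWW14}, replacing their potential $m_T(S)-1$ by $sq_T(S)=2\sum_{i\in S}\sqrt{a_i(T)}\le 2\sqrt k\,|S|$, where $a_i(T)$ counts occurrences of $x_i$ in $T$. The inductive step must absorb the cross term $2\widehat g(S)\widehat h(S)$ into $(sq_T(S)-sq_{T_0}(S))\widehat g(S)^2+(sq_T(S)-sq_{T_1}(S))\widehat h(S)^2$, and in the hard case the two decrements are roughly $\sum_{i\in S}\sqrt{a_i(T)}$ and $\sum_{i\in S}1/\sqrt{a_i(T)}$; the cross term is then split \emph{asymmetrically} as $2\widehat g\widehat h\le m^2\widehat g^2+m^{-2}\widehat h^2$ with $m^2=\sum_{i\in S}\sqrt{a_i(T)}$, and the AM--HM inequality guarantees the product of the two decrements is at least $1$. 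The square root in the potential is chosen exactly so that this product survives the worst-case lopsided split of occurrences --- this is the asymmetry you were looking for, but it lives at the level of individual Fourier coefficients inside the induction, not at the level of a global charging over nodes.
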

We believe the tree covariance of a decision tree and its connection to other measures of the function it computes such as its variance and influence, might be of independent interest in the study of decision trees.

Finally, as an independent result, we refine the known connection between the size of a decision tree, and the spectral norm ($\|\widehat{f}\|_1$) of the function it computes. It is a well known fact that $\|\widehat{f}\|_1 \leq size(T)$, the size of a decision tree being the number of nodes in it. Our improvement involves the covariance of the nodes in the decision tree, and is given by the following lemma:
\begin{proposition}
For a Boolean function $f$ that is computed by a decision tree $T$:
\[ \|\widehat{f}\|_1 \leq \text{boundary\_size}(T) - \sum_{v \in inner(T)}{| \Cov(g_v, h_v) | } \]
\end{proposition}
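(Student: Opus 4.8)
The plan is to prove the bound by induction on the structure of $T$, sharpening the textbook recursion behind $\|\widehat{f}\|_1 \le size(T)$. Let $r$ be the root of $T$, let $x_i$ be the variable it queries, and let $g := g_r$ and $h := h_r$ be the Boolean functions computed by the two sub-trees $T_0,T_1$ below $r$, regarded as functions of the coordinates not yet fixed on the path to $r$ (for the root this is all of $\{-1,1\}^n$). Then $f = \tfrac{g+h}{2} + x_i\cdot\tfrac{h-g}{2}$; the Fourier coefficients of $f$ indexed by sets containing $i$ are precisely those of $\tfrac{h-g}{2}$ and those not containing $i$ are those of $\tfrac{g+h}{2}$, so applying $|a+b|+|a-b| = 2\max(|a|,|b|) = 2(|a|+|b|) - 2\min(|a|,|b|)$ coefficient by coefficient gives the exact identity
\[
\|\widehat{f}\|_1 \;=\; \sum_{S}\max\big(|\widehat{g}(S)|,\,|\widehat{h}(S)|\big) \;=\; \|\widehat{g}\|_1 + \|\widehat{h}\|_1 \;-\; \sum_{S}\min\big(|\widehat{g}(S)|,\,|\widehat{h}(S)|\big).
\]

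The new point is that the ``defect term'' $\sum_{S}\min(|\widehat{g}(S)|,|\widehat{h}(S)|)$ is at least $|\Cov(g,h)|$, and here — unlike in the generic recursion $\|\widehat{f}\|_1 \le \|\widehat{g}\|_1+\|\widehat{h}\|_1$ — one crucially uses that the leaves, and hence $g$ and $h$, are $\{-1,1\}$-valued. Indeed $|\widehat{g}(S)|,|\widehat{h}(S)|\le 1$ for every $S$, so $|\widehat{g}(S)|\,|\widehat{h}(S)| \le \min(|\widehat{g}(S)|,|\widehat{h}(S)|)$; summing over $S\ne\emptyset$ and using $\Cov(g,h) = \sum_{S\ne\emptyset}\widehat{g}(S)\widehat{h}(S)$ (Plancherel applied to $\Cov(g,h)=\Ex[gh]-\Ex[g]\Ex[h]$) yields
\[
|\Cov(g,h)| \;\le\; \sum_{S\ne\emptyset}|\widehat{g}(S)|\,|\widehat{h}(S)| \;\le\; \sum_{S}\min\big(|\widehat{g}(S)|,\,|\widehat{h}(S)|\big),
\]
and plugging this into the identity above gives the one-step inequality $\|\widehat{f}\|_1 \le \|\widehat{g_r}\|_1 + \|\widehat{h_r}\|_1 - |\Cov(g_r,h_r)|$.

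It then remains to run the induction. For the base case, a one-leaf tree, $f\equiv\pm1$, so $\|\widehat{f}\|_1 = 1 = \text{boundary\_size}(T)$ (the number of leaves) and $inner(T)=\emptyset$, so equality holds. For the inductive step, apply the one-step inequality at $r$ and then the induction hypothesis to $T_0$ (which computes $g_r$) and $T_1$ (which computes $h_r$); since $\text{boundary\_size}(T) = \text{boundary\_size}(T_0)+\text{boundary\_size}(T_1)$ and $inner(T)$ is the disjoint union of $\{r\}$, $inner(T_0)$ and $inner(T_1)$, the two sub-tree bounds together with the extra term $-|\Cov(g_r,h_r)|$ add up to exactly $\text{boundary\_size}(T) - \sum_{v\in inner(T)}|\Cov(g_v,h_v)|$, as desired.

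I do not expect a genuine obstacle here: essentially all the content is the observation that $\{-1,1\}$-valuedness of the sub-tree functions permits replacing $|\widehat{g}(S)\widehat{h}(S)|$ by the larger quantity $\min(|\widehat{g}(S)|,|\widehat{h}(S)|)$, which is precisely the slack in the classical recursion. The only things to handle carefully are the restriction bookkeeping — so that each $g_v,h_v$, and hence $\Cov(g_v,h_v)$, is understood with respect to the uniform distribution on the sub-cube reaching $v$, on which both children depend on the same set of free variables and the subtree functions agree with the corresponding functions in the standalone trees $T_0,T_1$ — and checking that $\text{boundary\_size}$ and $inner(\cdot)$ decompose additively along the top split; both are immediate.
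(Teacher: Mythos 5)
Your one-step inequality is exactly the paper's: the same decomposition at the root, the identity $|a+b|+|a-b|=2\max(|a|,|b|)$ yielding the exact recursion $\|\widehat{f}\|_1=\|\widehat{g}\|_1+\|\widehat{h}\|_1-\sum_{S}\min(|\widehat{g}(S)|,|\widehat{h}(S)|)$, and the chain $\sum_{S}\min(|\widehat{g}(S)|,|\widehat{h}(S)|)\ge\sum_{S}|\widehat{g}(S)\widehat{h}(S)|\ge\bigl|\sum_{S\neq\emptyset}\widehat{g}(S)\widehat{h}(S)\bigr|=|\Cov(g,h)|$. You are right that the first inequality in that chain uses $|\widehat{g}(S)|,|\widehat{h}(S)|\le 1$, i.e.\ Booleanness of the subtree functions --- a point the paper leaves implicit. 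All of that is correct and matches the paper.

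The gap is in the induction bookkeeping, and it is not cosmetic: your parenthetical gloss of $\text{boundary\_size}(T)$ as ``the number of leaves'' conflates it with $\text{size}(T)$. Under the paper's definition ($\text{boundary\_size}(T)$ counts nodes having at least one leaf child), a single-leaf tree has $\text{boundary\_size}=0$ while $\|\widehat{f}\|_1=1$, so your base case asserts a false inequality; and $\text{boundary\_size}(T)=\text{boundary\_size}(T_0)+\text{boundary\_size}(T_1)$ fails whenever a child of the root is a leaf, since the root then contributes $1$ to the left side and nothing to the right. (The claim $\text{inner}(T)=\{r\}\sqcup\text{inner}(T_0)\sqcup\text{inner}(T_1)$ also fails for such trees, but harmlessly, because $\Cov(g_r,h_r)=0$ when one child is constant.) Taken literally, your accounting charges one unit per leaf and therefore proves the bound with $\text{size}(T)$ in place of $\text{boundary\_size}(T)$ --- a weaker statement, since the whole point of the proposition is that $\text{boundary\_size}(T)$ can be as small as $\text{size}(T)/2$. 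The paper repairs exactly this with a three-case induction: the base case is a depth-one tree (two leaf children, $\|\widehat{f}\|_1=1=\text{boundary\_size}(T)$); a ``semi'' step when exactly one child is a leaf, where the root's single unit of $\text{boundary\_size}$ pays for the $\|\widehat{g}\|_1=1$ of the constant subtree and the covariance term at the root vanishes; and the full step, identical to yours, when both children are internal. With that case split your argument goes through verbatim.
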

Where $boundary\_size(T)$ is the number of nodes that have at least one child that is a leaf. The sum of covariances is over all inner nodes of $T$, i.e. nodes that have two non-leaf children. This improved bound is tight in some cases where the bound $\|\widehat{f}\|_1 \leq size(T)$ is far from it - for example, the parity function on $n$ variables with the natural tree that computes it.

\section{Preliminaries}

\subsection{Fourier Analysis of Boolean Functions}
It is well known that functions $ f \colon \{-1,1\}^n \to \mathbb{R}$ can be uniquely expressed as multi-linear polynomials:
\[ f =  \sum_{S \subseteq [n]} \widehat{f}(S) \chi_S(x) \]
where $\chi_S(x) = \prod_{i \in S} x_i$. This is known as the \textit{Fourier expansion} of $f$, and $\widehat{f}(S)$ are the \textit{Fourier coefficients} of the function.
For Boolean-valued functions $ f \colon \{-1,1\}^n \to \{-1,1\}$ Parseval's identity implies that
$\sum_{S \subseteq [n]} \widehat{f}(S)^2 = 1$, and therefore $ \{ \widehat{f}(S)^2 \}_{S \subseteq [n]} $ can be viewed as a probability distribution, named the \textit{spectral distribution} of $f$ and denoted $\mathcal{S}_f$.  
Two of the central complexity measures of a Boolean function can be defined using its spectral distribution:
 
\begin{definition*} The \textit{spectral entropy} of a function $ f \colon \{-1,1\}^n \to \{-1,1\}$ is the Shannon-entropy of the squared Fourier coefficients, namely
\[ \Ent[f] = \Exp_{S \sim \mathcal{S}_f} \left[ \log_{2} \frac{1}{\widehat{f}(S)^2} \right] =
\sum_{S \subseteq [n]} \widehat{f}(S)^2 \log \frac{1}{\widehat{f}(S)^2} \]
\end{definition*}

\begin{definition*} The \textit{influence} of a function $ f \colon \{-1,1\}^n \to \{-1,1\}$  (sometimes referred to as its \textit{total influence}) is
\[ \Inf[f] = \Exp_{S \sim \mathcal{S}_f} [|S|] =
\sum_{S \subseteq [n]} \widehat{f}(S)^2 |S| \]
\end{definition*}

The influence of a Boolean function also has a nice combinatorial interpretation. For $i \in [n]$, the influence of a variable $x_i$ in $f$ is $\textbf{Pr}[f(x) \neq f(x^{ \oplus i}) ]$, namely the probability that for a uniformly random input flipping the $i$'th bit will affect the result. An equivalent definition for the total influence of a function is $\Inf[f] = \sum_{i=1}^{n}\Inf_i[f]$.

It is sometimes useful to classify the Fourier coefficients by their level, where the level of $S$ is $|S|$. The weight of $f$ at level $k$ is denoted $W^{k}[f] = \sum_{|S| = k} \widehat{f}(S)^2$. Note that
$ \Inf[f] = \sum_{k=0}^{n} W^{k}[f] \cdot k $. Additionally, we use the following notations: $W^{\leq k}[f] = \sum_{|S| \leq k} \widehat{f}(S)^2$, and $W^{\geq k}[f] = \sum_{|S| \geq k} \widehat{f}(S)^2$.

We also use the decision tree model of computation, see \cite{O'DBook} for a formal definition. Given a tree $T$, we call the sub-tree corresponding to the $+1$ edge leaving the root the left sub-tree ($T_0$), and call the sub-tree corresponding to the $-1$ edge leaving the root the right sub-tree ($T_1$), and denote by $g$ and $h$ the corresponding functions to each sub-tree. We assume that no variable appears more than once in any root-to-leaf path of T (or else the tree can be easily simplified). For a node $v$ in $T$ we denote by $d(v)$ the depth of $v$ - its distance from the root node. We say that $T$ is a read-k decision tree if no variable is queried at more than $k$ nodes of $T$.

Given two functions $g, h \colon \{-1,1\}^n \to \mathbb{R}$ define $\Cov(g, h) = \Exp[(g(x) - \Exp[g])\cdot (h(x) - \Exp[h])]$. Following the definitions of \cite{WWW14}, we define the \textit{covariance of a decision tree}  $T$:
for an internal node $v$, let $g$ be the function computed by $v$'s left sub-tree and $h$ be the function computed by $v$'s right sub-tree. Then, define:
\begin{itemize}
 \item $\Cov[v] = \Cov(g, h)$
 \item $\Cov[T] = \sum_{v \in T} \Cov[v] \cdot 2^{-d(v)}$
\end{itemize}

Note that $\Cov[T]$ can be equivalently defined recursively as $\Cov[T] = \Cov(g, h) + \frac{1}{2} (\Cov[T_0] +\Cov[T_1])$, with the base case that $\Cov[T] = 0$ if $T$ has depth 0.

A DNF over Boolean variables $x_1, x_2, ..., x_n$ is the logical OR of terms, $T_1 \lor T_2 \lor ... \lor T_s$ each of which is a logical AND of literals \{$x_i$, $\bar{x_i}$\}. The number of literals in a term is called its width (sometimes we refer to it as the size of the term). A DNF is read-k if no variable appears in more than $k$ terms.

The Tribes function with width $w \in \mathbb{N}^+$ and $s$ tribes, is a read-once DNF on $n = s \cdot w$ variables, where all terms are of width exactly $w$:
\[ \text{Tr}_{w,s} (x_1,x_2, ..., x_{sw}) = (x_1 \land ... x_w) \lor ... \lor (x_{(s-1)w + 1} \land ... \land x_{sw}) \]
For $w \in \mathbb{N}^+$, we choose $s$ to be the largest integer such that $1 - (1-2^{-w})^s \leq \frac{1}{2}$, so $f$ will be as unbiased as possible. Then we denote by $\Tr$, defined only for such pairs of $w,s$, to be the (essentially) unbiased Tribes function on $n$ variables. Due to Proposition 4.12 in \cite{O'DBook}, $s \thickapprox \ln(2)2^w$, $n \thickapprox \ln(2) w 2^w$.

Finally, we present the definition of regular DNFs:
\begin{definition*} Let $C_1,C_2 > 0$. We say $f = T_1 \lor T_2 \lor ... \lor T_s$ is a $(C_1, C_2)$-regular DNF (or just ``regular''), if there exists some $w \in \mathbb{N}$ s.t. the number of variables in each clause respects $C_1 w \leq \text{size}(T_i) \leq w$, and the number of clauses is $s = 2^{C_2 w}$.
\end{definition*}

Expanding on the mentioned notions of the Shannon entropy $\Ent[f]$ and the min-entropy $\Ent_\infty[f]$, the \textit{Renyi entropy} of a distribution $\mathcal{X}$ (we discuss only $\mathcal{S}_f$) is defines as follows:

\[ \Ent_a[\mathcal{X}] = \frac{1}{1-a} \log \left( \sum_{i=1}^{n} {p_i}^{a}\right). \]
Where $p_i$ are the probabilities of possible instances in $\mathcal{X}$ - in our case, these are the squared Fourier coefficients.

It can be seen that for $a \to 1$, the Renyi entropy converges to the Shannon entropy, therefore we denote $\Ent_1[f] = \Ent[f]$. Furthermore, when $a \to \infty$, the Renyi entropy converges to the min-entropy $\Ent_\infty$. It is known that for a fixed distribution, the function $\Ent_a[\mathcal{X}]$ is non-increasing in $a$.

\subsection{Edge Isoperimetric Inequality}
The simplest form of the Edge Isoperimetric Inequality states that for any Boolean function $f$, $\Var(f) \leq \Inf[f]$. 
We also rely on the following edge isoperimetric inequality, see e.g.\ Theorem 2.39 in \cite{O'DBook}:
\begin{fact} \label{Edge Isoperimetric Inequality 1} 
Let $f$ be a Boolean function. Denote $\alpha = \min ({Pr[f=1]}, {Pr[f=-1]})$, then
$2 \alpha \log \frac{1}{\alpha} \leq \Inf[f] $.
\end{fact}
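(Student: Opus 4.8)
The plan is to translate the Fact into the standard combinatorial edge-isoperimetric inequality on the Boolean cube and prove that by induction. Let $b \in \{-1,1\}$ be whichever value $f$ attains with the smaller probability, set $A = f^{-1}(b)$ so that $|A| = \alpha \cdot 2^n$ with $\alpha \le \tfrac12$, and let $\partial A$ denote the set of edges of the hypercube graph joining $A$ to its complement. For each coordinate $i$, the inputs $x$ with $f(x) \neq f(x^{\oplus i})$ count each bichromatic edge in direction $i$ twice (once from each endpoint), which gives the identity $\Inf[f] = \sum_i \Inf_i[f] = |\partial A| / 2^{n-1}$; note this is unchanged if we replace $A$ by its complement, so the choice of the minority class is legitimate. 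Hence the Fact is equivalent to the edge-isoperimetric bound
\[ |\partial A| \ge |A| \log_2 \frac{2^n}{|A|}, \]
since substituting $|A| = \alpha 2^n$ yields exactly $\Inf[f] \ge 2\alpha \log_2 (1/\alpha)$. It therefore suffices to prove this bound for every $A \subseteq \{-1,1\}^n$.

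I would prove the displayed bound by induction on $n$, the case $n=0$ being trivial. For the inductive step, split $A$ according to the last coordinate into $A_0, A_1 \subseteq \{-1,1\}^{n-1}$. Every boundary edge of $A$ is either a boundary edge of $A_0$ inside its slice, a boundary edge of $A_1$ inside its slice, or one of the $|A_0 \triangle A_1|$ edges in the last direction crossing between the two slices, so
\[ |\partial A| = |\partial A_0| + |\partial A_1| + |A_0 \triangle A_1|. \]
Applying the inductive hypothesis to $A_0$ and $A_1$ (with ambient size $2^{n-1}$) and using the trivial bound $|A_0 \triangle A_1| \ge \big||A_0| - |A_1|\big|$, the step reduces to the purely numerical inequality
\[ a \log_2 \frac{N}{a} + b \log_2 \frac{N}{b} + (a - b) \ \ge\ (a+b) \log_2 \frac{2N}{a+b}, \]
where $N = 2^{n-1}$ and $a = |A_0| \ge b = |A_1| \ge 0$.

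The main work is this last one-variable inequality, and I expect it to be the only genuine calculation. Fixing $a$ and viewing the difference $G(b)$ of the two sides as a function of $b \in [0,a]$, a direct differentiation gives $G'(b) = \log_2 \frac{a+b}{2b} - 1$ and $G''(b) = \frac{1}{\ln 2}\bigl(\frac{1}{a+b} - \frac{1}{b}\bigr) < 0$, so $G$ is strictly concave on $(0,a]$. Moreover one checks directly that $G(0) = 0$ and $G(a) = 0$, the two endpoints corresponding precisely to the equality cases $A_1 = \emptyset$ and $|A_0| = |A_1|$. Since a concave function vanishing at both endpoints of an interval is nonnegative throughout it, we get $G \ge 0$, which closes the induction. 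The remaining points are only bookkeeping: confirming the boundary-edge decomposition, the invariance of $|\partial A|$ under complementation, and the two endpoint evaluations $G(0) = G(a) = 0$.
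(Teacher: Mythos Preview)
Your proof is correct. The paper does not actually prove this fact; it simply quotes it as the edge-isoperimetric inequality and refers the reader to Theorem~2.39 in O'Donnell's book. What you have written is essentially the standard self-contained inductive proof of that theorem: reduce $\Inf[f]\ge 2\alpha\log_2(1/\alpha)$ to the combinatorial bound $|\partial A|\ge |A|\log_2\bigl(2^n/|A|\bigr)$, slice along one coordinate, and verify the resulting one-variable inequality by concavity. The derivative computations, the endpoint checks $G(0)=G(a)=0$, and the decomposition $|\partial A|=|\partial A_0|+|\partial A_1|+|A_0\triangle A_1|$ are all fine, and you were right to state and prove the displayed bound for \emph{all} $A$ (not just $|A|\le 2^{n-1}$), since the induction hypothesis must apply to the slices $A_0,A_1$ without any size restriction.

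So there is nothing to compare: the paper outsources the proof, and you supply the classical one.
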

Keeping the notation $ \alpha = \min(Pr[f=1],Pr[f=-1]) $, it is easy to see that $\alpha$ can be ``replaced'' by the variance, losing only a constant multiplicative factor:
\[\Var(f) = \Exp(f^2) - \Exp(f)^2 = 1 - (1 - 2 \alpha)^2 = 4\alpha - 4\alpha ^2 = 4 \alpha (1 - \alpha)\]
Since $\frac{1}{2} \leq 1 - \alpha \leq 1$, we have:
\[2\alpha \leq \Var(f) \leq 4\alpha \]

\begin{lemma} \label{Edge Isoperimetric Inequality 2} 
Let $f$ be a Boolean function, then $\frac{1}{2} \cdot \Var(f) \cdot \log \frac{1}{\Var(f)} \leq \Inf[f] $.
\end{lemma}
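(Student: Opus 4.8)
The plan is to reduce the statement directly to Fact~\ref{Edge Isoperimetric Inequality 1} using the two-sided comparison $2\alpha \leq \Var(f) \leq 4\alpha$ already established above, where $\alpha = \min(\Pr[f=1],\Pr[f=-1])$. If $f$ is constant then $\Var(f)=0$ and the claimed inequality reads $0 \leq \Inf[f]$, which is trivial; so I would assume $\alpha > 0$ from now on.

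First I would record that $\Var(f) = 4\alpha(1-\alpha) \leq 1$, so that $\log \frac{1}{\Var(f)} \geq 0$; likewise $\alpha \leq \frac12$ gives $\log\frac1\alpha \geq 0$. Next, from $\Var(f) \leq 4\alpha$ I get $\tfrac12 \Var(f) \leq 2\alpha$, and from $\Var(f) \geq 2\alpha$ I get $\frac{1}{\Var(f)} \leq \frac{1}{2\alpha} \leq \frac1\alpha$, hence $\log\frac{1}{\Var(f)} \leq \log\frac1\alpha$. Since all four quantities involved are nonnegative, I may multiply these two inequalities to obtain
\[
\tfrac12 \cdot \Var(f) \cdot \log \tfrac{1}{\Var(f)} \;\leq\; 2\alpha \log \tfrac1\alpha .
\]
Finally, applying Fact~\ref{Edge Isoperimetric Inequality 1}, namely $2\alpha \log\frac1\alpha \leq \Inf[f]$, chains onto the right-hand side and finishes the argument.

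There is essentially no obstacle here; the only point requiring a moment of care is checking the sign conditions ($\Var(f)\leq 1$ and $\alpha\leq 1$) so that the monotone comparisons can legitimately be combined by multiplication, and handling the degenerate constant-function case separately. The constant $\tfrac12$ is exactly what falls out of the factor-$2$ slack between $\Var(f)$ and $2\alpha$.
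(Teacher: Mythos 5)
Your proof is correct and follows essentially the same route as the paper: both arguments combine Fact~\ref{Edge Isoperimetric Inequality 1} with the comparison $2\alpha \leq \Var(f) \leq 4\alpha$, using $\Var(f)\leq 4\alpha$ on the linear factor and $\Var(f)\geq 2\alpha$ on the logarithmic factor. Your write-up is just slightly more explicit about the sign conditions needed to multiply the two inequalities and about the constant-function case.
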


\begin{proof}
\[ \Inf[f] \geq 2 \alpha \log \frac{1}{\alpha}
\geq \frac{1}{2} \Var(f) \cdot \log \frac{2}{\Var(f)}
 \geq  \frac{1}{2} \Var(f) \cdot \log \frac{1}{\Var(f)} \]
Where the second inequality is due to  $2\alpha \leq \Var(f) \leq 4 \alpha$.
\end{proof}

\begin{lemma} \label{Edge Isoperimetric Inequality 3} 
Let $f$ be a Boolean function  with $\Inf[f] < 1$, then $ \Var(f) \leq 2 \cdot \frac{\Inf[f]}{\log \frac{1}{\Inf[f]}} $.
\end{lemma}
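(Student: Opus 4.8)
The plan is to invert the bound from Lemma \ref{Edge Isoperimetric Inequality 2} using the crude isoperimetric inequality $\Var(f) \le \Inf[f]$ as the starting point. Write, for brevity, $v = \Var(f)$ and $I = \Inf[f]$. The hypothesis $I < 1$ together with $v \le I$ (the simplest form of the edge isoperimetric inequality recalled at the start of this subsection) gives $0 < v \le I < 1$, so that both $\log \frac{1}{v}$ and $\log \frac{1}{I}$ are positive; and since $x \mapsto \log \frac{1}{x}$ is decreasing on $(0,1)$, we obtain $\log \frac{1}{v} \ge \log \frac{1}{I} > 0$.

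Next I would feed this monotonicity into Lemma \ref{Edge Isoperimetric Inequality 2}, which asserts $\frac{1}{2} v \log \frac{1}{v} \le I$. Replacing $\log \frac{1}{v}$ on the left-hand side by the smaller nonnegative quantity $\log \frac{1}{I}$ can only decrease it, so $\frac{1}{2} v \log \frac{1}{I} \le \frac{1}{2} v \log \frac{1}{v} \le I$. Dividing through by the positive quantity $\frac{1}{2}\log\frac{1}{I}$ then yields $v \le 2 I / \log \frac{1}{I}$, which is precisely the claimed inequality.

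There is essentially no real obstacle here; the only point requiring care is to verify that every logarithm appearing is strictly positive, so that the final division is valid and preserves the inequality. This is exactly what the hypothesis $\Inf[f] < 1$, combined with $\Var(f) \le \Inf[f]$, secures. Without the assumption $\Inf[f] < 1$ the argument breaks down, since then $\log \frac{1}{\Inf[f]}$ may vanish or be negative and the division is no longer meaningful, which explains why the lemma is stated only in that regime.
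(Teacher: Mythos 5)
Your proof is correct, but it takes a genuinely different route from the paper's. The paper starts from the intermediate bound $\Inf[f] \geq \frac{1}{2}\Var(f)\log\frac{2}{\Var(f)}$ established inside the proof of Lemma \ref{Edge Isoperimetric Inequality 2}, substitutes it into \emph{both} the numerator and the denominator of $\Inf[f]/\log\frac{1}{\Inf[f]}$ (implicitly using that $x \mapsto x/\log\frac{1}{x}$ is increasing on $(0,1)$), and then has to show that the resulting ratio $\log\frac{2}{\Var(f)} \big/ \bigl(\log\frac{2}{\Var(f)} - \log\log\frac{2}{\Var(f)}\bigr)$ is at least $1$. You instead use the crude inequality $\Var(f) \leq \Inf[f]$ only inside the logarithm, replacing $\log\frac{1}{\Var(f)}$ by the smaller positive quantity $\log\frac{1}{\Inf[f]}$, and then invoke Lemma \ref{Edge Isoperimetric Inequality 2} as a black box. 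Your argument is shorter, avoids both the monotonicity step and the nested-logarithm manipulation, and makes completely explicit where the hypothesis $\Inf[f] < 1$ is needed; the paper's version is self-contained in the sense that it reuses its own intermediate estimate rather than appealing to the separate fact $\Var(f) \leq \Inf[f]$, but that fact is stated at the top of the same subsection, so nothing is lost. (One pedantic remark: when $\Var(f) = 0$ the quantity $\log\frac{1}{\Var(f)}$ is undefined and the claim holds trivially; your phrase ``$0 < v$'' silently excludes this degenerate case, as does the paper.)
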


\begin{proof}
The requirement that $\Inf[f] < 1$ is necessary, or else the term $\log \frac{1}{\Inf[f]}$ is non-positive.
We derive the new inequality from the proof of Lemma \ref{Edge Isoperimetric Inequality 2}:
\[ \frac{\Inf[f]}{\log \frac{1}{\Inf[f]}}
\geq \frac{\frac{1}{2} \Var(f) \cdot \log \frac{2}{\Var(f)}}{\log \left( \frac{2}{\Var(f) \cdot \log \frac{2}{\Var(f)}}\right)}
\geq \frac{1}{2} \Var(f) \cdot \frac{ \log \frac{2}{\Var(f)}}{\log \frac{2}{\Var(f)} - \log \log \frac{2}{\Var(f)}}
\geq \frac{1}{2} \Var(f)
\]

\end{proof}

\subsection{Tensorization of FEI}
Let $ f \colon \{-1,1\}^n \to \{-1,1\}$, $g \colon \{-1,1\}^m \to \{-1,1\}$ be two Boolean functions.
Define $h = f \otimes g$ to be their \textit{tensor product}: $ h \colon \{-1,1\}^{n+m} \to \{-1,1\}$, and 
$h(\textbf{x}, \textbf{y}) = f(\textbf{x}) \cdot g(\textbf{y})$. 

In \cite{Kal07} it has been noted that FEI tensorizes in the following sense:
\begin{fact} \label{I H and N Tensorize}
For Boolean functions $f, g$ and $h = f \otimes g$:
\begin{itemize}
\item $\Inf[h] = \Inf[f] + \Inf[g]$
\item $\Ent[h] = \Ent[f] + \Ent[g]$
\item $n_h = n_f + n_g$, where $n_f$ denotes the number of variables of the function $f$.
\end{itemize}
\end{fact}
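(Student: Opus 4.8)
The plan is to first pin down the Fourier spectrum of the tensor product and observe that it is exactly the product of the two spectral distributions, after which all three items are routine. Identify the variables of $g$ with the coordinates $n+1,\dots,n+m$, so that for $S \subseteq [n]$ and $T \subseteq \{n+1,\dots,n+m\}$ we have $\chi_{S \cup T}(\mathbf{x},\mathbf{y}) = \chi_S(\mathbf{x})\,\chi_T(\mathbf{y})$. Multiplying the Fourier expansions of $f$ and $g$ then gives
\[ h(\mathbf{x},\mathbf{y}) = \Big(\sum_{S \subseteq [n]} \widehat{f}(S)\chi_S(\mathbf{x})\Big)\Big(\sum_{T} \widehat{g}(T)\chi_T(\mathbf{y})\Big) = \sum_{S,T} \widehat{f}(S)\widehat{g}(T)\,\chi_{S \cup T}(\mathbf{x},\mathbf{y}), \]
and since every subset of $[n+m]$ decomposes uniquely as such a disjoint union $S \cup T$, uniqueness of the Fourier expansion yields $\widehat{h}(S \cup T) = \widehat{f}(S)\widehat{g}(T)$. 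Squaring, $\widehat{h}(S \cup T)^2 = \widehat{f}(S)^2\,\widehat{g}(T)^2$, i.e. $\mathcal{S}_h = \mathcal{S}_f \times \mathcal{S}_g$ is a product distribution (in particular $\sum_{U \subseteq [n+m]} \widehat{h}(U)^2 = 1$, re-confirming that $h$ is well-normalized).

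With this in hand the three bullets follow immediately. The statement $n_h = n + m = n_f + n_g$ is immediate from the definition of the tensor product. For the influence, if $(S,T) \sim \mathcal{S}_h$ then $|S \cup T| = |S| + |T|$, so by linearity of expectation $\Inf[h] = \Exp[|S| + |T|] = \Exp_{S \sim \mathcal{S}_f}[|S|] + \Exp_{T \sim \mathcal{S}_g}[|T|] = \Inf[f] + \Inf[g]$. For the entropy, the Shannon entropy of a product distribution is the sum of the Shannon entropies of its factors; applying this to $\mathcal{S}_h = \mathcal{S}_f \times \mathcal{S}_g$ gives $\Ent[h] = \Ent[f] + \Ent[g]$. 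Alternatively, one expands $\log \frac{1}{\widehat{h}(S \cup T)^2} = \log\frac{1}{\widehat{f}(S)^2} + \log\frac{1}{\widehat{g}(T)^2}$ directly inside the defining double sum and splits it using $\sum_T \widehat{g}(T)^2 = 1$ and $\sum_S \widehat{f}(S)^2 = 1$.

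I do not expect any real obstacle here: the only points requiring care are the bookkeeping of the relabeling, so that $[n+m]$ splits cleanly as the disjoint union $[n] \sqcup \{n+1,\dots,n+m\}$, and the (standard) additivity of Shannon entropy over independent coordinates. An equivalent, fully combinatorial route for the influence item is available as a sanity check: for $i \le n$ one has $\Inf_i[h] = \Inf_i[f]$ and for $j \le m$ one has $\Inf_{n+j}[h] = \Inf_j[g]$, since $h(\mathbf{x},\mathbf{y}) \neq h(\mathbf{x}^{\oplus i},\mathbf{y})$ iff $f(\mathbf{x}) \neq f(\mathbf{x}^{\oplus i})$ (using $g(\mathbf{y}) = \pm 1$), and summing over all coordinates gives $\Inf[h] = \Inf[f] + \Inf[g]$ again.
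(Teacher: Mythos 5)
Your proof is correct and complete: the key observation that $\widehat{h}(S \cup T) = \widehat{f}(S)\,\widehat{g}(T)$, hence that $\mathcal{S}_h$ is the product distribution $\mathcal{S}_f \times \mathcal{S}_g$, immediately gives additivity of both $\Exp[|S|]$ and the Shannon entropy. The paper itself states this fact without proof (attributing it to \cite{Kal07}), and your argument is exactly the standard one that is intended.
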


We call $h = f \otimes f$ the self-tensorization of $f$. As stated in the following lemma, the tensorization technique allows us to deduce FEI for a class closed under self-tensorization (i.e., for all $f \in \mathcal{C}$ we have $f \otimes f \in \mathcal{C}$)
by proving FEI for that class with a sub-linear additive term. Obviously, the class of all Boolean functions is closed under self-tensorization.
\begin{lemma}
Suppose we have a class of Boolean functions $\mathcal{H}$ that is closed under self-tensorization, a constant $C > 0$ and some function $s(n) = o(n)$. If $\Ent[f] \leq C \cdot \Inf[f] + s(n)$ for all $f \in \mathcal{H}$, then $\Ent[f] \leq C \cdot \Inf[f]$ for all $f \in \mathcal{H}$.
\end{lemma}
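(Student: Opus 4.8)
The plan is to exploit the three tensorization identities from Fact~\ref{I H and N Tensorize} and push the hypothesis to its limit by iterating self-tensorization. First I would fix $f \in \mathcal{H}$ with $n = n_f$ variables, and for each $k \geq 1$ consider the $k$-fold self-tensor $f^{\otimes k} = f \otimes f \otimes \cdots \otimes f$, which lies in $\mathcal{H}$ because $\mathcal{H}$ is closed under self-tensorization (for $k$ not a power of two this needs a one-line remark: either note that closure under $f \mapsto f \otimes f$ together with the obvious associativity/commutativity of $\otimes$ suffices to build $f^{\otimes k}$ for every $k$ by tensoring already-constructed powers, or simply restrict attention to $k = 2^j$, which is all that is needed). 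By Fact~\ref{I H and N Tensorize} we get $\Inf[f^{\otimes k}] = k\,\Inf[f]$, $\Ent[f^{\otimes k}] = k\,\Ent[f]$, and $n_{f^{\otimes k}} = k n$.

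Next I would apply the assumed inequality to $f^{\otimes k}$: since $f^{\otimes k} \in \mathcal{H}$,
\[ k\,\Ent[f] = \Ent[f^{\otimes k}] \leq C \cdot \Inf[f^{\otimes k}] + s(n_{f^{\otimes k}}) = C k \,\Inf[f] + s(kn). \]
Dividing through by $k$ gives $\Ent[f] \leq C\,\Inf[f] + \frac{s(kn)}{k}$. Now I would let $k \to \infty$: since $s(m) = o(m)$, we have $\frac{s(kn)}{k} = n \cdot \frac{s(kn)}{kn} \to 0$ as $k \to \infty$ (with $n$ fixed). Hence the error term vanishes and $\Ent[f] \leq C\,\Inf[f]$, as desired.

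The argument is essentially a limiting/amortization trick and I do not anticipate a serious mathematical obstacle; the only point requiring minor care is the bookkeeping on which self-tensor powers are actually guaranteed to lie in $\mathcal{H}$ (the definition only promises $f \otimes f \in \mathcal{H}$, not $f \otimes g$ for distinct $f,g$), which is why restricting to $k = 2^j$ — so that $f^{\otimes 2^{j+1}} = f^{\otimes 2^j} \otimes f^{\otimes 2^j}$ is itself a self-tensorization of a member of $\mathcal{H}$ — is the cleanest route and still lets $k \to \infty$ along a subsequence, which is all the limit requires. One should also note that $s$ being $o(n)$ is used only in the form $s(kn)/k \to 0$ for each fixed $n$, which is immediate.
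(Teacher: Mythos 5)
Your proof is correct and follows essentially the same route as the paper: iterate self-tensorization to form $f^{\otimes 2^j}$, apply the hypothesis, divide by the tensor power, and let the sublinear error term $s(2^j n)/2^j$ vanish. Your remark about restricting to powers of two to stay within the closure guarantee is exactly the construction the paper uses ($f_1 = f$, $f_{i+1} = f_i \otimes f_i$).
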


\begin{proof}
Let $f \in \mathcal{H}$ be a Boolean function on $n$ variables. Define $f_1 = f$, and $f_{i+1} = f_i \otimes f_i$.
$f_k \in \mathcal{H}$ since $\mathcal{H}$ is closed under self-tensorization, and therefore we have 
\[ \Ent[f_k] \leq C \cdot \Inf[f_k] + s(n_{f_k}) \]
By Fact \ref{I H and N Tensorize}:
\[ 2^k \cdot \Ent[f] = \Ent[f_k] \leq C \cdot \Inf[f_k] + s(n_{f_k}) = C \cdot 2^k \cdot \Inf[f] + s(2^k \cdot n) \]
Dividing by $2^k$, we get:
\[ \Ent[f] \leq C \cdot \Inf[f] + \frac{s(2^k \cdot n)}{2^k} \]
Fixing $n$ and taking $k$ to infinity, we get $\frac{s(2^k \cdot n)}{2^k} \xrightarrow[k \rightarrow \infty]{} 0$, and therefore  $\Ent[f] \leq C \cdot \Inf[f]$.
\end{proof}

Additionally, we note that the min-entropy tensorizes as well: for $f,g$ and $h = f \otimes g$ as stated above, $\Ent_{\infty}[h] = \Ent_{\infty}[f] + \Ent_{\infty}[g]$, so a similar proof will suffice for an analogous result.

\begin{lemma} \label{FMEI tenzorizes}
Suppose we have a class of Boolean functions $\mathcal{H}$ that is closed under self-tensorization, a constant $C > 0$ and some function $s(n) = o(n)$. If $\Ent_{\infty}[f] \leq C \cdot \Inf[f] + s(n)$ for all $f \in \mathcal{H}$, then $\Ent_{\infty}[f] \leq C \cdot \Inf[f]$ for all $f \in \mathcal{H}$.
\end{lemma}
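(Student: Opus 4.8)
The plan is to follow the proof of the preceding Shannon-entropy tensorization lemma essentially verbatim; the only new ingredient is the multiplicativity of min-entropy under tensor products, namely $\Ent_{\infty}[f \otimes g] = \Ent_{\infty}[f] + \Ent_{\infty}[g]$. First I would verify this identity. Writing $h = f \otimes g$ with $f$ depending on variables $x \in \{-1,1\}^n$ and $g$ on the disjoint set of variables $y \in \{-1,1\}^m$, the Fourier expansion factorizes: $\widehat{h}(S \cup T) = \widehat{f}(S)\,\widehat{g}(T)$ for $S \subseteq [n]$ and $T \subseteq [m]$, and every subset of the combined variable set arises uniquely as such a disjoint union $S \cup T$. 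Consequently $\max_{U} \widehat{h}(U)^2 = \left(\max_{S} \widehat{f}(S)^2\right)\left(\max_{T} \widehat{g}(T)^2\right)$, and taking the logarithm of the reciprocal of both sides gives $\Ent_{\infty}[h] = \Ent_{\infty}[f] + \Ent_{\infty}[g]$. Combined with the already-stated additivity of $\Inf$ and of the number of variables under tensorization (Fact \ref{I H and N Tensorize}), this provides all the structural input the argument needs.

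Next I would run the iteration as in the Shannon case. Set $f_1 = f$ and $f_{i+1} = f_i \otimes f_i$; since $\mathcal{H}$ is closed under self-tensorization, every $f_k$ lies in $\mathcal{H}$, so the hypothesis yields $\Ent_{\infty}[f_k] \le C \cdot \Inf[f_k] + s(n_{f_k})$. Applying the three additivity facts repeatedly, the quantities $\Ent_{\infty}[f_k]$, $\Inf[f_k]$ and $n_{f_k}$ all scale from their $k=1$ values by the same factor $2^{k-1}$, i.e. $\Ent_{\infty}[f_k] = 2^{k-1}\Ent_{\infty}[f]$, $\Inf[f_k] = 2^{k-1}\Inf[f]$, and $n_{f_k} = 2^{k-1} n$. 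Dividing the inequality by $2^{k-1}$ gives $\Ent_{\infty}[f] \le C \cdot \Inf[f] + \frac{s(2^{k-1} n)}{2^{k-1}}$. Fixing $n$ and letting $k \to \infty$, the assumption $s(n) = o(n)$ forces $\frac{s(2^{k-1} n)}{2^{k-1}} \to 0$, and we conclude $\Ent_{\infty}[f] \le C \cdot \Inf[f]$, as desired.

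I do not anticipate a genuine obstacle here, since the argument is structurally identical to the preceding lemma; the one point worth stating carefully is the min-entropy tensorization identity, and in particular the claim that the maximizing Fourier coefficient of a tensor product is the product of the maximizing coefficients of the two factors. This rests on the disjointness of the variable sets, which forces the Fourier supports to combine by disjoint union so that no cancellation or overlap can occur. Everything after that is the same limiting argument used for Shannon entropy, so in fact one could cite the identity asserted in the remark preceding the statement and pass directly to the iteration.
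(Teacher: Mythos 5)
Your proposal is correct and is exactly the argument the paper intends: the paper omits an explicit proof, simply remarking that min-entropy tensorizes ($\Ent_{\infty}[f\otimes g]=\Ent_{\infty}[f]+\Ent_{\infty}[g]$) and that the proof of the preceding Shannon-entropy lemma carries over verbatim. Your verification of the min-entropy identity via $\widehat{h}(S\cup T)=\widehat{f}(S)\,\widehat{g}(T)$ and the subsequent limiting argument are both sound.
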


\section{FEI for Low Influence Functions}

In this section we prove FEI for the class of functions with exponentially low influence (in $n$), and then show that improving this will imply Conjecture \ref{FEI}. To state this formally, we introduce some notations and consider the following classes of functions:
\begin{itemize}
	\item The class of Boolean functions on $n$ variables $\mathbf{BF}_n = \lbrace f \colon \{-1,1\}^n \to \{-1,1\} \rbrace $, and the class of all Boolean functions $\mathbf{BF} = \bigcup\limits_{n=1}^{\infty} \mathbf{BF}_n$.
	\item The class of functions with exponentially-low influence. For every constant $c > 0$, define $\mathbf{ELI}_c = \bigcup\limits_{n=1}^{\infty} \left\lbrace f \in \mathbf{BF}_n : \Inf[f] \leq 2^{-cn} \right\rbrace$.
	\item The class of functions with ``almost'' exponentially-low influence. For every function $s: \mathbb{N} \to \mathbb{R}$ such that $s(n) = o(n)$, define $\mathbf{AELI}_s = \bigcup\limits_{n=1}^{\infty} \left\lbrace f \in \mathbf{BF}_n : \Inf[f] \leq 2^{-s(n)} \right\rbrace$.
	\item The class of functions with influence larger than $1$, $\mathbf{IL1} = \bigcup\limits_{n=1}^{\infty} \left\lbrace f \in \mathbf{BF}_n : \Inf[f] \geq 1 \right\rbrace$.
\end{itemize}

Formally, we show that for any $c \in (0,1) $ FEI holds for the class $\mathbf{ELI}_c$ with constant $C = O(\frac{1}{c})$. We then show that improving on this result by proving FEI for any $\mathbf{AELI}_s$ will actually imply FEI for the class $\mathbf{IL1}$. As a simple corollary of Theorem \ref{FEI linear entropy is tight} that we will later prove, this will imply FEI for $\mathbf{BF}$, i.e.\ Conjecture \ref{FEI}.

\subsection{Proving FEI for ELI}

We restate and prove Theorem \ref{FEI low influence} as follows:
\begin{theorem}\label{FEI-ELI}
For all $f \in \mathrm{ELI}_c$, $\Ent[f] \leq 4 \cdot \frac{c+1}{c} \cdot {\Inf[f]} $.
\end{theorem}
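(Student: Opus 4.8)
The plan is to bound the spectral entropy directly using the structure forced by an exponentially small total influence. The first observation is that if $\Inf[f] \leq 2^{-cn}$ is very small, then $f$ must be extremely close to constant: since $\Inf[f] = \sum_k W^k[f] \cdot k \geq W^{\geq 1}[f]$, we get $W^{\geq 1}[f] \leq 2^{-cn}$, so essentially all the Fourier weight sits on the empty set, $\widehat{f}(\emptyset)^2 \geq 1 - 2^{-cn}$. In particular $\Var(f) = 1 - \widehat{f}(\emptyset)^2 \leq 2^{-cn}$, which is exceedingly small. The idea is then to split $\Ent[f] = \widehat{f}(\emptyset)^2 \log\frac{1}{\widehat{f}(\emptyset)^2} + \sum_{S \neq \emptyset} \widehat{f}(S)^2 \log\frac{1}{\widehat{f}(S)^2}$ and control each piece.

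For the first (empty-set) term, $\widehat{f}(\emptyset)^2 \log\frac{1}{\widehat{f}(\emptyset)^2} \leq (1-\widehat{f}(\emptyset)^2) \cdot \frac{1}{\ln 2} \cdot \frac{\widehat{f}(\emptyset)^2}{1} $-type estimates give something like $O(\Var(f))$, which by the edge isoperimetric inequality (Lemma~\ref{Edge Isoperimetric Inequality 3}) is at most $O(\Inf[f]/\log\frac{1}{\Inf[f]}) = O(\Inf[f])$ since $\Inf[f] < 1$. For the remaining terms, I would use the fact that every nonempty $S$ has $|S| \geq 1$ and that the total mass on nonempty sets is $p := W^{\geq 1}[f] \leq \Inf[f] \leq 2^{-cn}$. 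Writing $q_S = \widehat{f}(S)^2/p$ as a conditional distribution on nonempty sets, $\sum_{S\neq\emptyset}\widehat{f}(S)^2\log\frac{1}{\widehat{f}(S)^2} = p\log\frac1p + p \cdot \Ent[\{q_S\}]$. The conditional entropy is at most $\log(2^n - 1) \leq n$ since there are fewer than $2^n$ nonempty subsets. Hence this block contributes at most $p\log\frac1p + pn \leq \Inf[f]\log\frac{1}{\Inf[f]} + n\cdot\Inf[f]$; using $\Inf[f] \leq 2^{-cn}$, the key gain is $\log\frac{1}{\Inf[f]} \geq cn$, so actually we can afford to absorb $pn$: we have $pn = \Inf[f]\cdot n \leq \Inf[f]\cdot\frac1c\log\frac{1}{\Inf[f]}$, and $\Inf[f]\log\frac1{\Inf[f]} \leq \Inf[f]\cdot n$ is not directly bounded by a constant times $\Inf[f]$ — so instead I would more carefully keep $p \leq \Inf[f]$ and relate $\log\frac1p$ to $n$ via $p \leq 2^{-cn}$, giving $\log\frac1p \geq cn$, equivalently $n \leq \frac1c\log\frac1p \leq \frac1c \log\frac{1}{\Inf[f]}$, and then bound $p\log\frac1p + pn \leq \Inf[f]\log\frac1{\Inf[f]} + \frac1c\Inf[f]\log\frac1{\Inf[f]} = \frac{c+1}{c}\Inf[f]\log\frac1{\Inf[f]}$. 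This still has the unwanted $\log\frac1{\Inf[f]}$ factor, so the real move must use Lemma~\ref{Edge Isoperimetric Inequality 3} or Lemma~\ref{Edge Isoperimetric Inequality 2} to trade $\Var(f)\log\frac1{\Var(f)}$ against $\Inf[f]$ and note that $p \leq \Var(f) \leq 2\Inf[f]/\log\frac1{\Inf[f]}$, killing the logarithm.

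Assembling: $\Ent[f] \leq \widehat{f}(\emptyset)^2\log\frac1{\widehat{f}(\emptyset)^2} + p\log\frac1p + pn$. Using $\widehat{f}(\emptyset)^2\log\frac1{\widehat{f}(\emptyset)^2} \leq \frac{1}{\ln 2}(1-\widehat{f}(\emptyset)^2) = \frac{1}{\ln 2}\Var(f)$ (from $\log t \leq \frac{t-1}{\ln 2}$), and $p = \Var(f)$, $pn \leq \Var(f)\cdot\frac1c\log\frac1{\Var(f)} \cdot \frac{1}{\text{(adjust)}}$, and finally invoking Lemma~\ref{Edge Isoperimetric Inequality 2} in the form $\Var(f)\log\frac1{\Var(f)} \leq 2\Inf[f]$ together with $\Var(f)\leq\Inf[f]$, every term becomes a constant multiple of $\Inf[f]$, and a careful accounting should yield the stated constant $4\cdot\frac{c+1}{c}$. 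The main obstacle is the bookkeeping: making sure the $p\log\frac1p$ and $pn$ contributions are both charged against $\Inf[f]$ with the logarithmic factor genuinely cancelled (this is exactly where $\Inf[f]\leq 2^{-cn}$, i.e. $n \leq \frac1c\log\frac1{\Inf[f]}$, is essential and where the $\frac{c+1}{c}$ dependence on $c$ enters), and getting the constants to collapse to $4\cdot\frac{c+1}{c}$ rather than something larger.
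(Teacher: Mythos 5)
Your decomposition into the empty-set term plus $p\log\frac{1}{p} + p\cdot(\text{conditional entropy}\leq n)$ is exactly the paper's entropy-partition step with $\mathscr{F}=\{\emptyset\}$, and you close it the same way: Lemma~\ref{Edge Isoperimetric Inequality 2} (and/or Lemma~\ref{Edge Isoperimetric Inequality 3}) to convert $\Var(f)\log\frac{1}{\Var(f)}$ into $O(\Inf[f])$, and the hypothesis $\Inf[f]\leq 2^{-cn}$ to trade $n$ for $\frac{1}{c}\log\frac{1}{\Inf[f]}$, so this is correct and essentially identical to the paper's proof. The only cosmetic difference is that the paper separately dispatches the corner case $\Inf[f]\geq\frac{1}{2}$ via Weak FEI (because it bounds the binary-entropy term by $2\Var(f)\log\frac{1}{\Var(f)}$, valid only for $\Var(f)\leq\frac{1}{2}$), whereas your bound $\widehat{f}(\emptyset)^2\log\frac{1}{\widehat{f}(\emptyset)^2}\leq\frac{1}{\ln 2}\Var(f)$ makes that case split unnecessary.
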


\begin{proof}
Let $f \in \mathrm{ELI}_c$, and denote $\Inf[f] = 2^{-c'n} \leq 2^{-cn}$, where $c' \geq c$.
We use the concentration method presented by \cite{CKLS16} to show 
$\Ent[f] \leq 4 \cdot \frac{c+1}{c} \cdot {\Inf[f]}$. We partition the Fourier coefficients to a family $\mathscr{F} = \{\widehat{f}(\emptyset)^2 \}$, and its complement $\mathscr{F}^C$. These families have Fourier weight of $1-\Var(f)$ and $\Var(f)$ respectively. By a known formula of entropy partition, we have:

\[ \Ent[f] = \Var(f) \cdot \Ent[\mathscr{F}^C] + (1 - \Var(f)) \cdot \Ent[\mathscr{F}] + h(\Var(f)) \]
where the entropies of the families are of the adequately normalized distributions and $h(p) := p \log(\frac{1}{p}) + (1-p) \log(\frac{1}{1-p})$ is the binary entropy function.
Note that $\Ent[\mathscr{F}] = 0$, since $\mathscr{F}$ contains only one element.
Also note that $\Ent[\mathscr{F}^C] < n$, as $| \mathscr{F}^C | < 2^n$. Therefore we have:

\begin{equation} 
 \label{eq_h_to_var} \Ent[f] \leq \Var(f) \cdot n + h(\Var(f))
\end{equation}

If $\Inf[f] \geq \frac{1}{2}$ then by our assumption it follows that $cn \leq 1$. From this and a from Weak FEI (Lemma \ref{Weak FEI}) we get $\Ent[f] \leq n \cdot \Inf[f] \leq \frac{1}{c} \cdot \Inf[f]$ so we are done.
Otherwise, we can assume $\Inf[f] < \frac{1}{2}$. To bound the second term of inequality \ref{eq_h_to_var}, $h(\Var(f))$, we note that for $p \leq \frac{1}{2}$, we have $p \log \frac{1}{p} \geq (1-p) \log \frac{1}{1-p}$, so $h(p) \leq 2 p \log \frac{1}{p} $. Acknowledging the fact that $\Var(f) \leq \Inf[f] \leq \frac{1}{2}$ and applying Lemma \ref{Edge Isoperimetric Inequality 2}, we have:

\begin{equation} 
 \label{eq_h_to_var_second} h(\Var(f)) \leq 2 \cdot \Var(f) \cdot \log \frac{1}{\Var(f)} \leq 4 \cdot \Inf[f]
\end{equation}

We can bound the first term of inequality \ref{eq_h_to_var} by applying Lemma \ref{Edge Isoperimetric Inequality 3}:
\begin{equation} 
 \label{eq_h_to_var_first} \Var(f) \cdot n \leq
 2 \cdot \frac{\Inf[f]}{\log \frac{1}{\Inf[f]}} \cdot n \leq 2 \cdot \frac{\Inf[f]}{c'n} \cdot n = 
\frac{2}{c'} \cdot \Inf[f] \leq \frac{2}{c} \cdot \Inf[f]
\end{equation}

Inserting \ref{eq_h_to_var_second}, \ref{eq_h_to_var_first} into \ref{eq_h_to_var} we obtain the wanted result:
\[ \Ent[f] \leq \frac{2}{c} \cdot \Inf[f] + 4 \cdot \Inf[f] = 4 \cdot \frac{c+1}{c} \cdot \Inf[f] \]
\end{proof}

There is also an alternative proof for theorem \ref{FEI-ELI} using the protocol method of \cite{WWW14}: Intuitively, consider the following trivial protocol: if the sampled $S$ is non-empty, send $n$ bits, where the $i$'th bit is set to $1$ if $i \in S$ and otherwise is $0$. If $S=\emptyset$ is sampled, the protocol sends the empty string. By Lemma \ref{Edge Isoperimetric Inequality 3} we have $\Var(f) \leq 2 \frac{\Inf[f]}{cn}$, so the average cost of this protocol will be: $ \Var(f) \cdot n = \frac{2}{c} \cdot \Inf[f]$ which also gives us FEI for $\mathrm{ELI}_c$ with constant $\Theta(\frac{1}{c})$ .


\subsection{Proving FEI for AELI Implies FEI Completely}

\begin{lemma}
\label{AELI implies IL1}
Let $s: \mathbb{N} \to \mathbb{R}$ such that $s(n) = o(n)$. Suppose that FEI holds for some class $\mathbf{AELI}_s$ with universal constant $C$, Then FEI holds for the class $\mathbf{IL1}$ with constant $16 \cdot C$.
\end{lemma}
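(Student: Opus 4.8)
\emph{Plan.} I would combine the ``padding'' construction of \cite{WWW14} with self-tensorization. First note that $\mathbf{IL1}$ is closed under self-tensorization, since $\Inf[f\otimes f]=2\Inf[f]\ge 1$ whenever $\Inf[f]\ge 1$ (Fact~\ref{I H and N Tensorize}). Hence, by the self-tensorization lemma for the Shannon entropy proved earlier (if $\Ent\le C\cdot\Inf+o(n)$ on a class closed under self-tensorization, then $\Ent\le C\cdot\Inf$ on it), it suffices to establish the weaker statement
\[ \Ent[f]\ \le\ C\cdot\Inf[f]\ +\ t(n)\qquad\text{for all }f\in\mathbf{IL1}\text{ on }n\text{ variables,} \]
for some fixed $t(n)=o(n)$; the lemma then upgrades it to $\Ent[f]\le C\cdot\Inf[f]$ for all $f\in\mathbf{IL1}$, which is stronger than the asserted bound with constant $16C$.

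\emph{The padded function.} Fix $f\in\mathbf{IL1}$ on $n$ variables, let $b\in\{-1,1\}$ be a value attained by $f$ on at least half of its inputs, and for a parameter $m$ define $g\colon\{-1,1\}^{n+m}\to\{-1,1\}$ by $g(x,y)=f(x)$ if $y=(1,\dots,1)$ and $g(x,y)=b$ otherwise. Using $\mathbf{1}[y=(1,\dots,1)]=2^{-m}\sum_{T\subseteq[m]}\chi_T(y)$ and expanding, the Fourier coefficients of $g$ are: $\widehat g(S,T)=2^{-m}\widehat f(S)$ for every nonempty $S\subseteq[n]$ and \emph{every} $T\subseteq[m]$; $\widehat g(\emptyset,T)=\pm 2^{-m}(1-|\mathbf{E}[f]|)$ for nonempty $T$; and $\widehat g(\emptyset,\emptyset)=1-O(2^{-m})$ (Parseval is readily checked). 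Since each $\widehat f(S)$ is thus ``copied'' $2^m$ times at weight $2^{-2m}\widehat f(S)^2$, and since $b$ is the majority value of $f$ so that $1-|\mathbf{E}[f]|\le 1$, summing over these coefficients gives, after routine algebra,
\[ \Inf[g]\ \le\ 2^{-m}\bigl(\Inf[f]+m\bigr),\qquad \Ent[g]\ \ge\ 2^{-m}\bigl(\Ent[f]-1\bigr), \]
where the $-1$ absorbs the term $\mathbf{E}[f]^2\log(1/\mathbf{E}[f]^2)\le 1$ and the ``$+m$'' bounds the influence injected by the $\pm 2^{-m}$--junk on the padding block.

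\emph{Choice of $m$ and conclusion.} Let $m=m(n)$ be the least positive integer with $m-\log(n+m)\ge s(n+m)$; such $m$ exists because $s(N)=o(N)$, and in fact $m(n)=o(n)$: if $m(n)>\delta n$ along an infinite set of $n$, then plugging $m=\lfloor\delta n/2\rfloor$ into the defining inequality forces $s(N)=\Omega(\delta n)=\Omega(N)$ for $N=n+\lfloor\delta n/2\rfloor\to\infty$, contradicting $s=o(\cdot)$. With this $m$, using $\Inf[f]\le n$ we get $\Inf[g]\le 2^{-m}(n+m)\le 2^{-s(n+m)}$, so $g$ is a Boolean function on $n+m$ variables lying in $\mathbf{AELI}_s$. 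Applying the hypothesis to $g$ gives $\Ent[g]\le C\cdot\Inf[g]$, and substituting the two displayed estimates and cancelling $2^{-m}$ yields
\[ \Ent[f]-1\ \le\ C\bigl(\Inf[f]+m(n)\bigr),\qquad\text{i.e.}\qquad \Ent[f]\ \le\ C\cdot\Inf[f]+t(n),\quad t(n):=C\,m(n)+1, \]
with $t(n)=o(n)$. By the self-tensorization lemma this completes the proof.

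\emph{Main obstacle.} The delicate part is calibrating $m$: it must be large enough to push $\Inf[g]\approx 2^{-m}\Inf[f]$ below the \emph{exponentially} small threshold $2^{-s(n+m)}$, yet the influence (and entropy) contributed by the padded block is of order $m\cdot 2^{-m}$, not $\Inf[f]\cdot 2^{-m}$, so after rescaling by $2^m$ it is an additive error of order $m$. The argument works precisely because $s=o(n)$ forces $m(n)=o(n)$, making this error sub-linear and hence absorbable by self-tensorization; for a \emph{linear} $s$ (as in $\mathbf{ELI}_c$, where effectively $s(n)=cn$) this route collapses, which is why that regime needed the separate argument of Theorem~\ref{FEI-ELI}. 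Everything else---the explicit Fourier expansion of $g$ and the two estimates above---is a routine computation, the only subtlety being to keep the bias $1-|\mathbf{E}[f]|$ under control via the choice of the ``bulk'' value $b$.
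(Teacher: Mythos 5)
Your proof is correct and follows essentially the same route as the paper's: the AND-padding construction of \cite{WWW14} to deflate the influence of $f$ into $\mathbf{AELI}_s$, a choice of padding length $m = o(n)$ satisfying $m - \log(n+m) \ge s(n+m)$, and self-tensorization of $\mathbf{IL1}$ to absorb the resulting $O(m)$ additive error. The only substantive difference is that you pad with the majority value and compute the Fourier coefficients of $g$ directly, which handles biased $f$ in one pass and in fact yields the constant $C$ rather than the paper's $16C$ (the paper first reduces to balanced $f$ by multiplying by a fresh variable, which is where the extra factor and the hypothesis $\Inf[f]\ge 1$ are used).
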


\begin{proof}
We follow exactly the same construction appearing in appendix E of \cite{WWW14}.
Let $ f \in \mathbf{IL1}$. For now we assume $f$ is balanced (i.e. that $\Exp[f] = 0$), and deal with biased functions later. 

Consider the function $g(x, y)$ on $n + k$ variables defined as

\begin{equation}
  g(\mathbf{x},\mathbf{y})=\begin{cases}
    f(\mathbf{x}) & \text{if $\text{AND}(y_1, y_2, ... , y_k) = -1$}\\
    1 & \text{otherwise}
  \end{cases}
\end{equation}

$g$ is extremely biased, as it can get the value $-1$ only when $y_1 = y_2 = ... = y_k = -1 $. By direct calculation, \cite{WWW14} show that:
\[ \Inf[g] = 2^{-k} \cdot (k + \Inf[f]) \]
and also that:

\[ \Ent[g] \geq 2^{-k-3} \cdot (2k + 2 + \Ent[f]) \]

We would like to argue that $g \in \mathbf{AELI}_s$, so we need to pick a large enough $k$ accordingly, so that the following inequality will hold:
\[2^{-k} \cdot (k + \Inf[f]) = \Inf[g] \leq 2^{-s(n+k)} \]

We also want to use self-tensorization on $g$, so we need to ensure $k=o(n)$.
So it would suffice to find $k$ such that:
\begin{itemize}
  \item $k - \log (k + n) \geq s(n+k)$
  \item $k = o(n)$
\end{itemize}

We can pick $k = \max( \sqrt{n}, 2 \cdot s(2n))$. For such $k$ it is clear that $k = o(n)$, and also that $k - \log (k + n) \geq \frac{k}{2} \geq s(2n) \geq s(n+k)$, where the last inequality uses the fact that $k = o(n)$ and that $s$ is monotone increasing - we can assume w.l.o.g that $s$ is a monotone function, or otherwise redefine $\bar{s}$ with $\bar{s}(i) := \max_{i \leq s}{s(i)}$.

So by the fact that $g \in \mathbf{AELI}_s$ and assuming FEI for $\mathbf{AELI}_s$ with universal constant $C$:
\[ 2^{-k-3} \cdot (2k + 2 + \Ent[f]) \leq \Ent[g] \leq C \cdot \Inf[g] = C \cdot 2^{-k} \cdot (k + \Inf[f]) \]
\[ 2k + 2 + \Ent[f] \leq 2^{3} \cdot C \cdot (k + \Inf[f]) \]
\[ \Ent[f] \leq 8 C \cdot \Inf[f] + 8 C k  \leq 8 C \cdot \Inf[f] + o(n) \]

The subclass of balanced functions in $\mathbf{IL1}$ is closed under self-tensorization, so we can use the tensorization technique to get $\Ent[f] \leq 8 C \cdot \Inf[f]$ hereby completing the proof for balanced functions.

If $f \in \mathbf{IL1}$ is biased, we can define
$h(x_1, x_2, ... x_n, x_{n+1}) = x_{n+1} \cdot f(x_1, x_2, ... x_n)$. $h$ is balanced, $\Inf[h] = \Inf[f] + 1$, so $h \in \mathbf{IL1}$, and $\Ent[h] = \Ent[f]$. Therefore we have:
\[ \Ent[f] = \Ent[h] \leq 8C \cdot \Inf[h] \leq 8C \cdot (\Inf[f] + 1) \leq 16 C \cdot \Inf[f] \]
Where the last inequality is the only place where we use the fact that $\Inf[f] \geq 1$ (apart from the fact that $\mathbf{IL1}$ is closed under self-tensorization).
\end{proof}

We would like to extend the lemma from $\mathbf{IL1}$ to $\mathbf{BF}$. If we examine for a moment the class $\mathbf{C} = \left\lbrace f \in \mathbf{BF} : \Ent[f] \geq \sqrt{n} \right\rbrace$, it is easy to see from Weak FEI that for all $f \in \mathbf{C}$, $\Inf[f] \geq \frac{\sqrt{n} }{\log n}$. Therefore $\mathbf{C} \subseteq \mathbf{IL1}$, so FEI holds for $\mathbf{C}$. By Lemma \ref{AELI implies IL1} and Theorem \ref{FEI linear entropy is tight} to be proven in the next section, we can now deduce Theorem \ref{FEI low influence is tight} as a simple corollary:

\begin{theorem} Let $s: \mathbb{N} \to \mathbb{R}$ such that $s(n) = 2^{-o(n)}$. Suppose that FEI holds for some class $\mathbf{AELI}_s$ with universal constant $C$, then FEI holds for the class $\mathbf{BF}$ with constant $16C$.
\end{theorem}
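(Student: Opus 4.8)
The plan is to combine Lemma \ref{AELI implies IL1} with a bootstrapping argument that upgrades from $\mathbf{IL1}$ to all of $\mathbf{BF}$. The key observation, already hinted at in the paragraph preceding the statement, is that any Boolean function with reasonably large entropy automatically has influence at least $1$, so it lives in $\mathbf{IL1}$. Concretely, by Weak FEI (Lemma \ref{Weak FEI}), $\Ent[f] \leq \log(n+1)(\Inf[f]+1)$, so if $\Ent[f] \geq \sqrt{n}$ then $\Inf[f] \geq \frac{\sqrt{n}}{\log(n+1)} - 1 \geq 1$ for $n$ large enough. Hence the class $\mathbf{C} = \{ f \in \mathbf{BF} : \Ent[f] \geq \sqrt{n}\}$ is contained in $\mathbf{IL1}$ (up to finitely many small $n$, which can be absorbed into the constant).

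First I would invoke Lemma \ref{AELI implies IL1}: assuming FEI for $\mathbf{AELI}_s$ with constant $C$, we get FEI for $\mathbf{IL1}$ with constant $16C$. In particular FEI holds with constant $16C$ for the subclass $\mathbf{C} \subseteq \mathbf{IL1}$, i.e. for every $f$ with $\Ent[f] \geq \sqrt{n}$ we have $\Ent[f] \leq 16C \cdot \Inf[f]$. Next, for the remaining functions, those with $\Ent[f] < \sqrt{n}$, I would appeal to Theorem \ref{FEI linear entropy is tight} (stated to be proven in the next section, which we are allowed to assume): its hypothesis is precisely that FEI holds for all functions with $\Ent[f] > s(n)$ for some $s(n) = o(n)$; here $s(n) = \sqrt{n} = o(n)$ works and the hypothesis is exactly what we just established. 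The conclusion of Theorem \ref{FEI linear entropy is tight} is FEI for all of $\mathbf{BF}$. Tracking the constant through that reduction gives the claimed $16C$ (the entropy-threshold reduction does not worsen the constant, as its proof via self-tensorization only adds a sublinear term that vanishes in the limit).

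One subtlety to address carefully: the statement's hypothesis says $s(n) = 2^{-o(n)}$, which I read as $\Inf[f] \leq 2^{-s(n)}$ with $s(n) = o(n)$ — consistent with the definition of $\mathbf{AELI}_s$ given earlier — so this is just notational and the argument above applies verbatim. I would also note explicitly that $\mathbf{C}$ being contained in $\mathbf{IL1}$ only needs checking for $n$ beyond some absolute threshold $n_0$; for $n \leq n_0$ one has $\Ent[f] \leq n \leq n_0$ and $\Inf[f] \geq 2^{-n} \cdot (\text{something})$ — more simply, any nonconstant $f$ has $\Inf[f] \geq 2^{1-n}$, so $\Ent[f]/\Inf[f]$ is bounded by a function of $n_0$ alone — and these finitely many cases are swallowed by enlarging the universal constant.

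The main obstacle is not really a technical one — the heavy lifting is done in Lemma \ref{AELI implies IL1} and (the assumed) Theorem \ref{FEI linear entropy is tight}. The only thing to be careful about is the bookkeeping of constants and the handling of small $n$, ensuring the final universal constant is genuinely $16C$ and not something larger; this just requires checking that the entropy-threshold reduction of Theorem \ref{FEI linear entropy is tight} is constant-preserving, which it is since it proceeds by self-tensorization against a sublinear additive error.
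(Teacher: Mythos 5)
Your proposal is correct and follows essentially the same route as the paper: invoke Lemma \ref{AELI implies IL1} to get FEI for $\mathbf{IL1}$ with constant $16C$, observe via Weak FEI that $\{f : \Ent[f] \geq \sqrt{n}\} \subseteq \mathbf{IL1}$, and then apply Theorem \ref{FEI linear entropy is tight} (which is constant-preserving) to conclude FEI for $\mathbf{BF}$. Your extra care about small $n$ and the $2^{-o(n)}$ typo in the statement only makes the argument more complete than the paper's own one-paragraph derivation.
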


It is natural to ask whether this hardness result extends to FMEI, in the sense that proving FMEI for $\mathbf{AELI}_s$ will imply FMEI for $\mathbf{BF}$. The proof fails because the min-entropy of the original function $f$ vanishes, as $\widehat{f}(\emptyset)$ becomes the largest coefficient of $g$. Furthermore, FMEI is easy for functions with $\Var(f) \leq \frac{1}{2}$, and therefore also functions respecting the stronger condition $\Inf[f] \leq \frac{1}{2}$.

\begin{lemma}
\label{FMEI for biased functions}
Let $ f \colon \{-1,1\}^n \to \{-1,1\}$ such that $\Var[f] \leq \frac{1}{2}$. Then $\Ent_{\infty}[f] \leq 2 \cdot \Inf[f]$.
\end{lemma}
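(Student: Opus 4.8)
The plan is to exploit the fact that when $\Var(f)$ is small the constant Fourier coefficient $\widehat{f}(\emptyset)$ is large, which immediately controls the min-entropy. Recall that $\widehat{f}(\emptyset) = \Exp[f]$, so $\widehat{f}(\emptyset)^2 = \Exp[f]^2 = 1 - \Var(f)$. Since $\Var(f) \leq \tfrac12$, this coefficient satisfies $\widehat{f}(\emptyset)^2 \geq \tfrac12$. Because $\Ent_\infty[f] = \min_{S} \log\frac{1}{\widehat{f}(S)^2}$ is a minimum over all $S$, it is in particular bounded by the value at $S = \emptyset$:
\[ \Ent_\infty[f] \;\leq\; \log \frac{1}{\widehat{f}(\emptyset)^2} \;=\; \log \frac{1}{1-\Var(f)}. \]

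The remaining task is purely analytic: show that $\log\frac{1}{1-v} \leq 2v$ for all $v \in [0,\tfrac12]$. I would argue this by convexity. The function $\phi(v) := -\log_2(1-v)$ has $\phi''(v) = \frac{1}{(\ln 2)(1-v)^2} > 0$ on $[0,1)$, so $\phi$ is convex there; moreover $\phi(0) = 0$ and $\phi(\tfrac12) = 1$. Hence on the interval $[0,\tfrac12]$ the graph of $\phi$ lies below the chord joining $(0,0)$ and $(\tfrac12,1)$, which is exactly the line $v \mapsto 2v$. (Note this is tight at $v = \tfrac12$, so the hypothesis $\Var(f)\le\tfrac12$ is precisely what makes the bound applicable.) Applying this with $v = \Var(f)$ gives $\Ent_\infty[f] \leq 2\Var(f)$.

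Finally, invoke the edge isoperimetric inequality $\Var(f) \leq \Inf[f]$ recalled at the start of Section~2.2 to conclude $\Ent_\infty[f] \leq 2\Var(f) \leq 2\Inf[f]$, as desired.

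There is no real obstacle here; the only point requiring a moment's care is the convexity estimate $\log\frac{1}{1-v}\le 2v$ on $[0,\tfrac12]$, and in particular recognizing that it fails for $v$ close to $1$, which is why the variance hypothesis cannot be dropped. (One could also observe that the same computation yields the slightly stronger statement $\Ent_\infty[f]\le 2\Var(f)$, which is what Lemma~\ref{FMEI for biased functions} is really about.)
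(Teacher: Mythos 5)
Your proof is correct and follows essentially the same route as the paper: bound $\Ent_\infty[f]$ by $\log\frac{1}{\widehat{f}(\emptyset)^2}=\log\frac{1}{1-\Var(f)}$, show this is at most $2\Var(f)$ when $\Var(f)\le\tfrac12$, and finish with $\Var(f)\le\Inf[f]$. The only difference is cosmetic: the paper passes through $\log x\le x-1$ and $\frac{v}{1-v}\le 2v$, while you use convexity and the chord from $(0,0)$ to $(\tfrac12,1)$, which is an equally valid (and arguably cleaner) way to get the same middle inequality.
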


\begin{proof}
\[ \Ent_{\infty}[f] \leq \log \frac{1}{\widehat{f}(\emptyset)^2}
= \log \frac{1}{1 - \Var(f)} \leq \frac{1}{1 - \Var(f)} - 1 \leq 2 \cdot \Var(f) \leq 2 \cdot \Inf[f]. \]
The third inequality makes use of the fact that $\log(x) \leq x - 1$ for $x \geq 1$, and the fourth inequality is due to the fact that $\Var[f] \leq \frac{1}{2}$.
\end{proof}

\section{FEI for Functions With Entropy Linear in n}
In the previous section, we proved FEI for functions with exponentially low influence. We have also matched this with a ``hardness result'', showing that proving FEI for a class of functions with slightly higher influence will imply FEI for all Boolean functions.

These results raise the question of the other non-trivial extremal case - proving FEI for functions with high entropy. As $\Ent[f] \leq n$, a natural interpretation of large entropy could be $\Ent[f] = cn$ for some constant $c \in (0,1)$. In this section, we prove FEI for the class of functions with entropy linear in $n$, and show that improving this to any $o(n)$ will prove FEI for all Boolean functions. We consider the following classes of functions:

\begin{itemize}
	\item The class of functions with linearly-high entropy. For every constant $c > 0$, define $\mathbf{LHE}_c = \bigcup\limits_{n=1}^{\infty} \left\lbrace f \in \mathbf{BF}_n : \Ent[f] \geq cn \right\rbrace$.
	\item The class of functions with ``almost'' linearly-high entropy. For every function $s: \mathbb{N} \to \mathbb{R}$ such that $s(n) = o(n)$, define $\mathbf{ALHE}_s = \bigcup\limits_{n=1}^{\infty} \left\lbrace f \in \mathbf{BF}_n : \Ent[f] \geq s(n) \right\rbrace$.
\end{itemize}

Formally, we show that for any $c \in (0, \frac{1}{2})$ FEI holds for the class $\mathbf{LHE}_c$ with constant $C = O(\frac{1}{c})$. We then show that improving on this result by proving FEI for any $\mathbf{ALHE}_s$ will imply FEI for $\mathbf{BF}$, i.e.\ Conjecture \ref{FEI}.

\subsection{Proving FEI for LHE}

We restate and prove Theorem \ref{FEI high entropy} as follows:
\begin{theorem}
\label{FEI high entropy 2}
Let $c \in (0, \frac{1}{2})$. For all $f \in \mathrm{LHE}_c$, $\Ent[f] \leq \frac{1 + c}{h^{-1}(c^2)} \cdot \Inf[f]$, where $h^{-1}$ is the inverse of the binary entropy function.
\end{theorem}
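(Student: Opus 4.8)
The plan is to mimic the structure of the low-influence argument (Theorem~\ref{FEI-ELI}), but now leverage a lower bound on the entropy rather than an upper bound on the influence. The crucial observation is that high entropy forces the function to be roughly balanced: since $\Ent[f] \geq cn$ and the weight on $\widehat{f}(\emptyset)^2$ contributes entropy $0$ while the remaining $<2^n$ coefficients carry weight $\Var(f)$, the entropy-partition identity
\[ \Ent[f] = \Var(f) \cdot \Ent[\mathscr{F}^C] + h(\Var(f)) \leq \Var(f) \cdot n + h(\Var(f)) \]
gives $cn \leq \Var(f) \cdot n + h(\Var(f))$. For $n$ large enough the term $h(\Var(f)) \leq 1$ is negligible compared to $cn$; more carefully, one can absorb it to conclude $\Var(f) \geq c' $ for a constant $c'$ slightly less than $c$ (for instance $\Var(f) \geq c^2$, which is what the claimed bound $h^{-1}(c^2)$ suggests — after a short manipulation using $h(\Var(f)) \leq \Var(f)\log\frac{e}{\Var(f)}$ or a similar crude bound, one extracts $\Var(f) \geq h^{-1}(c^2)$ or something comparable). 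In other words: \emph{linearly high entropy implies constant variance}.

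Once $\Var(f) \geq h^{-1}(c^2) =: \beta$ is established, the rest is immediate. By the simplest edge-isoperimetric inequality, $\Inf[f] \geq \Var(f) \geq \beta$, so $f$ has influence bounded below by a constant. Then I would invoke Weak FEI (Lemma~\ref{Weak FEI}), or better, combine $\Ent[f] \leq n$ (trivial, since there are $\leq 2^n$ coefficients) with the lower bound on $\Inf[f]$: from $\Var(f)\cdot n + h(\Var(f)) \geq \Ent[f]$ one already has $n \geq \Ent[f] - h(\Var(f)) \geq \Ent[f] - 1$, and more directly $\Inf[f] \geq \Var(f) \geq \beta$ while $\Ent[f] \leq n$. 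To close the gap between $n$ and $\Inf[f]$ I would use that $\Ent[f] \geq cn$ rearranges to $n \leq \frac{1}{c}\Ent[f]$, hence $\Ent[f] \leq n \leq \frac{1}{c}\Ent[f]$ is vacuous — instead the right move is: $\Ent[f] \leq n$ and $\Inf[f] \geq \Var(f) \geq \beta$ combined with, say, $\Ent[f] \leq \frac{1}{\beta}\Var(f)\cdot \Ent[f]$... rather, cleanly, use $\Ent[f] \leq n \leq \frac{\Ent[f]}{c}$ gives nothing, so one genuinely needs $n \lesssim \Inf[f]$. This follows because $\Inf[f] \geq \Var(f) \geq \beta$ is only a constant, so one instead bounds $\Ent[f] \leq n$ and separately shows $n \leq \frac{1+c}{\beta} \Inf[f]$: indeed $\Inf[f] \geq \Var(f) \cdot (\text{something})$ — here I expect the actual argument to bound $\Inf[f]$ from below not just by $\Var(f)$ but to use that a function with entropy $cn$ cannot have all its spectral mass at low levels, so $\Inf[f] \geq$ (roughly) a constant fraction of $\frac{\Ent[f]}{\log n}$; combined with $\Ent[f] = cn$ this gives $\Inf[f] \gtrsim \frac{cn}{\log n}$, which suffices up to the $\log$ factor but not for a clean constant. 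The cleanest path, matching the stated constant $\frac{1+c}{h^{-1}(c^2)}$, is: show $\Var(f) \geq h^{-1}(c^2)$, deduce $\Inf[f] \geq h^{-1}(c^2)$, and show $n \leq \frac{1+c}{c}\Inf[f]$ — wait, that is again circular. So the honest version is: from $cn \leq \Ent[f] \leq n$ and the variance bound, write $\Ent[f] \leq n$; we want $\Ent[f] \leq \frac{1+c}{\beta}\Inf[f]$, equivalently $n \leq \frac{1+c}{\beta}\Inf[f]$ when $\Ent[f]=n$, which would need $\Inf[f] \geq \frac{\beta}{1+c} n$ — a \emph{linear} lower bound on influence, which is false in general. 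Hence the real argument must use the entropy hypothesis more tightly, presumably via $\Inf[f] \geq \Var(f)\cdot\Omega(\log n / n)\cdot$ no — I will instead expect the proof to go: $\Ent[f] - h(\Var(f)) \leq \Var(f)\cdot n$, so $n \geq \frac{\Ent[f] - h(\Var(f))}{\Var(f)} \geq \frac{\Ent[f] - 1}{\Var(f)}$, giving $\Var(f) \cdot n \geq \Ent[f] - 1$, and then since $\Inf[f] \geq \Var(f) \cdot n$? — no, $\Inf[f]$ is not $\Var(f)\cdot n$ in general, but $\Inf[f] \geq \Var(f)$ and also $\Inf[f] \geq \frac{1}{2}\Var(f)\log\frac{1}{\Var(f)}$; neither gives a factor of $n$.

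Given this, I expect the \textbf{main obstacle} to be exactly the step of converting "entropy $\geq cn$" into a lower bound on $\Inf[f]$ that is \emph{linear in $n$} (or equivalently linear in $\Ent[f]$), and I anticipate the resolution is the following standard fact, which I would prove or cite: if $\Ent[f] \geq cn$ then a constant fraction of the spectral weight lies on levels $\geq \Omega(n)$, because the number of sets of size $\leq k$ is $\binom{n}{\leq k} \leq 2^{h(k/n)n}$, so weight confined to levels $\leq k$ can carry entropy at most $h(k/n)\cdot n + O(1)$; setting this $\geq cn$ forces $h(k/n) \geq c - o(1)$, i.e. $k/n \geq h^{-1}(c)$ roughly — but we want the complementary statement, that weight is \emph{not} all at low levels. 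Precisely: $\Ent[f] \leq W^{\leq k}[f]\cdot h(k/n)\cdot n + W^{>k}[f]\cdot n + 1$ (splitting entropy by the event "$|S|\leq k$"); choosing $k = h^{-1}(c^2)\, n$ (here is where $h^{-1}(c^2)$ enters!) makes $h(k/n) = c^2$, so $cn \leq \Ent[f] \leq c^2 n\cdot W^{\leq k} + n\cdot W^{>k} + 1 \leq c^2 n + (1-c^2)n\cdot W^{>k} + 1$, whence $W^{>k}[f] \geq \frac{c - c^2 - 1/n}{1-c^2} = \frac{c(1-c)}{1-c^2} - o(1) = \frac{c}{1+c} - o(1)$. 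Therefore $\Inf[f] \geq k \cdot W^{>k}[f] \geq h^{-1}(c^2)\, n \cdot \frac{c}{1+c}(1-o(1))$, i.e. $n \leq \frac{1+c}{c\, h^{-1}(c^2)}\Inf[f](1+o(1))$. Combining with $\Ent[f] \leq n$ yields $\Ent[f] \leq \frac{1+c}{h^{-1}(c^2)}\cdot\frac{\Inf[f]}{c}$ — I will need to double-check the exact placement of the factor $c$ versus the stated $\frac{1+c}{h^{-1}(c^2)}$, and I suspect a slightly sharper choice of the threshold $k$ or a cleaner bound $\Ent[f]\le cn\cdot(1/c)=n$ absorbs it; in any case the $o(1)$ additive slack is removed at the end by the self-tensorization lemma, since $\mathbf{LHE}_c$ is closed under self-tensorization (by Fact~\ref{I H and N Tensorize}, tensoring doubles both $\Ent$ and $n$). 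So the skeleton is: (1) entropy-by-level-split to show most weight is above level $h^{-1}(c^2)n$; (2) deduce the linear influence lower bound; (3) combine with $\Ent[f]\le n$; (4) clean up the lower-order term via tensorization.
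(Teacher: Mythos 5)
Your final argument is essentially the paper's proof: the same entropy split over the level threshold $t$, the same choice $h(t/n)=c^2$, the same Hamming-ball bound $\Ent[\mathscr{F}] \le n\cdot h(t/n)$ for the low levels, and the same conclusions $W^{>t} \ge \frac{c}{1+c}$ and $\Inf[f] \ge \frac{c}{1+c}\, h^{-1}(c^2)\, n$. The stray factor of $c$ you flagged disappears once you bound $\Ent[f]$ in the final ratio by its assumed value $cn$ rather than by $n$ (the paper treats the general case $\Ent[f]=c'n \ge cn$ by monotonicity of $\frac{1+c'}{h^{-1}(c'^2)}$ in $c'$), and the paper absorbs the $+1$ additive slack by an informal rescaling of $c$ rather than by your (equally valid) tensorization cleanup.
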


\begin{proof}
We use the concentration method presented by \cite{CKLS16}, where our partition of the coefficients is of the form $\mathscr{F} = \{S : |S| \leq t \}$. Obviously, we have:
\begin{equation} 
 \label{concentration_for_LHE}
 \Ent[f] = W^{\leq t} \cdot \Ent[\mathscr{F}] + W^{>t} \cdot \Ent[\mathscr{F}^C] + h(W^{\leq t})
\end{equation}

Recall that $\Ent[\mathscr{F}]$ is the entropy of the normalized-to-1 distribution of the squared coefficients of sets in $\mathscr{F}$.
Intuitively, we upper-bound $\Ent[\mathscr{F}]$ by the fact that there are not too many subsets of size $t$ or less. For a constant $\alpha \in (0,1)$ and $t=\alpha n$ we can approximation the volume of the Hamming ball of radius $t$ around $\bar{0}$:
\begin{equation} \label{ball_approx}
B(n, t) = 2 ^ {n \cdot h \left( \frac{t}{n} \right) - o(n)}
\end{equation} 
Therefore, $\Ent[\mathscr{F}] \leq \log B(n, t) \leq n \cdot h \left( \frac{t}{n} \right)$.
For the second and third term of equation \ref{concentration_for_LHE}, we trivially have: $\Ent[\mathscr{F}^C] \leq n$ and $h(W^{\leq t}) \leq 1$.

So combining all of these, we obtain:
\[ \Ent[f] \leq W^{\leq t} \cdot n \cdot h \left( \frac{t}{n} \right) + W^{>t} \cdot n + 1 \]

We start by focusing on functions $f \in \mathrm{LHE}_c$ with $\Ent[f] = cn$, and later extend our proof to all functions in $\mathrm{LHE}_c$, with $\Ent[f] = c'n \geq cn$. Observing that $W^{>t} = 1 - W^{\leq t}$, we obtain:
\begin{equation}
\label{LHE_before_dividing}
cn \leq W^{\leq t} \cdot n \cdot h \left( \frac{t}{n} \right) + (1 - W^{\leq t}) \cdot n + 1
\end{equation} 

We remove the $+1$ term for simplicity, as it is negligible compared to the the other terms (to formalize this, we can replace $cn$ by $0.99 \cdot cn$). Now, dividing equation \ref{LHE_before_dividing} by $n$ and rearranging it:
\[ c \leq W^{\leq t} \cdot h \left( \frac{t}{n} \right) + \left( 1 - W^{\leq t} \right) \]
\[W^{\leq t} \cdot \left( 1 - h \left( \frac{t}{n} \right) \right) \leq (1-c)\]
and finally, 
\[W^{\leq t} \leq \frac{1-c}{1 - h(\frac{t}{n})}\]
Therefore, when picking $h(\frac{t}{n}) = c^2 > 0$ we get 
\[W^{\leq t} \leq \frac{1}{1 + c} \]
Note that by this we picked $t = h^{-1}(c^2) \cdot n$ which is linear in $n$ for our constant $c$, and therefore the Hamming ball volume approximation in equation \ref{ball_approx} is valid. Continuing the computation, it follows that $W^{>t} = 1 - W^{\leq t} \geq 1 - \frac{1}{1 + c}$.

We can lower-bound the influence by $ \Inf[f] \geq  W^{>t} \cdot t$, which is 
\[\Inf[f] \geq (1 - \frac{1}{1 + c}) \cdot h^{-1}(c^2) \cdot n = \frac{c}{1 + c} \cdot h^{-1}(c^2) \cdot n\]
All in all, we get:
\[ \frac{\Ent[f]}{\Inf[f]} \leq \frac{cn}{\frac{c}{1 + c} \cdot h^{-1}(c^2) \cdot n} = 
\frac{1 + c}{h^{-1}(c^2)} \]

So far we proved the inequality for  functions with $\Ent[f] = cn$. For functions with $\Ent[f] = c'n > cn$, we have the same inequality with the bound $\frac{1 + c'}{h^{-1}(c'^2)}$. This function is monotone (decreasing) for the relevant range of $c \in (0, \frac{1}{2})$, and therefore for all functions with $\Ent[f] = c'n \geq cn$ we have $\Ent[f] \leq \frac{1 + c'}{h^{-1}(c'^2)} \cdot \Inf[f] \leq \frac{1 + c}{h^{-1}(c^2)} \cdot \Inf[f]$.
\end{proof}

In \cite{DPV11}, the authors prove FEI for random functions (w.h.p.) with constant $C = 2 + \varepsilon$, arguing the influence of a random function is strongly concentrated around its mean $\frac{n}{2}$. It is known that a random function also has (w.h.p.) entropy linear in $n$. It is even true that w.h.p. $\Ent[f] > (1 - \varepsilon) n$ for any constant $\varepsilon > 0$. This can be shown by applying a Chernoff bound on each of the Fourier coefficients and then a union bound over all $2^n$ coefficients to lower bound the min-entropy. Then $(1 - \varepsilon) n \leq \Ent_{\infty}[f] \leq \Ent[f]$. Therefore, Theorem \ref{FEI high entropy 2} also implies FEI for random functions by a different argument than that of \cite{DPV11}.

\subsection{Proving FEI for ALHE Implies FEI Completely}

\begin{theorem}
Let $s: \mathbb{N} \to \mathbb{R}$ such that $s(n) = o(n)$. Suppose that FEI holds for some class $\mathbf{ALHE}_s$ with universal constant $C$, Then FEI holds for the class $\mathbf{BF}$ with constant $C$.
\end{theorem}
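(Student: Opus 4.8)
The plan is to mimic the structure of the analogous reduction for $\mathbf{AELI}_s$ (Lemma \ref{AELI implies IL1}), but this time the gadget should blow up the \emph{entropy} rather than shrink the influence. Concretely, given an arbitrary $f \in \mathbf{BF}_n$, I would tensor it with a fixed ``entropy-rich'' function on few extra variables — the natural candidate is the parity function or, better, a padded copy of something with entropy close to its arity. If $p$ is the parity on $m$ variables then $\Ent[p] = 0$, which is useless; instead take $p$ to be any function with $\Ent[p] \approx m$ (for instance a random function, or an explicit high-entropy construction), and set $g = f \otimes p$ on $n + m$ variables. By Fact \ref{I H and N Tensorize}, $\Ent[g] = \Ent[f] + \Ent[p]$ and $\Inf[g] = \Inf[f] + \Inf[p]$.

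\medskip

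The key steps, in order, are: (1) Choose $m = m(n)$ growing slowly enough that $m = o(n)$ but fast enough that $\Ent[g] \geq \Ent[f] + \Omega(m) \geq s(n+m)$, which is possible since $s(n+m) = o(n+m)$; here I would just take, say, $m = \max(\sqrt n, 2s(2n))$ exactly as in the $\mathbf{AELI}$ argument, using the monotonicity trick to assume $s$ is nondecreasing. (2) Conclude $g \in \mathbf{ALHE}_s$, so the assumed FEI gives $\Ent[f] + \Ent[p] = \Ent[g] \leq C\cdot \Inf[g] = C(\Inf[f] + \Inf[p])$, hence $\Ent[f] \leq C\cdot\Inf[f] + C\cdot\Inf[p] - \Ent[p]$. (3) If $p$ is chosen with $\Ent[p] \geq \Inf[p]$ (e.g. high-entropy functions whose influence is not too large, or simply absorbing $C\cdot\Inf[p] - \Ent[p] \le C\cdot m$ into an $o(n)$ error term), we get $\Ent[f] \leq C\cdot\Inf[f] + o(n)$. (4) Finally, since $\mathbf{BF}$ is closed under self-tensorization, invoke the tensorization lemma (the $\Ent$ version) to remove the $o(n)$ slack and obtain $\Ent[f] \leq C\cdot\Inf[f]$ for all $f \in \mathbf{BF}$, with no loss in the constant.

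\medskip

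Actually, the cleanest route — and the one I would write up — avoids picking a fancy $p$: just pad $f$ with a tensor of enough independent copies of the single-variable dictator $\mathrm{Dict}(x) = x$, which has $\Ent = 0$, $\Inf = 1$; that does not raise entropy, so instead use an XOR-with-a-fresh-variable style gadget only to fix bias and rely on a genuinely high-entropy small function. The honest statement is that we need a family $p_m$ on $m$ variables with $\Ent[p_m] = \omega(s^{-1}\text{-type growth})$ and $\Inf[p_m] = O(\Ent[p_m])$, and random functions satisfy $\Ent[p_m] \geq (1-o(1))m$ and $\Inf[p_m] = (1+o(1))m/2$ with high probability, so such $p_m$ exists for every $m$. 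With $g = f \otimes p_m$ and $m = o(n)$ large enough to force $g \in \mathbf{ALHE}_s$, steps (2)–(4) go through verbatim and the final constant is exactly $C$ (the $\Inf[p_m]$ and $\Ent[p_m]$ terms are both $O(m) = o(n)$ and get killed by tensorization).

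\medskip

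The main obstacle is making sure the additive error introduced by the gadget is genuinely $o(n)$ after \emph{fixing $n$}: the tensorization lemma requires an error term of the form $s'(n) = o(n)$ in the \emph{original} number of variables, so I must verify that $C\cdot\Inf[p_m] - \Ent[p_m] + (\text{slack from }g\in\mathbf{ALHE}_s)$ is bounded by some fixed $o(n)$ function, which it is since $m = m(n) = o(n)$ and all quantities are $O(m)$. A secondary subtlety, exactly as flagged in the $\mathbf{AELI}$ case, is the monotonicity of $s$ and the bound $s(n+m) \le s(2n)$; this is handled by replacing $s$ with its running maximum and choosing $m \geq 2s(2n)$. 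No cleverness beyond this is needed — the constant is preserved because, unlike the $\mathbf{IL1}\to\mathbf{BF}$ step, there is no bias-fixing multiplication costing a factor of $2$, and the entropy/influence of the gadget are of the same order.
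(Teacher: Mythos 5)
Your proposal is correct and follows essentially the same route as the paper: tensor $f$ with a small gadget whose entropy exceeds $s$, apply the assumed FEI for $\mathbf{ALHE}_s$, and kill the resulting $o(n)$ additive slack by self-tensorization of $\mathbf{BF}$. The only difference is cosmetic — the paper uses $k=s(2n)$ explicit copies of the two-variable $\max$ function (with $\Ent=2$, $\Inf=1$, so the slack is exactly $(C-2)k$) instead of your random high-entropy function on $m$ variables, which makes the gadget's parameters verifiable by hand rather than resting on a probabilistic existence claim.
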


\begin{proof}
Define the $max$ function on two variables
$max(x_1, y_1) = \frac{1}{2} + \frac{1}{2} x_1 + \frac{1}{2} x_2 - \frac{1}{2} x_1 x_2$. It is easy to see that $\Inf[max] = 1$ and $\Ent[max] = 2$.
Also, we recall that the influence, the entropy and the number of variables tensorize nicely: for any $f$ on $n$ variables and $g$ on $k$ variables, the tensor $h(x,y) = f(x) \cdot g(y)$ is a function on $n + k$ variables, with $\Ent[h] = \Ent[f] + \Ent[g]$, $\Inf[h] = \Inf[f] + \Inf[g]$.

Let $ f \colon \{-1,1\}^n \to \{-1,1\}$. 
Consider the function $g(\textbf{x}, \textbf{y}, \textbf{z})$ defined as
\[g(\textbf{x}, \textbf{y}, \textbf{z}) = f(x_1, x_2, ..., x_n) \cdot max(y_1, z_1) \cdot max(y_2, z_2) \cdot ... \cdot max(y_k, z_k) \]
$g$ is a function on $n+2k$ variables, and using tensorization iteratively $k$ times we obtain $\Ent[g] = \Ent[f] + 2k$ and $\Inf[g] = \Inf[f] + k$.

By taking $k = s(2n)$ we have $\Ent[g] \geq s(2n)$ and $g$ is on less than $2n$ variables because $k = o(n)$. So $g \in \mathbf{ALHE}_s$, and therefore by our assumption $\Ent[g] \leq C \cdot \Inf[g]$.
\[ \Ent[f] = \Ent[g] - 2k \leq C \cdot \Inf[g] - 2k = C \cdot (\Inf[f] + k) - 2k = C \cdot \Inf[f] + (C - 2) k \]
If $C \leq 2$ we are done. If $C > 2$, we make use of the fact that $k = o(n)$, and apply the tensorization technique for the class $\textbf{BF}$ to get that $\Ent[f] \leq C \cdot \Inf[f]$.

\end{proof}

We remark that this hardness result extends to FMEI as well. As opposed to the classes discussed in the last section (functions with low influence), in this case the reduction works for FMEI. Multiplying a function by $max(y_i, z_i)$ copies the spectral distribution 4 times, with all (squared) coefficients multiplied by $\frac{1}{4}$. This means that applying the reduction will yield $\Ent_{\infty}[g] = \Ent_{\infty}[f] + 2k$, and by the fact that FMEI tensorizes in the same way as FEI (Lemma \ref{FMEI tenzorizes}) the proof follows.

\begin{theorem}
Let $s: \mathbb{N} \to \mathbb{R}$ such that $s(n) = o(n)$. Suppose that FMEI holds for some class $\mathbf{ALHE}_s$ with universal constant $C$, Then FMEI holds for the class $\mathbf{BF}$ with constant $C$.
\end{theorem}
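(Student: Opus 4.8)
The plan is to mimic, essentially verbatim, the reduction used in the preceding theorem (the FEI version), replacing the Shannon entropy by the min-entropy throughout and replacing the tensorization lemma for FEI by its min-entropy analog, Lemma \ref{FMEI tenzorizes}. The remark immediately preceding the statement already isolates the one fact that makes this translation go through, so the proof will be short.

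First I would recall the gadget. The function $max(x_1,x_2) = \frac{1}{2} + \frac{1}{2} x_1 + \frac{1}{2} x_2 - \frac{1}{2} x_1 x_2$ has exactly four nonzero Fourier coefficients, each of squared magnitude $\frac{1}{4}$; hence $\Inf[max] = 1$, $\Ent[max] = 2$, and crucially $\Ent_\infty[max] = \log 4 = 2$. For a Boolean function $f$ on $n$ variables, set
\[ g(\mathbf{x}, \mathbf{y}, \mathbf{z}) = f(x_1, \ldots, x_n) \cdot \prod_{i=1}^{k} max(y_i, z_i), \]
a function on $n + 2k$ variables. Since $\widehat{g}$ factors over the disjoint blocks of variables, and every $max$-coefficient has the same squared value $\frac{1}{4}$, multiplying by one copy of $max$ exactly quadruples the support of $\mathcal{S}_g$ and scales every squared coefficient by $\frac{1}{4}$; iterating $k$ times yields $\Inf[g] = \Inf[f] + k$, $\Ent[g] = \Ent[f] + 2k$, and $\Ent_\infty[g] = \Ent_\infty[f] + 2k$.

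Next I would choose $k = s(2n)$. Then $k = o(n)$, so for $n$ large enough $g$ is defined on $n + 2k \leq 2n$ variables, and — assuming without loss of generality that $s$ is monotone increasing — we have $\Ent[g] \geq 2k \geq s(2n) \geq s(n + 2k)$, so that $g \in \mathbf{ALHE}_s$. Applying the assumed FMEI bound for $\mathbf{ALHE}_s$ with constant $C$ gives $\Ent_\infty[g] \leq C \cdot \Inf[g]$, that is, $\Ent_\infty[f] + 2k \leq C \cdot (\Inf[f] + k)$, hence
\[ \Ent_\infty[f] \leq C \cdot \Inf[f] + (C - 2) k . \]
If $C \leq 2$ this already gives the claim. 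If $C > 2$, the leftover term $(C-2)k$ is $o(n)$, so I would invoke Lemma \ref{FMEI tenzorizes} applied to the class $\mathbf{BF}$ (which is closed under self-tensorization) to absorb it, concluding $\Ent_\infty[f] \leq C \cdot \Inf[f]$ for every $f \in \mathbf{BF}$.

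There is no real obstacle here; the argument is the direct min-entropy transcription of the proof of the previous theorem. The only point that genuinely needs checking — and precisely the reason the analogous reduction in Section 3 for low-influence classes fails for FMEI — is that the min-entropy actually tensorizes through the $max$ gadget, i.e. that the largest squared coefficient of $g$ equals the largest squared coefficient of $f$ times $\left(\frac{1}{4}\right)^{k}$. This is exactly the ``$max$ copies the spectral distribution four times with all squared coefficients multiplied by $\frac{1}{4}$'' observation from the remark, and it holds because all four $max$-coefficients have equal magnitude, so no single block can dominate or cancel the contribution of another when the coefficients are multiplied across blocks.
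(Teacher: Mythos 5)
Your proposal is correct and follows essentially the same route as the paper: the paper proves this theorem by remarking that the $max$-gadget reduction from the FEI version carries over verbatim because multiplying by $max(y_i,z_i)$ copies the spectral distribution four times with all squared coefficients scaled by $\frac{1}{4}$, giving $\Ent_\infty[g]=\Ent_\infty[f]+2k$, and then invoking Lemma \ref{FMEI tenzorizes}. Your write-up just spells out the details the paper leaves implicit.
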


\section{FEI for Functions With Constant L1 Fourier Norm}
As mentioned previously, it is shown in \cite{CKLS16} that the Fourier entropy of a function is bounded by the logarithm of $\|\widehat{f}\|_1$. Specifically they show that $\Ent[f] \leq 4 \log \|\widehat{f}\|_1 + 9 $. We mention two other ways in which this can be seen that provide a better constant:
\begin{itemize}
  \item Lemma 6.9 in \cite{GSTW16} taken with $p=1$, states that for any non-negative $(p_1,p_2,...,p_m)$ that sum up to $1$, we get $\sum_{i=1}^{m} p_i \log \frac{1}{p_i} \leq 2 \log \sum_{i=1}^{m} \sqrt{p_i}$. Plugging in the distribution $\widehat{f}(S)^2$, we get exactly $\Ent[f] \leq 2 \log \|\widehat{f}\|_1$.
  
  \item  Recalling the definition of the Renyi entropy over the distribution $\mathcal{X}$ of the squared Fourier coefficients $\widehat{f}(S)^2$, we get $\Ent_1[\mathcal{X}] = \Ent[f] = \sum_{S \subseteq [n]} \widehat{f}(S)^2 \log \frac{1}{\widehat{f}(S)^2}$, and $\Ent_{\frac{1}{2}}[\mathcal{X}] = 2 \log \sum_{S \subseteq [n]} |\widehat{f}(S)| = 2 \log \|\widehat{f}\|_1$. By the fact that for $a > b$ we get  $\Ent_{a}[\mathcal{X}] < \Ent_{b}[\mathcal{X}]$, we obtain $\Ent[f] \leq 2 \log \|\widehat{f}\|_1$.
\end{itemize}

The main caveat of these results is that they do not show FEI for the class of functions with constant $\|\widehat{f}\|_1$, denoted
$\textbf{CL1}_L = \bigcup\limits_{n=1}^{\infty} \left\lbrace f \in \mathbf{BF}_n : \|\widehat{f}\|_1 \leq L \right\rbrace$, because the influence of functions in this class can be arbitrarily small. By making subtle changes to the proof given by \cite{CKLS16} we overcome this and prove FEI for $\textbf{CL1}_L$ with universal constant $C = 4 \log L + 21$.

\begin{theorem}
Let $ f \colon \{-1,1\}^n \to \{-1,1\}$ be a Boolean function with $\|\widehat{f}\|_1 = L$. Then
${\Ent[f] \leq (4 \log L + 11) \Var(f)} + 10 \cdot \Inf[f] $, and in particular 
$\Ent[f] \leq (4 \log L + 21) \cdot \Inf[f] $.
\end{theorem}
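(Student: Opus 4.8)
The plan is to revisit the entropy-partition argument of \cite{CKLS16} that yields $\Ent[f] \leq 4\log L + 9$, but to track the variance carefully so that the ``constant'' terms become multiples of $\Var(f)$ rather than absolute constants. Recall that the $\|\widehat f\|_1$-bound is proven by choosing a threshold $\tau$, splitting the spectral distribution into the ``heavy'' coefficients $\mathscr{H} = \{S : \widehat f(S)^2 \geq \tau\}$ and the ``light'' ones $\mathscr{L} = \{S : \widehat f(S)^2 < \tau\}$, and applying the entropy partition formula $\Ent[f] = W_{\mathscr{H}} \cdot \Ent[\mathscr{H}] + W_{\mathscr{L}} \cdot \Ent[\mathscr{L}] + h(W_{\mathscr{H}})$. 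The heavy part has few elements (at most $\|\widehat f\|_1 / \sqrt\tau$, or $1/\tau$ by Parseval), bounding $\Ent[\mathscr{H}]$, and the light part has every element small, bounding $\Ent[\mathscr{L}]$ by $\log(1/\tau)$ directly. Choosing $\tau$ appropriately (in \cite{CKLS16}, $\tau$ of order $1/L^4$ or similar) balances these.

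The key modification: instead of keeping $\widehat f(\emptyset)$ inside the partition, I would first peel it off exactly as in the proof of Theorem \ref{FEI-ELI}, writing $\Ent[f] = \Var(f)\cdot\Ent[\mathscr{F}^C] + h(\Var(f))$ where $\mathscr{F}^C$ is the normalized distribution on the nonempty coefficients. Then I apply the $\|\widehat f\|_1$-style partition argument to the conditional distribution $\mathscr{F}^C$. Every contribution coming from the partition of $\mathscr{F}^C$ — the $\log(1/\tau)$ term, the $\log(\text{number of heavy sets})$ term, and the $h(\cdot)$ overhead — now gets multiplied by the prefactor $\Var(f)$, turning the additive $4\log L + 9$ into $(4\log L + O(1))\cdot\Var(f)$. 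Meanwhile the standalone $h(\Var(f))$ term is handled by $h(\Var(f)) \leq 2\Var(f)\log\frac{1}{\Var(f)}$ together with Lemma \ref{Edge Isoperimetric Inequality 2} (for $\Var(f)$ not too large) or trivially bounded by a constant times $\Var(f)$, and the remaining loss from low influence is absorbed using Lemma \ref{Edge Isoperimetric Inequality 3}, exactly as in the low-influence proof, giving the $10\cdot\Inf[f]$ slack term. The bookkeeping of which constants become $\Var(f)$-coefficients versus which become $\Inf[f]$-coefficients — in particular handling the regime $\Var(f) \geq 1/2$ separately via Weak FEI, and splitting on whether $\log\frac{1}{\Var(f)}$ exceeds $\log\frac{1}{\Inf[f]}$ — is where I expect the main fiddliness. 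For the final clause, converting $\Var(f)$ coefficients to $\Inf[f]$ coefficients uses the edge isoperimetric inequality $\Var(f) \leq \Inf[f]$: then $(4\log L + 11)\Var(f) + 10\Inf[f] \leq (4\log L + 21)\Inf[f]$.

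The main obstacle is ensuring the threshold-choice in the partition of $\mathscr{F}^C$ is made so that no term is left that is merely $o(1)$ or constant rather than proportional to $\Var(f)$ or $\Inf[f]$; in \cite{CKLS16} the bound had absolute constants precisely because they did not need them to vanish when $\Var(f)\to 0$. I expect that one must choose $\tau$ as a function of $L$ alone (not of $n$ or $\Var(f)$), check that the number of heavy coefficients in the \emph{conditional} distribution is still at most a function of $L$ (using $\|\widehat f\|_1 \leq L$ and Cauchy--Schwarz on the nonempty part), and verify that the $\log(1/\tau) = O(\log L)$ bound on $\Ent[\mathscr{L}]$ survives conditioning. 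Once the partition is arranged so that every surviving term carries a factor of $\Var(f)$, combining with the two edge-isoperimetric lemmas to swap $\Var(f)\log\frac{1}{\Var(f)}$-type expressions for clean multiples of $\Inf[f]$ is routine, and the explicit constants $11$, $10$, $21$ fall out of the arithmetic.
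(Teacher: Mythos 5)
Your high-level plan --- rearrange the $\|\widehat{f}\|_1$ partition argument so that every surviving term carries a factor of $\Var(f)$, then trade $\Var(f)\log\frac{1}{\Var(f)}$ for $\Inf[f]$ via the edge-isoperimetric lemmas --- is exactly the strategy of the paper's proof, and your handling of the empty coefficient and of the final conversion to $(4\log L+21)\Inf[f]$ matches. But two of the mechanical points you flag as ``fiddliness'' resolve in the opposite direction from what you guess, and one stated step is wrong as written. The claim that the light part ``has every element small, bounding $\Ent[\mathscr{L}]$ by $\log(1/\tau)$ directly'' is backwards: a distribution all of whose atoms are at most $\tau$ has entropy \emph{at least} $\log(1/\tau)$, not at most, and the support of the light part can be exponential in $n$. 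The light part genuinely needs the $L_1$ norm. The paper handles it by the pointwise inequality $\log\frac{1}{\widehat{f}(S)}<\frac{1}{\sqrt{\widehat{f}(S)}}$ (valid since the threshold is below $1/16$), which gives $\sum_{S\in G_{\text{small}}}\widehat{f}(S)^2\log\frac{1}{\widehat{f}(S)^2}\le 2\sqrt{\theta}\sum_S|\widehat{f}(S)|\le 2\sqrt{\theta}\,L$.

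Second, the threshold cannot be chosen as a function of $L$ alone: the paper takes $\theta=(\Var(f)/4L)^2$ as a cutoff on $|\widehat{f}(S)|$ (not on $\widehat{f}(S)^2$), and this $\Var(f)$-dependence is what makes the light contribution equal $2\sqrt{\theta}L=\frac{1}{2}\Var(f)$ rather than an absolute constant. For the heavy part no counting of coefficients is needed at all: its total weight is at most the nonempty weight $\Var(f)$, and each term has $\log\frac{1}{\widehat{f}(S)^2}\le\log\frac{1}{\theta^2}=4\log L+8+4\log\frac{1}{\Var(f)}$; the resulting $\log\frac{1}{\Var(f)}\cdot\Var(f)$ terms (five in total, counting the one from $\widehat{f}(\emptyset)$) are precisely what Lemma \ref{Edge Isoperimetric Inequality 2} converts into the $10\cdot\Inf[f]$ slack. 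Equivalently, in your conditional formulation the $O(\log L)$ bound does \emph{not} survive conditioning unchanged: the conditional amplitudes sum to $\sum_{S\ne\emptyset}|\widehat{f}(S)|/\sqrt{\Var(f)}\le L/\sqrt{\Var(f)}$, so $\Ent[\mathscr{F}^C]\le 2\log L+\log\frac{1}{\Var(f)}$, and that correction term is again absorbed by Lemma \ref{Edge Isoperimetric Inequality 2}. With these fixes your outline does close --- indeed the conditional route, combined with the Renyi bound $\Ent\le 2\log\|\cdot\|_1$ quoted at the start of this section, yields the slightly better $\Ent[f]\le(2\log L+O(1))\Var(f)+O(1)\cdot\Inf[f]$ --- but as proposed it is not yet a proof.
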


\begin{proof}
Let $\theta := (\frac{\Var(f)}{4L})^2 $. We divide the Fourier coefficients into three sets:
\begin{enumerate}
\item $Z = \{ \emptyset \} $
\item $G_{\text{large}} = \{ S : |\widehat{f}(S)| > \theta, S \neq \emptyset \} $
\item $G_{\text{small}} = \{ S : |\widehat{f}(S)| \leq \theta, S \neq \emptyset \} $
\end{enumerate}

Now, we look separately at the terms of the entropy according to this partition:
\begin{equation}
\label{entropy_in_three_parts}
\Ent[f] = 
\widehat{f}(\emptyset)^2 \log \frac{1}{\widehat{f}(\emptyset)^2} + 
\sum_{S \subseteq G_{\text{large}}} \widehat{f}(S)^2 \log \frac{1}{\widehat{f}(S)^2} +
\sum_{S \subseteq G_{\text{small}}} \widehat{f}(S)^2 \log \frac{1}{\widehat{f}(S)^2}
\end{equation}

To bound the first term of equation \ref{entropy_in_three_parts}, we note that $\widehat{f}(\emptyset)^2 + \Var(f) = 1$.
\begin{enumerate}
\item If $\widehat{f}(\emptyset)^2 \leq \frac{1}{2} \leq \Var(f)$ we can bound $\widehat{f}(\emptyset)^2 \log \frac{1}{\widehat{f}(\emptyset)^2} \leq 1 \leq 2 \Var(f)$.
\item If $\widehat{f}(\emptyset)^2 \geq \frac{1}{2} \geq \Var(f)$, then $\widehat{f}(\emptyset)^2 \log \frac{1}{\widehat{f}(\emptyset)^2} \leq \Var(f) \log \frac{1}{\Var(f)}$, recalling that for $p \geq \frac{1}{2}$, we have $p \log \frac{1}{p} \leq (1-p) \log \frac{1}{1-p}$.
\end{enumerate}
Accounting for the two possibilities, we get
$\widehat{f}(\emptyset)^2 \log \frac{1}{\widehat{f}(\emptyset)^2} \leq (2 + \log \frac{1}{\Var(f)}) \cdot \Var(f) $.

To bound the second term of equation \ref{entropy_in_three_parts}, we note that for $S \in G_{\text{large}}$, $\log \frac{1}{\widehat{f}(S)^2} \leq \log \frac{1}{\theta^2}$.
\[ \sum_{S \subseteq G_{\text{large}}} \widehat{f}(S)^2 \log \frac{1}{\widehat{f}(S)^2} \leq
\log \frac{1}{\theta^2} \cdot \sum_{S \subseteq G_{\text{large}}}{ \widehat{f}(S)^2 } \leq
(4\log L + 8 + 4 \log \frac{1}{\Var(f)}) \cdot \Var(f) \]

To bound the third term of equation \ref{entropy_in_three_parts}, note that for a Boolean function it holds that $\Var(f) \leq 1$ and $\|\widehat{f}\|_1 \geq 1$, hence $\theta < \frac{1}{16}$. Also note that for $x > 16$, $\log x < \sqrt{x}$, and therefore $\log \frac{1}{\widehat{f}(S)} < \frac{1}{\sqrt{\widehat{f}(S)}}$.
\begin{align*}
\sum_{S \in G_{small}} \widehat{f}(S)^2 \log \frac{1}{\widehat{f}(S)^2}
& \leq 2 \sum_{S \in G_{\text{small}}} \widehat{f}(S)^2 \frac{1}{\sqrt{\widehat{f}(S)}} \\
& \leq 2 \max_{S \in G_{\text{small}}}\sqrt{|\widehat{f}(S)|} \cdot \sum_{S \in G_{\text{small}}} |\widehat{f}(S)| \\
& \leq 2 \sqrt{\theta} \cdot L = \frac{\Var(f)}{4L} \cdot 2L = \frac{1}{2} \Var(f)
\end{align*}

Plugging these three bounds into equation \ref{entropy_in_three_parts}:
\[ \Ent[f] \leq
(2 + \log \frac{1}{\Var(f)}) \cdot \Var(f) + (4\log L + 8 + 4 \log \frac{1}{\Var(f)}) \cdot \Var(f) + \frac{1}{2} \Var(f)
\]
rearranging the inequality and applying Lemma \ref{Edge Isoperimetric Inequality 2} we get:
\[ \Ent[f] \leq (4 \log L + 11 + 5 \log \frac{1}{\Var(f)}) \cdot \Var(f)
\leq (4 \log L + 11) \Var(f) + 10 \Inf[f] \]
in particular, we obtain:
 \[\Ent[f] \leq (4 \log L + 21) \Inf[f] .\]
\end{proof}

\begin{fact}
For a Boolean function $ f \colon \{-1,1\}^n \to \{-1,1\}$, let $L_c(f)$ be the size of a minimal sub-cube partition of $f$,
$deg(f)$ be the degree of $f$ as a real valued polynomial, $granularity(f)$ be the granularity of $f$, $sparsity(\widehat{f})$ be the size of $supp(\widehat{f})$, $D(f)$ be the minimal depth of a decision tree computing $f$, and
$l(f)$ be the minimal number of leaves of a decision tree computing $f$. Then:

\begin{itemize}
  \item $\|\widehat{f}\|_1 \leq L_c(f)$, (Lemma 4.8(i) in \cite{CKLS16})
  \item $\|\widehat{f}\|_1 \leq l(f) \leq 2^{D(f)}$ (Proposition 3.16 in \cite{O'DBook}). This is still true when allowing parity queries at each node.
  \item $\|\widehat{f}\|_1 \leq 2^{deg(f)}$  (\cite{CKLS16})
  \item $\|\widehat{f}\|_1 \leq 2^{granularity(f)}$ (from Parseval's identity)
  \item $\|\widehat{f}\|_1 \leq \sqrt{sparsity(\widehat{f})}$  (from Parseval's identity)
  
\end{itemize}
\end{fact}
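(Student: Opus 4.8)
The plan is to isolate one master inequality — the subcube-partition bound $\|\widehat f\|_1 \le L_c(f)$ — from which the two decision-tree bounds follow immediately, and then to dispatch the remaining three items by short direct computations using Parseval's identity and Cauchy–Schwarz.

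First I would establish the key building block: for a subcube $C \subseteq \{-1,1\}^n$ obtained by fixing the coordinates in a set $A$ to prescribed signs $b_i$, the indicator $\mathbb{1}_C(x) = \prod_{i\in A}\frac{1+b_i x_i}{2}$ has $\|\widehat{\mathbb{1}_C}\|_1 = 1$: expanding the product gives $2^{|A|}$ monomials $\chi_T(x)$ with $T\subseteq A$, each carrying a coefficient $\pm 2^{-|A|}$. Given a subcube partition of $f$ into $C_1,\dots,C_{L_c}$ on which $f$ takes constant values $\epsilon_j\in\{-1,1\}$, we have $f=\sum_j \epsilon_j \mathbb{1}_{C_j}$, and the triangle inequality for $\|\cdot\|_1$ on the Fourier side gives $\|\widehat f\|_1 \le \sum_j \|\widehat{\mathbb{1}_{C_j}}\|_1 = L_c(f)$ (this is Lemma 4.8(i) of \cite{CKLS16}). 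The decision-tree bounds are then corollaries: the leaves of any decision tree computing $f$ induce a subcube partition of size $l(f)$, since each root-to-leaf path fixes exactly the queried variables, so $\|\widehat f\|_1 \le l(f)$; and a depth-$D$ tree has at most $2^D$ leaves, so $l(f)\le 2^{D(f)}$. For trees allowing parity queries, a leaf corresponds to an affine subspace $\{x : \chi_{S_1}(x)=b_1,\dots,\chi_{S_k}(x)=b_k\}$ whose indicator is $\prod_{j}\frac{1+b_j\chi_{S_j}(x)}{2}$; expanding and using that the characters indexed by the span of $S_1,\dots,S_k$ are pairwise distinct shows this again has $\ell_1$-Fourier norm $1$, so the same argument applies verbatim.

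For the granularity bound I would write $\|\widehat f\|_1 = \sum_{S:\widehat f(S)\ne 0} \frac{\widehat f(S)^2}{|\widehat f(S)|}$ and observe that every nonzero coefficient satisfies $|\widehat f(S)|\ge 2^{-\operatorname{granularity}(f)}$, hence $\|\widehat f\|_1 \le 2^{\operatorname{granularity}(f)}\sum_S \widehat f(S)^2 = 2^{\operatorname{granularity}(f)}$ by Parseval. The degree bound follows by feeding this the standard fact (as in \cite{CKLS16}) that a Boolean function of degree $d$ has all Fourier coefficients in $2^{1-d}\mathbb{Z}$, i.e. granularity at most $d-1\le d$, giving $\|\widehat f\|_1 \le 2^{\deg f}$. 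Finally, for the sparsity bound, Cauchy–Schwarz gives $\|\widehat f\|_1 = \sum_{S\in\supp(\widehat f)}|\widehat f(S)|\cdot 1 \le \bigl(\sum_S \widehat f(S)^2\bigr)^{1/2}\cdot |\supp(\widehat f)|^{1/2} = \sqrt{\operatorname{sparsity}(\widehat f)}$, again by Parseval.

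The only step that is not an essentially one-line computation is the degree-to-granularity implication underlying the $2^{\deg f}$ bound; I expect to invoke it as a black box from \cite{CKLS16} (together with the statement of their Lemma 4.8(i)) rather than reprove it. Everything else reduces to the $\ell_1$-norm-$1$ fact for subcube indicators, the triangle inequality, Cauchy–Schwarz, and Parseval.
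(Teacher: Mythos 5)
Your proposal is correct. The paper itself gives no proof of this fact --- each item is simply stated with a citation (Lemma 4.8(i) of \cite{CKLS16}, Proposition 3.16 of \cite{O'DBook}, etc.) --- and the arguments you supply are exactly the standard ones behind those references: the unit Fourier $\ell_1$-norm of subcube (and affine-subspace) indicators plus the triangle inequality for the partition and decision-tree bounds, the granularity/degree-to-granularity reduction, and Cauchy--Schwarz with Parseval for sparsity. The only point worth a half-sentence in a write-up is that in the parity-query case you should note the queries along a root-to-leaf path may be taken linearly independent over $\mathbb{F}_2$ (dependent queries have forced answers), which is what makes the $2^k$ characters in your expansion pairwise distinct.
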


This means that functions with constant degree, for instance, are a subclass of the functions with constant L1 spectral norm, and FEI holds for them as well with the appropriate constant.

\begin{corollary}
FEI holds for functions with constant sub-cube partition, constant degree, constant decision tree depth, constant decision tree size, constant granularity or constant sparsity.
\end{corollary}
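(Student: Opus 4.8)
The plan is to read off the corollary from the Theorem just proved together with the chain of inequalities recorded in the preceding Fact. The Theorem says that any Boolean function with $\|\widehat{f}\|_1 = L$ satisfies $\Ent[f] \leq (4\log L + 21)\cdot\Inf[f]$, i.e.\ FEI holds for $\mathbf{CL1}_L$ with constant $C = 4\log L + 21$. The Fact, in turn, shows that each of the complexity measures named in the corollary dominates a fixed monotone function of $\|\widehat{f}\|_1$. Hence, whenever one of these measures is bounded by a fixed constant over a class of functions, $\|\widehat{f}\|_1$ is bounded by a fixed constant over that class, and FEI follows for the class with a universal constant.

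Concretely I would argue measure by measure, using the appropriate line of the Fact each time. For a class whose sub-cube partition size is at most $k$, the Fact gives $\|\widehat{f}\|_1 \leq L_c(f) \leq k$, so the Theorem gives $\Ent[f] \leq (4\log k + 21)\cdot\Inf[f]$, a valid FEI bound with the $n$-independent constant $4\log k + 21$. Identically: a class with decision-tree leaf count at most $k$ gives $\|\widehat{f}\|_1 \leq k$ and hence $C = 4\log k + 21$; a class with decision-tree depth at most $d$ gives $\|\widehat{f}\|_1 \leq 2^d$ and hence $C = 4d + 21$ (and, as the Fact notes, this survives allowing parity queries at the nodes, so it covers the parity-decision-tree version as well); a class of degree at most $d$ gives $\|\widehat{f}\|_1 \leq 2^d$ and hence $C = 4d + 21$; a class of granularity at most $g$ gives $\|\widehat{f}\|_1 \leq 2^g$ and hence $C = 4g + 21$; and a class with Fourier sparsity at most $m$ gives $\|\widehat{f}\|_1 \leq \sqrt{m}$ and hence $C = 2\log m + 21$.

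There is no real obstacle here: the corollary is essentially a one-line consequence of the Theorem and the Fact. The only point worth stressing is uniformity --- ``constant'' must mean that the relevant measure is bounded by a single fixed constant for every function in the class (in particular, independently of the number of variables $n$), so that the FEI constant inherited from the Theorem is itself $n$-independent, exactly as Conjecture \ref{FEI} requires of a class. With this understanding the six cases above are complete, and one may also note in passing that this recovers several previously known special cases (e.g.\ constant decision tree depth) through a single unified argument.
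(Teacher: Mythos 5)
Your proposal is correct and matches the paper's argument exactly: the corollary is obtained by chaining the Fact's inequalities (each complexity measure upper-bounds a fixed monotone function of $\|\widehat{f}\|_1$) with the Theorem's bound $\Ent[f] \leq (4\log L + 21)\cdot\Inf[f]$, and your explicit constants ($4\log k + 21$, $4d+21$, $2\log m + 21$, etc.) are the right ones. Nothing further is needed.
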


\section{Protocol Based Approach to FEI}
In \cite{WWW14}, the authors suggest an insightful perspective on FEI: they note that $\Ent[f] \leq C \cdot \Inf[f]$ is true for a given function if there exists a communication protocol, that given a random subset $S \subseteq [n]$ sampled according to the spectral distribution of $f$, can communicate the value of S using at most $C \cdot \Inf[f]$ bits in expectation. They offer such protocols for functions with decision trees of constant average depth (also requiring $\Inf[f] \geq 1$), and for functions with read-k decision trees ($k$ is constant) - proving FEI for both classes. The result regarding average decision tree depth is also shown using more elementary methods in \cite{CKLS16} (and with a better constant), but the result for read-k was not previously known - until their work, only read-once decision trees and read-once formulas were conquered, mainly by inductions that relied heavily on the fact that the different parts of the decision tree or formula are independent of each other. 

It is unclear how this method could be harnessed to prove FEI for less structured classes of functions than decision trees. For example, If we consider symmetric functions there is no obvious structure to the Fourier coefficients we can exploit - even if we know the size of the set we need to send, $|S|$, all coefficients within that level are of equal size. Therefore it is reasonable to assume a protocol will somehow encode the size of $S$ (denote $|S| = t$) and the index of the $S$ in $\left \lbrace S : |S| = t \right \rbrace$. To prove that this protocol is indeed efficient, we must have some deeper understanding of the distribution of weight between Fourier levels of symmetric functions (as done in \cite{OWZ11}), so the protocol perspective does not seem to help in this case.

That being said, we are hopeful this method could be applied to more classes of functions with useful structure. Promising candidates could be circuits and formulas, and specifically DNFs. This will have the additional application of proving a version of Mansour's conjecture, as explained in \cite{OWZ11}. A first natural step would be to examine read-k DNFs - we note that Mansour's conjecture for read-k DNFs was proven by Klivans et al. \cite{KLW10}.
We suggest a protocol that may prove FEI for read-k DNFs, and independently we provide a proof of FMEI for ``regular'' read-k DNFs. By regular we mean that all clauses are of the same size (up to a constant multiplicative factor), and the number of clauses is exponential in the clause size.

\subsection{Towards FEI for Read-k DNFs}
As an example to the intuitive power of the protocol method, we examine a natural protocol that works for the Tribes function $\Tr$, which is a read-once DNF. We denote by $w$ the number of variables in a tribe, $s = \Theta(2^w)$ the number of tribes, and $n = sw$ the total number of variables. For the following protocol, we denote by $\text{protocol}(S)$ the number of bits used in the protocol for a set $S$.

\begin{center}
\fbox{\begin{minipage}{30em}

  \begin{center}
  \textbf{Given} $S \subseteq [n]$: 
  \end{center}
  
  \begin{enumerate}
	\item If $S = \emptyset$, output nothing.
	\item For each tribe with non-empty intersection with $S$, output the tribe index ($i \in [s]$) and 0-1 string of length $w$, denoting for each variable in $T_i$ whether it is in $S$ or not.
	\item Terminate with a $\bot$.
  
  \end{enumerate}
  
\end{minipage}}

\end{center}

We recall (see Section 4.2 in \cite{O'DBook}) that for $S \neq \emptyset$ that has non-empty intersection with $l$ tribes,
$\widehat{\Tr}(S)^2 = 4 \cdot 2^{-2lw} \cdot (1-2^{-w})^{2(s-l)}$. Furthermore, it is easy to see that for $S$ intersecting with $l$ tribes, $\text{protocol}(S) = l \cdot (\log (s) + w)$.

The following calculation shows that the expected protocol length is indeed $O(\Inf[\Tr])$, by summing over the sets $S$ according to the number of tribes they intersect:

\begin{align*}
\Exp[\text{protocol length}] =
& \sum_{S \subseteq [n]} \widehat{\Tr}(S)^2 \cdot \text{protocol}(S) \\
& = \sum_{l=1}^{s}  \left( \binom{s}{l} (2^w - 1)^l ) \cdot (4 \cdot 2^{-2lw} \cdot (1-2^{-w})^{2(s-l)}) \cdot ( l \cdot (\log (s) + w) \right) \\
& \leq 4 \cdot \sum_{l=1}^{s}  \binom{s}{l} \cdot 2^{-lw} \cdot (1-2^{-w})^{2(s-l)} \cdot l \cdot O(w) \\
& \leq O(w) \cdot \sum_{l=1}^{s}  \binom{s}{l} \cdot (2^{-w})^l \cdot (1-2^{-w})^{s-l} \cdot l \\
& = O(w) \cdot \Exp[ \text{Bin}(s, 2^{-w})] = O(w)
\end{align*}

It is well known that $\Inf[\Tr] = \Theta(\log(n)) = \Theta(w)$ (see Proposition 4.13 in \cite{O'DBook}), so indeed $\Exp[\text{protocol length}] \leq O(\Inf[\Tr])$.

We note (without proof) that this protocol can be extended to general read-once DNFs, which FEI is already known for, by assigning variable-length encodings to each of the tribes, based on their size. In fact, this is the protocol that we get when using the composition of protocols for OR and AND, as explained in Section 3 of \cite{WWW14}. For simplicity, from now on we assume regular DNFs, defined as follows:

\begin{definition*} Let $C_1,C_2 > 0$. We say $f$ is a $(C_1, C_2)$-regular DNF (or just ``regular''), if there exists some $w \in \mathbb{N}$ s.t. $f = T_1 \lor T_2 \lor ... \lor T_s$ s.t. the number of variables in each clause respects $C_1 w \leq \text{size}(T_i) \leq w$, and the number of clauses is $s = 2^{C_2 n}$.
\end{definition*}
For example, the (essentially unbiased) Tribes function $\Tr$ is a regular read-once DNF with $C_1 = 1$, $C_2 \thickapprox 1$.

\subsubsection{A Suggested Protocol for Read-k DNFs}
Recall the strategy of \cite{WWW14} for decision trees: for read-once decision tree, they note that every $S$ with non-zero weight has exactly one path from the root that contains its variables (and maybe additional variables), and encode the path efficiently. For read-k decision tree, they note that every $S$ with non-zero weight has at least one path, but potentially many - some may be short, and some much longer - and intuitively, that if $S$ has large Fourier weight then it has some short path  containing it (and the weight of $S$ ``comes'' from these short paths).

Analogously, we can view our protocol for Tribes as sending a ``set cover'' of tribes, in the sense that $S$ is a subset of their union. For a read-once DNF, $S$ has exactly one cover, and from the Fourier coefficients of $\Tr$, it is obvious that sets $S$ with large Fourier coefficients have small covers that can be sent efficiently.
For read-k DNFs, $S$ may have many covers. Informally, from looking at the structure of a DNF as a polynomial, we can see that the Fourier weight of $S$ gets contribution from the different covers of $S$, with most weight contributed by the small covers. As we are dealing with read-k DNFs, the number of covers of $S$ can be bounded, and we could hope the $\widehat{f}(S)^2$ is dominated by its smallest cover. We therefore conjecture the following protocol for read-k DNFs: 

\begin{center}
\fbox{\begin{minipage}{30em}

  \begin{center}
  \textbf{Given} $S \subseteq [n]$: 
  \end{center}
  
  \begin{enumerate}
	\item If $S = \emptyset$, output nothing.
	\item Let $\mathcal{C} \subseteq \left\lbrace T_i \right\rbrace _i \in [s]$ be the \textit{smallest} cover of $S$. For each tribe $T_i \in \mathcal{C}$ output the tribe index and a 0-1 string of length $w$, denoting for each variable in $T_i$ whether it is in $S$ or not.
	\item Terminate with a $\bot$.
  
  \end{enumerate}
  
\end{minipage}}

\end{center}

To specify this protocol completely, we still need to define the size of the cover in order for the term ``smallest'' cover to be meaningful. Natural options can be the number of tribes involved - $|\mathcal{C}|$, the combined sizes of the tribes $\sum_{T_i \in \mathcal{C}} |T_i|$, or the number of unique variables involved, $|\bigcup_{T_i \in \mathcal{C}} T_i |$. We conjecture the latter to be the correct measure (from inspecting several DNFs and Fourier weights of some sets $S$).

\begin{conjecture}
The expected price of the above protocol for a read-k DNF $f$ is $O_k(\Inf[f])$.
\end{conjecture}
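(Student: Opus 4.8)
The natural route is to mirror the $\Tr$ computation above (we take $f$ regular, as the boxed protocol presumes uniform clause width $w$; for general read-$k$ DNFs one first replaces the length-$w$ blocks by the variable-length encodings noted above). First, expand the Fourier coefficients of a DNF: writing $f$ over $\{0,1\}$ as $f=1-\prod_{i=1}^s(1-t_i)$, where $t_i$ is the $\{0,1\}$-indicator that clause $T_i$ is satisfied, and multiplying out, $f$ becomes a signed sum over $\mathcal A\subseteq[s]$ of the conjunctions $\prod_{i\in\mathcal A}t_i$; each such conjunction, when the clauses of $\mathcal A$ are literal-consistent, is an AND of $|\mathrm{vars}(\mathcal A)|$ literals, with $\mathrm{vars}(\mathcal A):=\bigcup_{i\in\mathcal A}T_i$, so all of its Fourier mass sits on subsets of $\mathrm{vars}(\mathcal A)$ and each of its Fourier coefficients has magnitude $2^{-|\mathrm{vars}(\mathcal A)|}$. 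Passing to $\{-1,1\}$,
\[ \widehat f(S)\;=\;2\sum_{\mathcal A\ \text{a consistent cover of } S}(-1)^{|\mathcal A|}\,(\pm 1)\,2^{-|\mathrm{vars}(\mathcal A)|}\qquad (S\neq\emptyset), \]
where "$\mathcal A$ covers $S$" means $S\subseteq\mathrm{vars}(\mathcal A)$ and the final sign depends on the literals. For a read-once DNF this telescopes to the familiar $\pm 2\cdot 2^{-W}\prod_{j\ \text{uncovered}}(1-2^{-w_j})$ with $W$ the number of variables in the unique minimal cover of $S$, recovering the $\Tr$ formula.

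\textbf{Weight is carried by small covers.} The target estimate is, for $m(S)$ the number of distinct variables in the \emph{smallest} cover of $S$ (the measure singled out in the conjectured protocol),
\[ \widehat f(S)^2\;\le\;\mathrm{poly}_k(m(S))\cdot 2^{-2\,m(S)} \]
(in the read-once case one even gets $\widehat f(S)^2\le 4\cdot 2^{-2m(S)}$). Two ingredients: (i) a read-$k$ DNF has at most $k^{|S|}\le k^{m(S)}$ minimal covers of $S$, since for each of the $\le m(S)$ variables of $S$ there are at most $k$ clauses to choose; (ii) for a fixed minimal cover $\mathcal M$, the contribution of all super-covers $\mathcal A\supseteq\mathcal M$ equals $2^{-|\mathrm{vars}(\mathcal M)|}\sum_{\mathcal B}2^{-|\mathrm{vars}(\mathcal B)\setminus\mathrm{vars}(\mathcal M)|}$, and each extra clause adds at least $C_1 w-m(S)$ new variables, so read-$k$-ness bounds this sum by a convergent geometric-type series \emph{provided} $s$ is not too large relative to $C_1 w/k$ — a condition built into "regular" DNFs (and satisfied by $\Tr$, where $s=\Theta(2^w)$).

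\textbf{Protocol length versus cover size, and the summation.} The direction "protocol length $\le O_k(m(S))$" is the easy one: any cover of $S$ with $m$ distinct variables has at most $mk/(C_1 w)$ clauses — count (clause, variable) incidences, each of the $m$ variables lying in $\le k$ clauses and each clause having $\ge C_1 w$ variables — so the protocol spends $\frac{mk}{C_1 w}(\log s+w)+O(1)=O_k(m)$ bits, using $\log s=O_k(w)$ for a regular DNF. Hence $\Exp_{S\sim\widehat f^2}[\mathrm{protocol}(S)]=O_k\!\left(\Exp_{S\sim\widehat f^2}[m(S)]\right)$. Now group the sets $S$ by the number $j$ of clauses in their smallest cover, so that $m(S)=\Theta_k(jw)$: there are at most $\binom{s}{j}$ choices of those $j$ clauses and then at most $2^{m(S)}$ choices of $S$ inside their union; plugging in the decay $\widehat f(S)^2\le\mathrm{poly}_k(m(S))2^{-2m(S)}$ and \emph{keeping the $1/j!$ inside $\binom sj$}, the tail $\sum_j\binom{s}{j}2^{-jw}\cdot jw\cdot\mathrm{poly}$ becomes an $\Exp[\mathrm{Bin}(s,2^{-\Theta(w)})]$-type sum, which converges to $O_k(w)$ exactly as in the $\Tr$ calculation. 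Since a regular read-$k$ DNF is far from constant, $\Inf[f]=\Omega_k(1)$ (indeed $\Omega_k(w)$, matching $\Inf[\Tr]=\Theta(w)$), giving $\Exp[\mathrm{protocol}]=O_k(\Inf[f])$.

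\textbf{Main obstacle.} The crux is the weight bound $\widehat f(S)^2\le\mathrm{poly}_k(m(S))2^{-2m(S)}$ and its use in the level-by-level summation. In the read-once case the signed sum over covers telescopes \emph{exactly} into $2^{-W}\prod(1-2^{-w_j})$, and this exact cancellation is what simultaneously keeps $|\widehat f(S)|$ small and keeps the level sums convergent; for read-$k$ DNFs the covers of $S$ overlap in intricate patterns, the cancellations are only partial, and it is genuinely unclear whether $\widehat f(S)^2$ is dominated by its smallest cover or by the aggregate of a super-polynomially large family of intermediate covers. A secondary difficulty is pinning down the parameter regime — the sketch above needs the clause count to be at most roughly $2^{C_1 w/k}$, so the conjecture as literally stated may require such a side condition on the regularity parameters, or the protocol / cover-size measure may need adjustment for denser DNFs. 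Resolving these is precisely the gap between the FMEI theorem proved here for regular read-$k$ DNFs, which only needs to locate one heavy coefficient, and the full FEI statement of the conjecture.
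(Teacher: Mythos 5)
This statement is an open conjecture in the paper: the author offers no proof of it, only the heuristic that the Fourier weight of $S$ ``gets contribution from the different covers of $S$, with most weight contributed by the small covers,'' and then retreats to proving the much weaker FMEI statement for regular read-$k$ DNFs. Your sketch follows exactly that heuristic and correctly works out the pieces that are genuinely routine --- the inclusion--exclusion expansion of $\widehat{f}(S)$ over consistent covers, the read-once telescoping that recovers the Tribes computation, the bound ``protocol length $\le O_k(m(S))$'' via counting clause--variable incidences, and the final binomial-tail summation --- but it does not close the conjecture, and you say so yourself.

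The gap you flag is real and is exactly where the paper stops. The target estimate $\widehat{f}(S)^2 \le \mathrm{poly}_k(m(S))\,2^{-2m(S)}$ cannot be obtained from your two ingredients by a triangle inequality: ingredient (i) allows up to $k^{m(S)}$ minimal covers, and multiplying that count by the per-cover magnitude $2^{-m(S)}$ and squaring yields $k^{2m(S)}2^{-2m(S)}$, which for $k\ge 4$ is not even decaying in $m(S)$. So the claim requires genuine sign cancellation among overlapping covers, which is precisely the structure that is exact for read-once DNFs and unknown for read-$k$. A second, independent issue is that your summation needs the clause count bounded by roughly $2^{C_1 w/k}$, whereas the paper's regularity condition only guarantees $s = 2^{C_2 w}$ with no relation imposed between $C_2$ and $C_1/k$; so even the parameter regime in which your argument could run is not the one the conjecture is stated for, and the conjecture as written is not restricted to regular DNFs at all. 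What you have written is a useful elaboration of the paper's intuition, not a proof; the statement remains open.
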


\subsubsection{FMEI for Regular Read-k DNFs}

As a modest first step, we prove FMEI for the class of regular read-k DNFs. When considering min-entropy, only the weight of the largest Fourier coefficient matters, and as can be derived from the discussion, we expect it to be some $S$ that can be covered by using only one tribe. 
In Blum et al. \cite{BFJKMR94}, while discussing the learnability of read-k DNFs, the authors implicitly show the following lemma that formalizes the latter notion regarding the weight of sets covered by a single tribe:

\begin{lemma} \label{weight of DNF inside clause}
For any read-k DNF $f$ there is a family $\mathcal{F} \subseteq 2^{[n]}$ with $|\mathcal{F}| \leq 24 n^2 k^2$ such that 
$\sum_{S \in \mathcal{F}} \widehat{f}(S)^2 \geq \frac{1}{4k}$.
\end{lemma}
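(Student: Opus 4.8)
The plan is to establish Lemma \ref{weight of DNF inside clause} by analyzing, for a single term $T_i$ of the read-$k$ DNF $f = T_1 \lor \dots \lor T_s$, the Fourier mass that is "attributable" to that term, and then summing over terms. Concretely, recall that in the $\{-1,1\}$ encoding an AND of a width-$w_i$ term $T_i$ is $\pm\prod_{j \in T_i}\tfrac{1\pm x_j}{2}$, whose Fourier support is exactly the power set of $T_i$, and all of whose Fourier coefficients have magnitude $2^{-w_i}$. The first step is to quantify how much of $f$'s spectrum "comes from" $T_i$ restricted to inputs where $T_i$ is the unique satisfied term, or more robustly, via the standard inclusion-exclusion / switching argument used in \cite{BFJKMR94}: one shows that with probability at least roughly $\tfrac{1}{k}$ over a random input, exactly one term fires (or some comparable event that isolates a term), and from this one reads off that the total weight of $f$ on the union $\bigcup_i 2^{T_i}$ of the terms' Fourier supports is $\Omega(1/k)$.

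The second step is to turn "weight on $\bigcup_i 2^{T_i}$" into "weight on a small explicit family $\mathcal{F}$." The naive family $\bigcup_i 2^{T_i}$ can be exponentially large since each term has width up to $n$, so this is where the quantitative control must come in: I would argue that almost all of the weight sitting on $2^{T_i}$ is actually concentrated on \emph{low-level} subsets of $T_i$ — sets $S \subseteq T_i$ with $|S|$ small, say $|S| = O(\log k)$ or $|S| \le$ some polylogarithmic bound — because for a term of width $w_i$, the coefficient on a set of size $j$ has magnitude $2^{-w_i}$ and is "spread" among $\binom{w_i}{j}$ sets, and the interaction with the OR structure (read-$k$ limits how terms overlap) damps the contribution of large $S$. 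Summing these low-level sets over all $s$ terms gives $|\mathcal{F}| \le s \cdot \mathrm{poly}(\text{level bound})$; combined with $s \le$ (something polynomial in $n$ when the DNF is non-redundant, or absorbed into the $n^2 k^2$ bound) this yields the claimed $|\mathcal{F}| \le 24 n^2 k^2$. The precise constant and the polynomial are exactly the bookkeeping done implicitly in \cite{BFJKMR94}, so I would cite and reconstruct that computation rather than reinvent it.

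The main obstacle I expect is the second step: controlling the weight lost by truncating each term's contribution to low-level Fourier sets, and doing so \emph{uniformly} over terms of wildly different widths while respecting the read-$k$ constraint. A term of width $1$ contributes only to levels $0$ and $1$, which is trivial, but a term of width $n$ a priori spreads its weight across all $2^n$ subsets; the read-$k$ hypothesis is what prevents many wide terms from coexisting and is the only lever available, so the delicate part is extracting a clean level bound from it. A secondary subtlety is that the events "term $T_i$ fires uniquely" are not disjoint and their Fourier footprints overlap, so the $\tfrac{1}{4k}$ bound requires a careful union/averaging argument rather than naive summation — again, this is precisely the content of the cited work, and the cleanest route is to state the needed combinatorial fact about read-$k$ DNFs from \cite{BFJKMR94}, verify it gives both the weight bound $\tfrac{1}{4k}$ and the size bound $24 n^2 k^2$, and then use Lemma \ref{weight of DNF inside clause} as a black box in the subsequent FMEI argument for regular read-$k$ DNFs.
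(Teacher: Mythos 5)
Your first step matches the intended argument: the $\Omega(1/k)$ lower bound on the Fourier weight of sets each contained in a single term is exactly the fact implicitly established in \cite{BFJKMR94}, and the paper itself does not reprove it but cites it and only specifies the family $\mathcal{F}$. The genuine gap is in your second step. You propose to shrink $\bigcup_i 2^{T_i}$ by keeping, within \emph{every} term $T_i$, only the subsets of size $O(\log k)$ (or polylogarithmic). This fails on two counts. First, the counting does not close: a term of width $w_i$ close to $n$ has $\sum_{j \le \ell} \binom{w_i}{j} = n^{\Theta(\ell)}$ subsets of size at most $\ell$, so with $\ell = \Theta(\log k)$ you get a family of size $n^{\Theta(\log k)}$, not $24n^2k^2$; your line ``$|\mathcal{F}| \le s \cdot \mathrm{poly}(\text{level bound})$'' is not correct. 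Second, the premise that a wide term's Fourier footprint concentrates on low-level subsets is backwards: in the $\pm 1$ encoding, a width-$w$ AND has all $2^w$ nonempty coefficients of equal magnitude $2^{1-w}$, so its (tiny) total weight sits mostly around level $w/2$.

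The correct move, reflected in the family $\mathcal{F} = \bigcup_{|V_i| \le \log(24kn)} P(V_i)$ used by the paper, is to truncate by \emph{term width}, not by Fourier level: discard every term of width greater than $\log(24kn)$ outright. Each such term is satisfied with probability at most $2^{-\log(24kn)} = \frac{1}{24kn}$, and a read-$k$ DNF has at most $s \le nk$ terms, so deleting all wide terms changes $f$ on at most a $\frac{1}{24}$ fraction of inputs; this perturbation can be absorbed into the constant in the $\frac{1}{4k}$ weight bound. For the surviving narrow terms one keeps the \emph{full} power set of each, which costs only $2^{\log(24kn)} = 24kn$ sets per term, giving $|\mathcal{F}| \le nk \cdot 24kn = 24n^2k^2$ exactly as claimed. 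So the lever you were looking for is not a level cutoff interacting with the read-$k$ structure, but the elementary observation that wide ANDs are almost never satisfied and can be deleted wholesale.
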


The family of sets the lemma refers to is $\mathcal{F} = \bigcup_{|V_i| \leq \log(24kn)} P(V_i)$ where $V_i$ is the set of variable in clause $T_i$ and $P(X)$ is the power set of $X$. In words, these are sets that can be covered by at most one tribe that is not too large (very wide terms barely affect the function or it's Fourier structure).

\begin{lemma} \label{entropy of regular DNF is small}
Let $f$ be a regular read-k DNF with width $w$. Then $\Ent_{\infty}[f] \leq O(w + \log k)$.
\end{lemma}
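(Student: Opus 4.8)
The plan is to bound the min-entropy of a regular read-$k$ DNF by exhibiting a single Fourier coefficient whose squared weight is not too small, since $\Ent_\infty[f] = \min_S \log\frac{1}{\widehat f(S)^2} \le \log\frac{1}{\widehat f(S)^2}$ for any particular $S$. By Lemma~\ref{weight of DNF inside clause} there is a family $\mathcal F$ of at most $24n^2k^2$ sets carrying total spectral weight at least $\frac{1}{4k}$. By averaging, some $S^\ast \in \mathcal F$ has $\widehat f(S^\ast)^2 \ge \frac{1}{4k\cdot 24n^2k^2} = \frac{1}{96 n^2 k^3}$, so $\Ent_\infty[f] \le \log(96 n^2 k^3) = O(\log n + \log k)$. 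This alone gives $O(\log n)$, not $O(w)$, so the key extra ingredient is that for a \emph{regular} DNF the number of variables $n$ is controlled by the width: since every clause has size at least $C_1 w$ and there are $s = 2^{C_2 w}$ clauses, and since a read-$k$ DNF has each variable in at most $k$ clauses, we get $n \le \frac{1}{C_1 w}\cdot k \cdot s \cdot w = \frac{k s}{C_1} = \frac{k}{C_1} 2^{C_2 w}$ (each of the $s$ clauses contributes at most $w$ variable-slots, total $\le sw$ slots, each variable filling at least $C_1 w$... more simply, $n \le s w$ trivially, and $sw = w 2^{C_2 w}$). Hence $\log n \le \log(w 2^{C_2 w}) = \log w + C_2 w = O(w)$.

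So the steps, in order, are: first invoke Lemma~\ref{weight of DNF inside clause} to get the family $\mathcal F$ with $|\mathcal F| \le 24 n^2 k^2$ and weight $\ge \frac{1}{4k}$; second, apply an averaging (pigeonhole) argument to extract a single set $S^\ast$ with $\widehat f(S^\ast)^2 \ge \frac{1}{96 n^2 k^3}$; third, conclude $\Ent_\infty[f] \le \log(96 n^2 k^3) = 2\log n + 3\log k + \log 96$; and fourth, use the regularity hypothesis $s = 2^{C_2 w}$ together with the trivial bound $n \le s w = w\, 2^{C_2 w}$ to replace $\log n$ by $O(w)$, giving $\Ent_\infty[f] = O(w + \log k)$ as claimed (the constant hidden in $O(\cdot)$ depends on $C_1, C_2$).

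I expect the only genuinely delicate point to be making sure the regularity definition is being used correctly to bound $n$ in terms of $w$ — in particular that $s = 2^{C_2 w}$ (as in the second \textbf{Definition*}) rather than $2^{C_2 n}$, and that one does not need the lower bound $C_1 w \le \mathrm{size}(T_i)$ for this direction at all (it is only the upper bound $\mathrm{size}(T_i) \le w$ and the clause count that matter, via $n \le sw$). The read-$k$ structure is not even needed to bound $n$; it enters only through Lemma~\ref{weight of DNF inside clause}. Everything else is a one-line pigeonhole plus taking logarithms, so no substantial calculation is required; the $O(\log n \to O(w))$ translation is the conceptual heart of why \emph{regularity} is the right hypothesis for this statement.
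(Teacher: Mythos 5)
Your proof is correct and follows essentially the same route as the paper's: invoke Lemma~\ref{weight of DNF inside clause}, average to extract one coefficient of squared weight at least $\frac{1}{96 n^2 k^3}$, take logarithms, and use regularity to convert $\log n \le \log(sw) = C_2 w + \log w$ into $O(w)$. Your side remark that the definition should read $s = 2^{C_2 w}$ (not $2^{C_2 n}$) is also right; the paper's proof uses exactly that.
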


\begin{proof}
By Lemma \label{weight of DNF inside clause}, we know there is a set $\mathcal{F}$ of size less than $24 n^2 k^2$ s.t. $\sum_{S \in \mathcal{F}} \widehat{f}(S)^2 \geq \frac{1}{4k}$. Hence,

\[\frac{1}{4k} \leq \sum_{S \in \mathcal{F}} \widehat{f}(S)^2 \leq
24 n^2 k^2 \max_{S \in \mathcal{F}} \widehat{f}(S)^2 \]
and it follows that
\[\max_{S \in \mathcal{F}} \widehat{f}(S)^2 \geq \frac{1}{96k^3n^2} \]

\[\Ent_{\infty}[f] = \min_{S \subseteq [n]}\ \log \frac{1}{\widehat{f}^2(S)}\ \leq
\log \frac{1}{\max_{S \in \mathcal{F}} \widehat{f}(S)^2} \leq 7 + 3 \log k + 2 \log n = O(w + \log k) \]

In the last step (and only there) we use the fact that $f$ is a regular DNF, and therefore  $\log n \leq \log (s \cdot w) = C_2 w + \log w \leq O(w)$.
\end{proof}

\begin{lemma} \label{influence of regular DNF is large}
Let $f$ be a regular read-k DNF with width $w$. Then $\Inf[f] \geq \Omega(w) - \log k$.
\end{lemma}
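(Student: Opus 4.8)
The plan is to extract a clean lower bound on $\Inf[f]$ from the ``unique satisfaction'' structure of a DNF, and then to estimate the relevant probability by a second moment argument combining the regularity parameters with the read-$k$ property.

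First I would pass to a monotone $f$: negating a variable changes neither the influences $\Inf_i[f]$ nor the property of being a regular read-$k$ DNF, so we may assume $f = T_1 \vee \dots \vee T_s$ with each $T_i$ an AND of positive literals, $w_i := |T_i| \in [C_1 w, w]$ and $s = 2^{C_2 w}$. The key observation is that $\Inf[f] = \Exp_x\big[\#\{i : f(x) \neq f(x^{\oplus i})\}\big]$, and that if $x$ satisfies \emph{exactly one} clause $T_i$, then every variable of $T_i$ is pivotal: flipping any $j \in T_i$ falsifies $T_i$ and, by monotonicity, cannot satisfy any other clause, so $f$ goes from $\mathrm{TRUE}$ to $\mathrm{FALSE}$. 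Hence
\[ \Inf[f] \;\ge\; C_1 w \cdot \Pr_x[\,\text{exactly one clause of } f \text{ is satisfied}\,]. \]

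Second, I would lower bound $p_1 := \Pr_x[\text{exactly one clause satisfied}]$. Writing $\Lambda := \sum_i 2^{-w_i} = \Exp_x[\#\text{satisfied clauses}]$, inclusion--exclusion together with a union bound gives
\[ p_1 \;=\; \sum_i \Pr[\text{only } T_i \text{ satisfied}] \;\ge\; \sum_i\Big(2^{-w_i} - \sum_{i'\neq i} 2^{-|T_i\cup T_{i'}|}\Big) \;=\; \Lambda - 2\!\!\sum_{i<i'} 2^{-|T_i\cup T_{i'}|}. \]
Here ``regular'' is used in the quantitative sense that $f$ is essentially unbiased (as $\Tr$ is), with $\Pr[f=\mathrm{FALSE}]$ bounded away from $0$; since the clauses are wide this pins $\Lambda = \Theta_{C_1,C_2}(1)$, and in fact $\Lambda < 1$, so $\Lambda-\Lambda^2$ is a positive constant. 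It remains to bound the correction $\sum_{i<i'} 2^{-|T_i\cup T_{i'}|}$. I would split the pairs according to whether $T_i \cap T_{i'} = \emptyset$: the disjoint pairs contribute at most $\tfrac12\Lambda^2$; for the intersecting pairs I would use the read-$k$ property — a variable lies in at most $k$ clauses, so $T_i$ meets at most $w_i(k-1)$ other clauses and at most $w_i(k-1)/t$ of them share $\ge t$ variables with it — together with $w_i \ge C_1 w$, to bound this part by $2^{-\Omega(w)}\poly(k)$. Combining, $p_1 \ge \Lambda - \Lambda^2 - 2^{-\Omega(w)}\poly(k) = \Omega_{C_1,C_2}(1)$ whenever $w = \Omega(\log k)$; and since any read-$k$ DNF satisfies $k \le s = 2^{C_2 w}$, i.e. $\log k \le C_2 w$, the complementary range $w = O(\log k)$ is precisely where the claimed bound $\Omega(w) - \log k$ carries no content. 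Feeding $p_1 = \Omega(1)$ into the previous display yields $\Inf[f] = \Omega(w) \ge \Omega(w) - \log k$ in all cases.

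The main obstacle is the estimate of the co-satisfaction mass $\sum_{i<i'} 2^{-|T_i\cup T_{i'}|}$ for overlapping clauses: a crude union bound over intersecting pairs carries uncontrolled powers of $k$, so the crux is to exploit that wide ($\ge C_1 w$) read-$k$ clauses can overlap only lightly and rarely — a heavy overlap would force some variable into too many clauses — so that the intersecting pairs contribute only $2^{-\Omega(w)}\poly(k)$, which is negligible exactly in the regime where the lemma has content. A secondary point one must be careful about is that ``regular'' is invoked here in the quantitative sense that makes $f$ essentially unbiased, so that $\Lambda$ (and hence $\Lambda-\Lambda^2$) really is a positive constant; this is where the interplay of $C_1$, $C_2$ and the actual clause widths is used.
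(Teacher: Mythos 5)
Your route is genuinely different from the paper's. The paper never looks at pivotality probabilities: it upper-bounds every single influence by $\Inf_j[f] \le k \cdot 2^{-C_1 w + 1}$ (a variable is pivotal through a clause only when the remaining $\ge C_1 w - 1$ literals of that clause are all satisfied, and it lies in at most $k$ clauses), and then applies the edge-isoperimetric form of KKL, $\Inf[f] \ge \log \frac{1}{\mathbf{MaxInf}[f]}$, to get $\Inf[f] \ge C_1 w - 1 - \log k$ in two lines, with no appeal to monotonicity or to the bias of $f$. Your second-moment argument, if it worked, would give the stronger conclusion $\Inf[f] = \Omega(w)$ with no $-\log k$ loss, but only in the near-unbiased regime and at the cost of substantially more bookkeeping.

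As written, though, your proof has a genuine gap at the very first step: negating variables does \emph{not} let you assume the DNF is monotone. A read-$k$ DNF may contain a variable positively in one term and negated in another (e.g.\ $x_1 x_2 \vee \bar{x}_1 x_3$), and no global renaming of variables removes the mixed signs. Monotonicity is exactly what your key claim rests on --- that if $x$ satisfies only $T_i$ then every $j \in T_i$ is pivotal --- and without it, flipping $j$ can falsify $T_i$ while simultaneously satisfying another term that contains $\bar{x}_j$, so those coordinates need not be pivotal. This is repairable (subtract, for each such $j$, the probability at most $\sum_{i' \ni j} 2^{-|T_{i'}|+1} \le k\, 2^{-C_1 w + 1}$ that some other clause becomes satisfied by the flip), but it is a real hole, not a WLOG. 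Two further concerns: (i) you read ``regular'' as including near-unbiasedness, i.e.\ $\Lambda = \Theta(1)$ and $\Lambda < 1$, which the paper's definition ($C_1 w \le |T_i| \le w$, $s = 2^{C_2 w}$) does not supply; your instinct that some such hypothesis is needed is sound --- a read-once DNF with $2^{w/2}$ disjoint width-$w$ terms is regular yet has $p_1 = o(1)$ and vanishing total influence --- but you should state it as an added assumption rather than fold it into ``regular''; (ii) the bound on the co-satisfaction mass of intersecting pairs is only sketched, and its natural execution gives something like $k \cdot 2^{(C_2+1-2C_1)w}$, which is $2^{-\Omega(w)}$ only under a relation between $C_1$ and $C_2$ that you never isolate.
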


\begin{proof}
We show that for a regular read-k DNF the influence of any single variable must be small, and then by using KKL we lower bound the total influence. For any variable $x_j$, to be influential for some input, it must change the value of a clause it appears in. To influence clause $T_i$, all other variables of $T_i$ must be assigned the value True, so $x_j$ is influential through that clause for at most a $2^{-|T_i| + 1} \leq 2^{-C_1 w + 1}$ fraction of all inputs ($C_1 w$ is the lower bound for clause size). $x_j$ appears in at most $k$ clauses, so $\Inf_j[f] \leq k \cdot 2^{-C_2 w + 1}$. From the Edge Isoperimetric version of KKL (see Section 10.3 of \cite{O'DBook}), $\Inf[f] \geq \log \frac{1}{\mathbf{MaxInf}[f]} \geq C_2 w - 1 - \log k = \Omega(w) - \log k$.

\end{proof}

Combining Lemmas \ref{entropy of regular DNF is small} and \ref{influence of regular DNF is large}, and viewing $k$ as constant, we obtain FMEI for regular read-k DNFs.

\begin{corollary}
Let $f$ be a regular read-k DNF, then $\Ent_{\infty}[f] = O(\Inf[f])$.
\end{corollary}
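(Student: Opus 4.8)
The plan is to sandwich both $\Ent_{\infty}[f]$ and $\Inf[f]$ between quantities expressed in terms of the clause width $w$ of the regular DNF, and then chain the two estimates. Concretely, for a $(C_1,C_2)$-regular read-$k$ DNF $f$ of width $w$, I would invoke Lemma~\ref{entropy of regular DNF is small}, which gives $\Ent_{\infty}[f] \le O(w + \log k)$, together with Lemma~\ref{influence of regular DNF is large}, which gives $\Inf[f] \ge \Omega(w) - \log k$. With both inputs available, only elementary arithmetic remains, and throughout I treat $k$ and the regularity constants $C_1, C_2$ as fixed (so the implied constant in $\Ent_{\infty}[f] = O(\Inf[f])$ is allowed to depend on them).

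Next I would split on the width $w$. In the main regime, where $w$ is larger than some threshold $w_0 = w_0(k, C_1, C_2)$, the $-\log k$ term is absorbed into $\Omega(w)$, so $\Inf[f] \ge c_1 w$ for an absolute $c_1 > 0$; and since $w \ge 1$ we also get $\Ent_{\infty}[f] \le c_2 (1 + \log k)\, w$ from the other lemma. Dividing, $\Ent_{\infty}[f] \le \tfrac{c_2(1+\log k)}{c_1} \cdot \Inf[f]$, as desired. In the complementary regime $w \le w_0$, regularity forces $n \le sw = 2^{C_2 w} w \le 2^{C_2 w_0} w_0 = O(1)$, hence $\Ent_{\infty}[f] \le n = O(1)$ directly from Parseval; moreover either $f$ is constant, in which case $\Ent_{\infty}[f] = 0$ and the bound is trivial, or $f$ is non-constant, in which case some variable has a sensitive edge and $\Inf[f] \ge 2^{-(n-1)} = \Omega(1)$. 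Either way $\Ent_{\infty}[f] = O(\Inf[f])$.

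Once the two supporting lemmas are in hand there is no genuine obstacle remaining --- the content of the corollary lives entirely inside them (Lemma~\ref{entropy of regular DNF is small} being the Blum et al.\ weight-concentration bound of Lemma~\ref{weight of DNF inside clause} turned into a min-entropy bound by averaging, and Lemma~\ref{influence of regular DNF is large} being a per-variable influence bound of $k \cdot 2^{-\Omega(w)}$ fed into the edge-isoperimetric form of KKL). If anything requires care when assembling the corollary, it is checking that the $\log k$ slack does not swamp the $\Omega(w)$ lower bound on $\Inf[f]$: the interesting regime is $w$ at least a constant multiple of $\log k$, and this is exactly where the ``regular'' hypothesis that the number of clauses $s = 2^{C_2 w}$ is exponential in $w$ does its work --- it forces $\log n = O(w)$ so that the min-entropy upper bound stays linear in $w$, and, combined with the width lower bound $\mathrm{size}(T_i) \ge C_1 w$ and read-$k$-ness, it makes the KKL lower bound $\Inf[f] = \Omega(w) - \log k$ genuinely positive and linear in $w$.
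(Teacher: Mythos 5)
Your proposal is correct and follows essentially the same route as the paper, which simply combines Lemma~\ref{entropy of regular DNF is small} and Lemma~\ref{influence of regular DNF is large} while treating $k$ (and the regularity constants) as fixed. The only difference is that you spell out the bounded-width regime explicitly, a detail the paper leaves implicit.
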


It is very probable that with some finer arguments, this can be generalized to read-k DNF for constant $k$.

\subsection{On the Covariance of Read-k Decision Trees}

In \cite{WWW14}, the authors define the following covariance measure on a decision tree $T$ computing $f$:
\begin{equation}
\label{cov by induction}
\Cov[T] = \Cov(g,h) + \frac{1}{2} ( \Cov[T_0] + \Cov[T_1] )
\end{equation} 
where $T_0$ and $T_1$ are the left and right sub-trees of $T$, with the corresponding functions $g,h$ on the variables $x_1, x_2, ... x_{n-1}$, assuming w.l.o.g that at the root the variable that is queried is $x_n$.
They show that for a function with a read-k decision tree, it holds that $\Cov[T] \leq (k-1) \cdot \Var(f)$ which then implies $\Cov[T] \leq (k-1) \cdot \Inf[f]$. Using this fact with a protocol they described, they prove FEI for read-k decision trees with $C=\Theta(k)$.
Although for the matter of proving FEI we view $k$ is a constant, it is interesting to find the real bound relating the $\Cov[T]$ and $\Inf[f]$ for a read-k decision tree, and it is not clear that $k-1$ is the correct coefficient. The authors present an example where the coefficient is only $\log k$, and explicitly conjecture that this is tight - that for any $f$ that is computable by a read-k decision tree, $\Cov[T] \leq \log k \cdot \Var[f]$. We provide a step in this direction showing that $\Cov[T] \leq \sqrt{k} \cdot \Inf[f]$. This will in turn improve the FEI coefficient for read-k decision tree to $C=\Theta(\sqrt{k})$.

\begin{theorem}
Let $ f $ be computed by a read-k decision tree. Then $\Cov[T] \leq 2 \cdot \sqrt{k} \cdot \Inf[f]$.
\end{theorem}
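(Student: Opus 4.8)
The plan is to bound $\Cov[T]$ by first controlling each single-node contribution $\Cov[v] \cdot 2^{-d(v)}$ and then summing carefully over the tree, exploiting the read-$k$ property to limit how many nodes can "see" a given variable. The key elementary fact is that for any internal node $v$ with subtree functions $g_v, h_v$, we have $|\Cov(g_v, h_v)| \leq \sqrt{\Var(g_v) \cdot \Var(h_v)}$ by Cauchy--Schwarz, and moreover $\Var(g_v), \Var(h_v) \leq \Inf[g_v], \Inf[h_v]$ by the edge isoperimetric inequality. So the first step is to write $\Cov[T] \leq \sum_{v \in \mathrm{inner}(T)} 2^{-d(v)} \sqrt{\Inf[g_v]\Inf[h_v]}$ and aim to relate the right-hand side to $\Inf[f]$.

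Next I would recall the standard decomposition of the influence of a decision-tree function in terms of its nodes: writing $\beta(v) = 2^{-d(v)}$ for the fraction of inputs reaching node $v$, one has $\Inf[f] = \sum_{v} \beta(v) \cdot \Pr_{x \text{ reaches } v}[\text{the variable queried at } v \text{ is pivotal}]$, and more usefully, the influence of $f$ restricted to the subtree rooted at $v$ satisfies a subadditivity/telescoping relation: $\Inf[f] \geq \sum_{v \in \mathrm{inner}(T)} \beta(v) \cdot I_v$ where $I_v$ is an appropriate "local influence" quantity at $v$ (e.g.\ $\tfrac12(\Inf[g_v] + \Inf[h_v])$ minus the influence of the variable queried at $v$, summed with the right weights — one needs the identity $\Inf[f] = \Inf_n[f] + \tfrac12(\Inf[g] + \Inf[h])$ applied recursively). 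The point is that $\sum_v \beta(v)\,(\Inf[g_v]+\Inf[h_v])/2$ is \emph{not} bounded by $\Inf[f]$ in general (that is exactly where the $k$ comes in), but the read-$k$ condition says each variable is queried on at most $k$ nodes, which should give $\sum_v \beta(v)(\Inf[g_v]+\Inf[h_v]) \leq O(k)\cdot \Inf[f]$ or similar. Combining this with the Cauchy--Schwarz bound and one more application of Cauchy--Schwarz (this time over the nodes, pairing $\sqrt{\beta(v)\Inf[g_v]}$ against $\sqrt{\beta(v)\Inf[h_v]}$) should convert an $O(k)$ bound on a sum of products into an $O(\sqrt{k})$ bound on a sum of geometric means: schematically,
\[
\sum_v \beta(v)\sqrt{\Inf[g_v]\Inf[h_v]} \leq \sqrt{\Big(\sum_v \beta(v)\Inf[g_v]\Big)\Big(\sum_v \beta(v)\Inf[h_v]\Big)} \leq \sqrt{O(k)\Inf[f] \cdot O(k)\Inf[f]},
\]
which is $O(k)\Inf[f]$, so this naive pairing is not enough by itself; the improvement to $\sqrt{k}$ must instead come from splitting the sum by level or by using that only the \emph{repeated} reads (at most $k-1$ extra per variable) contribute the excess, and bounding those via $\sqrt{k}$ rather than $k$.

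The main obstacle I expect is precisely making the $\sqrt{k}$ gain honest: one needs to show that the "excess covariance" coming from variables read more than once can be charged to $\sqrt{k}\cdot\Inf[f]$, presumably by a more delicate argument that tracks, for each repeated variable $x_i$, the at-most-$k$ nodes querying it and bounds the total weighted covariance they generate by something like $\sqrt{k}$ times the contribution $x_i$ makes to $\Inf[f]$ — perhaps via a Cauchy--Schwarz of the form $\sum_{j=1}^{k} a_j \leq \sqrt{k}\sqrt{\sum_j a_j^2}$ together with the fact that the individual covariances at those nodes are themselves bounded by products of variances that telescope. An alternative route, likely cleaner, is to prove the stronger statement $\Cov[T] \leq O(\sqrt{k})\cdot \Var[f]$ directly by induction on the tree, strengthening the inductive hypothesis to carry along a term like $\sqrt{k}\cdot\Var$ plus a correction, and using $\Var[f] \leq \Inf[f]$ only at the very end; the inductive step would use $\Cov(g,h) \leq \sqrt{\Var(g)\Var(h)} \leq \tfrac12(\Var(g)+\Var(h))$ and the recursion $\Var[f] \geq \tfrac12(\Var(g)+\Var(h)) - (\text{something small})$, with the read-$k$ bookkeeping entering through how many times the induction can "reuse" a variable. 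I would try the direct inductive approach first, since it localizes where the $\sqrt{k}$ must appear and avoids having to set up the full node-by-node influence decomposition globally.
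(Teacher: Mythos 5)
There is a genuine gap: your proposal is an exploration of candidate strategies rather than a proof, and you say so yourself (``this naive pairing is not enough by itself,'' ``the main obstacle I expect is precisely making the $\sqrt{k}$ gain honest''). Neither of your two routes closes that obstacle. In the first route, the intermediate claim $\sum_v \beta(v)(\Inf[g_v]+\Inf[h_v]) \leq O(k)\cdot\Inf[f]$ is unjustified and is not what the read-$k$ condition naturally gives (unwinding the recursion $\Inf[f_v] = \Inf_{x(v)}[f_v] + \tfrac12(\Inf[g_v]+\Inf[h_v])$ charges each node once per ancestor, so the natural bound on $\sum_v \beta(v)\Inf[f_v]$ involves the depth, not $k$); and even granting it, you correctly observe the outer Cauchy--Schwarz only returns $O(k)$. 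In the second route, the proposed strengthening ``$\sqrt{k}\cdot\Var$ plus a correction'' is too coarse to induct on: a single global multiple of $\Var$ cannot record how the remaining read-budget of each variable shrinks as you descend into subtrees, which is exactly what the induction must track.

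The paper's proof is indeed a strengthened structural induction, but the invariant is per Fourier coefficient: $\Cov[T] \leq \sum_{\emptyset\neq S} sq_T(S)\,\widehat{f}(S)^2$ with $sq_T(S) = 2\sum_{i\in S}\sqrt{a_i(T)}$, where $a_i(T)$ counts the occurrences of $x_i$ in $T$; since $sq_T(S)\le 2\sqrt{k}\,|S|$ this yields the theorem. The idea you are missing is how to handle the hard case of the inductive step, where every variable of $S$ appears almost entirely in one subtree, say $T_1$, yet nothing can be said about the ratio $\widehat{g}(S)^2/\widehat{h}(S)^2$. There the symmetric bound $2\widehat{g}(S)\widehat{h}(S)\le \widehat{g}(S)^2+\widehat{h}(S)^2$ (which you propose via $\Cov(g,h)\le\tfrac12(\Var(g)+\Var(h))$) is exactly the wasteful step: one must instead use the asymmetric split
\begin{equation*}
2\widehat{g}(S)\widehat{h}(S) \leq m^2\,\widehat{g}(S)^2 + \tfrac{1}{m^2}\,\widehat{h}(S)^2,
\qquad m^2 = \sum_{i\in S}\sqrt{a_i(T)},
\end{equation*}
and observe that the slack available on the $T_0$ side, $sq_T(S)-sq_{T_0}(S)\gtrsim\sum_{i\in S}\sqrt{a_i(T)}$, and on the $T_1$ side, $sq_T(S)-sq_{T_1}(S)\gtrsim\sum_{i\in S}1/\sqrt{a_i(T)}$, have product at least $1$ by the AM--HM inequality, so a valid $m^2$ exists. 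This multiplicative slack structure is precisely why the weight $\sqrt{a_i(T)}$ works where $\log a_i(T)$ (the conjecturally optimal weight) does not; without identifying this mechanism the $\sqrt{k}$ cannot be made honest.
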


Our proof will follow the lines of the original proof of \cite{WWW14}, which is a structural induction. To show that $\Cov[T] \leq (k-1) \cdot \Var(f)$, something stronger is actually shown:
\[\Cov[T] \leq \sum_{\emptyset \neq S \subseteq [n]}{(m_T(S) - 1) \cdot \widehat{f}(S)^2} \]
where $m_T(S) = \max_{i \in S} (a_i(T))$, and $a_i(T)$ is the number of appearances of $i$ in $T$. It is obvious that $m_T(S) \leq k$, and their theorem follows.

To show the $\Cov[T] \leq \log k \cdot \Inf[f]$, it would make sense to refine $m_T(S)$ into a more suitable measure. For example, to define:
\[l_T(S) = \sum_{i \in S} \log(a_i(T)).\]
A technical note - this is not defined if any variable in $S$ has $a_i(T) = 0$, but that means it doesn't appear in $T$ and therefore $\widehat{f}(S)^2=0$, so this will not be a problem.
As $l_T(S) \leq |S| \cdot \log k$, it would be enough to show that 
\[\Cov[T] \leq \sum_{\emptyset \neq S \subseteq [n]}{|S| \log k \cdot \widehat{f}(S)^2}  = \log k \cdot \Inf[f]. \]
Sadly, the proof did not follow through with $l_T(S)$, so we had to compromise and use the following definition:
\[ sq_T(S) = 2 \cdot \sum_{i \in S} { \sqrt{a_i(T)}} \leq 2 \cdot \sqrt{k} \cdot |S| . \]
The 2 factor is needed for technical reasons. Our proof is by structural induction on $T$ that
\begin{equation}
\label{induction target}
\Cov[T] \leq \sum_{\emptyset \neq S \subseteq [n]}{sq_T(S) \cdot \widehat{f}(S)^2}
\end{equation}
and then it will follow that $\Cov[T] \leq 2 \sqrt{k} \cdot I[f]$.

\begin{proof}

\textbf{Base case:} $f$ is a tree with one variable, and the left and right sub-trees are constant. Therefore the left-hand side of Inequality \ref{induction target} is $0$ and the right-hand side is always non-negative.

\textbf{Inductive step:} Suppose w.l.o.g the root variable of $T$ is $x_n$. We use the recursive definition of $\Cov[T]$ given by Equation \ref{cov by induction}:
\[ 2 \cdot \Cov[T] = 2 \cdot \Cov(g,h) + (\Cov[T_0] + \Cov[T_1]) \]

We focus on the first term. Let $J$ be the set of coordinates which appear in both sub-trees $T_0, T_1$. Because $x_n$ is the root variable, it doesn't appear in either $T_0, T_1$, so $J$ is a subset of $[n - 1]$.

\[ 2 \cdot \Cov(g,h) = 2 \sum_{\emptyset \neq S \subseteq [n]}{\widehat{g}(S) \widehat{h}(S)} \
= 2 \sum_{\emptyset \neq S \subseteq J}{\widehat{g}(S) \widehat{h}(S)}. \]
This is exactly the same bound as in \cite{WWW14}, but we stop before their last step (they bounded $2 \widehat{g}(S) \widehat{h}(S)$ by $\widehat{g}(S)^2 + \widehat{h}(S)^2$), which is potentially wasteful.

To bound the second term, we apply the inductive hypothesis:
\[ \Cov[T_0] + \Cov[T_1] \leq \sum_{\emptyset \neq S \subseteq [n-1]}{sq_{T_0}(S) \cdot \widehat{g}(S)^2 + sq_{T_1}(S) \cdot \widehat{h}(S)^2}. \]

For $S \nsubseteq J$, we do not get anything added from the first term, so it is enough to notice that $sq_{T_i}(S) \leq sq_{T}(S)$ and we get 
\[sq_{T_0}(S) \cdot \widehat{g}(S)^2 + sq_{T_1}(S) \cdot \widehat{h}(S)^2 \leq sq_{T}(S) \cdot (\widehat{g}(S)^2 + \widehat{h}(S)^2).\]

For $S \subseteq J$, we need to add the $2 \widehat{g}(S) \widehat{h}(S)$ term.
So we would like to show that 
\begin{equation}
\label{hard inequality}
2 \widehat{g}(S) \widehat{h}(S) + sq_{T_0}(S) \cdot \widehat{g}(S)^2 + sq_{T_1}(S) \cdot \widehat{h}(S)^2 \leq sq_{T}(S) \cdot (\widehat{g}(S)^2 + \widehat{h}(S)^2).
\end{equation}

We pause our proof for a short intuitive discussion regarding the last inequality. If we could show that $sq_{T_i}(S) \leq sq_{T}(S) - 1$ for both $T_0, T_1$ that would be enough, by upper bounding $2\widehat{g}(S) \widehat{h}(S) \leq \widehat{g}(S)^2 + \widehat{h}(S)^2$ as done by \cite{WWW14}. Note that if the appearances of variables in $S$ are divided ``more or less equally'' between $T_0$ and $T_1$, then this can be done easily - in this case, the argument even carries over even for $l_T(S)$. For instance, if one variable from $S$ is split equally amongst the two sub-trees, we have $l_{T_i}(S) \leq l_{T}(S) - 1$. Another example, is when two variables from $S$ ``disagree'' on the sub-tree where they appear more often, and also in this case $l_{T_i}(S) \leq l_{T}(S) - 1$ for both $T_0, T_1$. 

The challenging case is when we have all variables in $S$ appear almost exclusively in one sub-tree. The most extreme instance of this case is when we have $k$ appearances of every variable from $S$ in $T$, only one appearance for each variable in $T_0$, and $k-1$ appearances in $T_1$. Note that even in this extreme case we cannot deduce anything about the ratio of $\widehat{g}(S)^2$ and $\widehat{h}(S)^2$ - even though $h$ has many variables of $S$, they may hide deep inside the tree while in $g$ they can form a path close to the root, and contribute significantly to the Fourier weight of $S$. This is where switching to $sq_{T}(S)$ helps, and we will prove this soon.

Assuming for a moment Inequality \ref{hard inequality}, we finish the proof exactly as done in \cite{WWW14}:

\begin{align*}
2 \cdot \Cov(g,h) + \Cov[T_0] + \Cov[T_1] \leq
& \sum_{\emptyset \neq S \subseteq [n-1]}{sq_{T}(S) \cdot ( \widehat{g}(S)^2 + \widehat{h}(S)^2)} \\
& \leq 2 \cdot \sum_{\emptyset \neq S \subseteq [n-1]}{sq_{T}(S) \cdot (\widehat{f}(S)^2 + \widehat{f}(S \cup \lbrace n \rbrace)^2)} \\
& \leq 2 \cdot \sum_{\emptyset \neq S \subseteq [n]}{sq_{T}(S) \cdot \widehat{f}(S)^2}
\end{align*}
where the last step is due to the fact $sq_{T}(S) \leq sq_{T}(S \cup \lbrace n \rbrace)$, and the one before is due to Proposition 2.3 of \cite{WWW14} stating that for a decision tree of a function $f$ with root $x_n$ and sub-trees computing $g,h$ the following holds:  $\widehat{g}(S)^2 + \widehat{h}(S)^2 = 2 (\widehat{f}(S)^2 + \widehat{f}(S \cup \lbrace n \rbrace)^2)$.

So all that is left to prove is Inequality \ref{hard inequality} for $S \subseteq J$. By simple calculation, $\sqrt{l} - \sqrt{l-1} > \frac{1}{2\sqrt{l}}$, and in general $\sqrt{l} - \sqrt{l-c} > \frac{c}{2\sqrt{l}}$, and for $l \leq k$, then obviously $\sqrt{l} - \sqrt{l-c} > \frac{c}{2\sqrt{k}}$.

If $g$ has a variable $i$ with
$a_i(T_0) \leq a_i(T) - 2\sqrt{a_i(T)}$, then $\sqrt{a_i(T)} - \sqrt{a_i(T_0)} \geq 1$, and hence by definition of $sq_{T}$, we get $sq_{T_0}(S) \leq sq_{T}(S) - 1$. If $h$ has such variable too, we can use the bound $2\widehat{g}(S) \widehat{h}(S) \leq \widehat{g}(S)^2 + \widehat{h}(S)^2$ and we are done.

We are left with the case where (w.l.o.g) all variables in $S$ ``tend to $T_1$'', but also appear at least once in $T_0$. Formally, for any $i \in S$:
\begin{itemize}
\item $a_i(T_0) \leq 2\sqrt{a_i(T)}$
\item $a_i(T_1) \leq a_i(T) - 1$
\end{itemize}

We write Inequality \ref{hard inequality} (which we need to prove) a bit differently:
\[ 2 \widehat{g}(S) \widehat{h}(S) 
\leq (sq_{T}(S) - sq_{T_0}(S)) \cdot \widehat{g}(S)^2 + (sq_{T}(S) - sq_{T_0}(S)) \cdot \widehat{h}(S)^2. \]

\begin{itemize}
\item $sq_{T}(S) - sq_{T_0}(S) \geq 2 \cdot \sum_{i \in S}{ \left( \sqrt{a_i(T)} - \sqrt{2}\sqrt[4]{a_i(T)} \right)  } \geq 
2 \cdot \frac{1}{2} \cdot \sum_{i \in S}{ \sqrt{a_i(T)} } = \sum_{i \in S}{ \sqrt{a_i(T)} }$.
The first inequality is because in this case, $a_i(T_0) \leq 2\sqrt{a_i(T)}$. The second inequality is correct if all $a_i(T)$ are larger than 16 - this is merely a technical detail, that can be fixed for smaller $a_i(T)$, for example, by defining $a_i(T)$ as the number of appearances of $i$ in $T$ plus $16$, which does not affect our bound asymptotically - we ignore this for the simplicity of this proof.
\item $sq_{T}(S) - sq_{T_1}(S) \geq 2 \cdot \sum_{i \in S}{ \sqrt{a_i(T)} - \sqrt{a_i(T) - 1} } \geq 
2 \cdot \frac{1}{2} \cdot \sum_{i \in S}{ \frac{1}{\sqrt{a_i(T)}}} = \sum_{i \in S}{ \frac{1}{\sqrt{a_i(T)}}}$.
The first inequality comes from the fact $a_i(T) - 1 \geq a_i(T_1)$.
\end{itemize}

Now we split the weight of $2\widehat{g}(S) \widehat{h}(S)$, but not necessarily to $\widehat{g}(S)^2 + \widehat{h}(S)^2$. For any non-zero $m \in \mathbb{R}$, we have $2\widehat{g}(S) \widehat{h}(S) \leq m^2 \widehat{g}(S)^2 + \frac{1}{m^2} \widehat{h}(S)^2$. This is due to $0 \leq (m \widehat{g}(S) - \frac{1}{m} \widehat{f}(S)) ^2$. We want to pick $m^2$ such that 
\[ 2\widehat{g}(S) \widehat{h}(S) \leq m^2 \widehat{g}(S)^2 + \frac{1}{m^2} \widehat{h}(S)^2 \leq
\sum_{i \in S}{ \sqrt{a_i(T)} } \cdot \widehat{g}(S)^2 + \sum_{i \in S}{ \frac{1}{\sqrt{a_i(T)}}} \cdot \widehat{h}(S)^2 .\]
As we cannot bound the ratio of $\widehat{g}(S)^2, \widehat{h}(S)^2$, we satisfy the two separate inequalities:
\[ \begin{cases} m^2 \leq \sum_{i \in S}{ \sqrt{a_i(T)}} \\
\frac{1}{m^2} \leq \sum_{i \in S}{ \frac{1}{\sqrt{a_i(T)}}}  \end{cases} \]
Which is equivalent to:
\[\frac{1} {\sum_{i \in S}{ \frac{1}{\sqrt{a_i(T)}}}} \leq m^2 \leq \sum_{i \in S}{ \sqrt{a_i(T)}} \]
Picking $m^2 = \sum_{i \in S}{ \sqrt{a_i(T)}} $, the last inequality is held due to the ``arithmetic mean is larger than harmonic mean'' theorem.

This covers all the cases of the induction step, therefore we have 
\begin{equation}
\label{induction target}
\Cov[T] \leq \sum_{\emptyset \neq S \subseteq [n]}{sq_T(S) \cdot \widehat{f}(S)^2}
\end{equation}
and then $\Cov[T] \leq 2 \sqrt{k} \cdot I[f]$, concluding the proof.

\end{proof}

\section{Improved Bound on the L1 Norm of Decision Trees}
It is known that for any Boolean function $f$ that is computed by a decision tree $T$,  $ \|\widehat{f}\|_1 \leq \text{size}(T)$ (see Proposition 3.16 in \cite{O'DBook}), where $\text{size}(T)$ is the number of leaves of the tree. We provide the following stronger bound. 
\begin{proposition}
For a Boolean function $f$ that is computed by a decision tree $T$:
\[ \|\widehat{f}\|_1 \leq \text{boundary\_size}(T) - \sum_{v \in \text{inner}(T)}{| \Cov(g_v, h_v) | } \]
where $\text{boundary\_size}(T)$ is the number of nodes that have at least one child that is a leaf. The sum of $\Cov(g_v, h_v)$ is over all inner nodes of T, i.e. nodes that have two non-leaf children.
\end{proposition}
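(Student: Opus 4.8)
The plan is to revisit the textbook proof of $\|\widehat f\|_1 \le \text{size}(T)$ and to avoid discarding a term at every node. Write $x_n$ for the variable queried at the root of $T$, and let $g,h\colon\{-1,1\}^{n-1}\to\{-1,1\}$ be the functions computed by the two sub-trees $T_0,T_1$, so that $f = \tfrac12(1+x_n)\,g + \tfrac12(1-x_n)\,h = \tfrac{g+h}{2} + x_n\cdot\tfrac{g-h}{2}$. Since $g$ and $h$ do not involve $x_n$, for every $S\subseteq[n-1]$ one has $\widehat f(S)=\tfrac12(\widehat g(S)+\widehat h(S))$ and $\widehat f(S\cup\{n\})=\tfrac12(\widehat g(S)-\widehat h(S))$. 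Summing absolute values over all subsets of $[n]$ and using the elementary identity $\tfrac12(|a+b|+|a-b|)=\max(|a|,|b|)=|a|+|b|-\min(|a|,|b|)$, I get the exact formula
\begin{align*}
\|\widehat f\|_1 &= \sum_{S\subseteq[n-1]}\max\bigl(|\widehat g(S)|,|\widehat h(S)|\bigr) \\
&= \|\widehat g\|_1 + \|\widehat h\|_1 - \sum_{S\subseteq[n-1]}\min\bigl(|\widehat g(S)|,|\widehat h(S)|\bigr).
\end{align*}
The classical argument simply throws away the last sum; instead I would bound it from below.

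The key observation is that, because $g$ and $h$ are Boolean, every Fourier coefficient satisfies $|\widehat g(S)|,|\widehat h(S)|\le 1$, and hence $|\widehat g(S)|\,|\widehat h(S)|\le\min(|\widehat g(S)|,|\widehat h(S)|)$ for each $S$. Combining this with the triangle inequality and the identity $\Cov(g,h)=\Exp[gh]-\Exp[g]\Exp[h]=\sum_{S\ne\emptyset}\widehat g(S)\widehat h(S)$ gives
\begin{align*}
\sum_{S\subseteq[n-1]}\min\bigl(|\widehat g(S)|,|\widehat h(S)|\bigr)
&\ge \sum_{S\ne\emptyset}|\widehat g(S)|\,|\widehat h(S)| \\
&\ge \Bigl|\sum_{S\ne\emptyset}\widehat g(S)\widehat h(S)\Bigr| = |\Cov(g,h)|,
\end{align*}
and therefore the local inequality $\|\widehat f\|_1 \le \|\widehat g\|_1 + \|\widehat h\|_1 - |\Cov(g,h)|$, which is the engine of the whole proof.

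It then remains to iterate this and keep track of the combinatorics, which I would do by induction on the number of internal nodes of $T$. In the base case $T$ has a single internal node (two leaves below the root), $f$ is a $\pm1$-valued function of one variable, so $\|\widehat f\|_1=1=\text{boundary\_size}(T)$ and $\text{inner}(T)=\emptyset$. For the inductive step I distinguish cases according to how many of $T_0,T_1$ are single leaves. If both are non-leaves, the root lies in $\text{inner}(T)$, while $\text{boundary\_size}(T)=\text{boundary\_size}(T_0)+\text{boundary\_size}(T_1)$ and $\text{inner}(T)$ is the disjoint union of $\text{inner}(T_0)$, $\text{inner}(T_1)$ and $\{\text{root}\}$; plugging the inductive hypotheses for $g$ and $h$ into the local inequality yields precisely the claimed bound. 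If exactly one child, say $T_0$, is a leaf, then $g$ is constant, so $\Cov(g,h)=0$ and $\|\widehat g\|_1=1$, the root becomes a boundary node with $\text{boundary\_size}(T)=1+\text{boundary\_size}(T_1)$ and $\text{inner}(T)=\text{inner}(T_1)$, and the extra $+1$ in the boundary count is exactly what pays for $\|\widehat g\|_1=1$. The case where both children are leaves is the base case.

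The argument is largely routine once the right intermediate quantity has been isolated; the single new idea is that the wasteful step of the usual proof — bounding $\max(|\widehat g(S)|,|\widehat h(S)|)$ by $|\widehat g(S)|+|\widehat h(S)|$, i.e.\ discarding $\sum_S\min(\cdot,\cdot)$ — can be recouped as $|\Cov(g,h)|$ via Booleanity. The spot that needs a little care is the degenerate case where a sub-tree collapses to a leaf, where one must check that the forced $+1$ in $\text{boundary\_size}$ matches the $L_1$-norm $1$ of a constant function (and that the statement is read for trees with at least one internal node). As a sanity check I would also verify tightness for the parity function $\chi_{[n]}$ with its natural tree: every internal node $v$ at depth at most $n-2$ has $h_v=-g_v$ with $g_v$ non-constant, so $|\Cov(g_v,h_v)|=\Var(g_v)=1$; the $2^{n-1}-1$ such nodes are exactly $\text{inner}(T)$ and $\text{boundary\_size}(T)=2^{n-1}$, so the right-hand side equals $2^{n-1}-(2^{n-1}-1)=1=\|\widehat{\chi_{[n]}}\|_1$, whereas $\text{size}(T)=2^n$.
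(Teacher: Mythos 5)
Your proposal is correct and follows essentially the same route as the paper's proof: the same decomposition $f=\tfrac{1+x_n}{2}g+\tfrac{1-x_n}{2}h$, the same identity $|a+b|+|a-b|=2\max(|a|,|b|)$ leading to the exact recursion for $\|\widehat f\|_1$, the same lower bound $\sum_S\min(|\widehat g(S)|,|\widehat h(S)|)\ge|\Cov(g,h)|$ via Booleanity, and the same three-case structural induction. No gaps; your explicit justification of $|\widehat g(S)||\widehat h(S)|\le\min(|\widehat g(S)|,|\widehat h(S)|)$ and the tightness check on parity are nice touches the paper states more tersely.
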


This is stronger than the original bound in two senses: the first, $\text{boundary\_size}(T) \leq \text{size}(T)$, and can be as small as $\frac{\text{size}(T)}{2}$ for a ``full'' binary tree. The second is that we subtract a non-negative term that can be significant. Examine the standard and the new bounds for the parity function on $n$ variables, $f = \chi_{[n]}$. Obviously, $\|\widehat{f}\|_1 = 1$. It is clear that for the natural decision tree computing $f$, $\text{size}(T) = 2^{n+1}$ is a terrible bound. On the other hand $\text{boundary\_size}(T) = 2^{n}$, the number of inner nodes is $2^{n} - 1$ and at any inner node $| \Cov(g_v, h_v) | = 1$. Therefore, the old inequality gives a bound of $2^{n+1}$ while the new one gives an exact bound of $2^{n} - (2^{n} - 1) = 1$. As another example, the new bound is also exact for the Address function - all covariances are 0, but the boundary size is a tight bound. This is also a much better bound for $\text{OR}_n$ and $\text{AND}_n$. 

\begin{proof}
Let us examine a function $f$ computed by a tree with root $x_n$, with left function $g$ and right function $h$. We can write
$f(x) = \frac{1+x_n}{2} \cdot g(x) + \frac{1-x_n}{2} \cdot h(x)$. It is easy to see that for any $S \subseteq [n-1]$: 
$\widehat{f}(S) = \frac{1}{2}(\widehat{g}(S) + \widehat{h}(S))$, and 
$\widehat{f}(S \cup \{n\}) = \frac{1}{2}(\widehat{g}(S) - \widehat{h}(S))$.
We also note that for any two numbers, $|a+b| + |a-b| = 2 \max(|a|,|b|)$.
So we get:
\[| \widehat{f}(S) | + | \widehat{f}(S \cup \{n\}) | = \frac{1}{2} (| \widehat{g}(S) + \widehat{h}(S) | + | \widehat{g}(S) - \widehat{h}(S) |) 
= \frac{1}{2} \cdot 2 \cdot \max(| \widehat{g}(S) |, | \widehat{h}(S) | )  \]
Summing over all $S \subseteq [n-1]$, we get:
\[\sum_{S \subseteq [n]}{|\widehat{f}(S)|} = \sum_{S \subseteq [n-1]}{|\widehat{g}(S)|} + \sum_{S \subseteq [n-1]}{|\widehat{h}(S)|}
- \sum_{S \subseteq [n-1]}{\min(|\widehat{g}(S)|,|\widehat{h}(S)|) }\]
In other words,
\[ \|\widehat{f}\|_1 = \|\widehat{g}\|_1 + \|\widehat{h}\|_1 - \sum_{S \subseteq [n]}{\min(|\widehat{g}(S)|,|\widehat{h}(S)|)} \] 
We can bound the last term as follows:
\[\sum_{S \subseteq [n]}{\min(|\widehat{g}(S)|,|\widehat{h}(S)|)} \geq \sum_{S \subseteq [n]}{|\widehat{g}(S)\widehat{h}(S)|}
\geq \sum_{\emptyset \neq S \subseteq [n]}{|\widehat{g}(S)\widehat{h}(S)|}
\geq |\sum_{\emptyset \neq S \subseteq [n]}{\widehat{g}(S)\widehat{h}(S)}| = | \Cov(g,h) | \]
So in conclusion, we get:
\begin{equation}
\label{L1 induction}
\widehat{f}\|_1 \leq \|\widehat{g}\|_1 + \|\widehat{h}\|_1 - | \Cov(g,h) |
\end{equation}

We can now prove $\|\widehat{f}\|_1 \leq \text{boundary\_size}(T) - \sum_{v \in \text{inner}(T)}{| \Cov(g_v, h_v) |}$ using structural induction on $T$:
\begin{itemize}
\item The \textbf{base case} is where we have a function with a root and two leaves. $\|\widehat{f}\|_1 = 1$, $\text{boundary\_size}(T) = 1$, and there are no inner nodes so the claim holds.

\item The \textbf{semi induction step} is where we have a function with a root, one leaf child and one non-leaf sub-tree $T'$ computing a function $h$. In this case, $\|\widehat{f}\|_1 = 1 + \|\widehat{h}\|_1$, $\text{boundary\_size}(T) = 1 + \text{boundary\_size}(T')$, and $\sum_{v \in \text{inner}(T)}{| \Cov(g_v, h_v) |} = \sum_{v \in \text{inner}(T')}{| \Cov(g_v, h_v) |}$, as the inner nodes in $T$ and $T'$ are the same. Using the inductive hypothesis on $T'$ is enough to finish this case.

\item The \textbf{induction step} is where we have a function with a root and two non-leaf sub-trees $T', T''$ computing $g,h$. We use the inductive hypothesis for the two sub-trees and inequality \ref{L1 induction}:

\begin{align*}
\|\widehat{f}\|_1 &\leq \|\widehat{g}\|_1 + \|\widehat{h}\|_1 - | \Cov(g,h) |  \\
& \leq \text{boundary\_size}(T') - \sum_{v \in \text{inner}(T')}{| \Cov(g_v, h_v) |} \\
& + \text{boundary\_size}(T'') - \sum_{v \in \text{inner}(T'')}{| \Cov(g_v, h_v) |} - | \Cov(g,h) | \\
& = \text{boundary\_size}(T) - \sum_{v \in \text{inner}(T)}{| \Cov(g_v, h_v) |} 
\end{align*}

\end{itemize}
These three are the only possible cases so we are done.

\end{proof}

\addcontentsline{toc}{section}{\protect\numberline{}Acknowledgements}%
\section*{Acknowledgements}
The author wishes to thank Amir Shpilka for advising him throughout this research, and also Dor Minzer and Ben Lee Volk for fruitful discussions and suggestions.

\addcontentsline{toc}{section}{\protect\numberline{}References}%
\bibliographystyle{alpha}
\bibliography{bibliography}

\end{document}